\newcommand{\cut}[1]{{}}
\providecommand{\abs}[1]{\lvert#1\rvert}
\newcommand{\nb}{\mathit{nb}}
\newcommand{\classone}{\mathbf{C}_\mathit{PNC}}
\newcommand{\classtwo}{\mathbf{C}_\mathit{TC}}
\newcommand{\loosecloudbound}{(1-p)^3 (1-\frac{p}{1-p})^2}
\title{Characterizing Strategic Cascades on Networks}
\author{TRAVIS MARTIN
\affil{University of Michigan}
GRANT SCHOENEBECK
\affil{University of Michigan}
MICHAEL P. WELLMAN
\affil{University of Michigan}
}
\begin{document}

\begin{abstract}
Transmission of disease, spread of information and rumors, adoption of new products, and many other network phenomena can be fruitfully modeled as cascading processes, where actions chosen by nodes influence the subsequent behavior of neighbors in the network graph.
       Current literature on cascades tends to assume nodes choose myopically based on the state of choices already taken by other nodes.
       We examine the possibility of strategic choice, where agents representing nodes anticipate the choices of others who have not yet decided, and take into account their own influence on such choices.
       Our study employs the framework of \citet{Chierichetti12}, who (under assumption of myopic node behavior) investigate the scheduling of node decisions to promote cascades of product adoptions preferred by the scheduler.
       We show that when nodes behave strategically, outcomes can be extremely different.
We exhibit cases where in the strategic setting 100\% of agents adopt, but in the myopic setting only an arbitrarily small $\epsilon$\% do.
Conversely, we present cases where in the strategic setting 0\% of agents adopt, but in the myopic setting $(100-\epsilon)\%$ do, for any constant $\epsilon > 0$.
       Additionally, we prove some properties of cascade processes with strategic agents, both in general and for particular classes of graphs.
\end{abstract}

\category{J.4}{Social and Behavioral Sciences}{Economics}
\category{F.2.2}{Nonnumerical Algorithms and Problems}{Sequencing and scheduling}


\keywords{Strategic Cascades; Cascade Scheduling; Contagion; Stackelberg Equilibrium}

\begin{bottomstuff}
A full version of the paper is available at: \url{http://arxiv.org/pdf/1310.2561.pdf}.

Author’s email addresses: \{travisbm, schoeneb, wellman\}@umich.edu.
\end{bottomstuff}

\maketitle
\thispagestyle{plain}
\pagestyle{plain}

\section{Introduction}
       \label{intro}

       A common topic in the study of network behavior is that of contagious or
\emph{cascading} processes, in which a number of nodes, or \emph{agents}, start with some
              property they then spread to their neighbors according to some specified propagation
              rules.
       This naturally represents phenomena such as the spread
              of trends, technologies, or influence among people or groups, or
              cascading failures in structures such as power grids or banks.
Scientists have, for many years, observed that processes can be heavily influenced by the network on which they occur \cite{Schelling71, Granovetter73, Granovetter78, Coleman88}.
              This influence has been confirmed in the real world by experiments from a wide array of fields \cite{Coleman57, Conley10, Lerman10, Banerjee12} including the study of product adoption \cite{Bass69, Brown87, Mahajan90, Goldenberg01}.

              Various models with simple spreading rules have been proposed
              \cite{Arthur89, Morris00, Watts02} to explain, for example, how
              breaking news spreads over the Internet or how a new technology spreads in popularity.
       Such models can be roughly classified in two categories, according to whether the spread is defined directly as a stochastic process, or in terms of decisions by self-interested agents who derive utility based on their choices and the choices of others in the network.
       In the latter case, the cascade scenario can be framed as a \emph{game}, and agent strategies cast as equilibria in the game.
       Due to the complexity of such games, however, typical agent-based cascade models  
       assume that agents make decisions \emph{myopically}, evaluating utility of alternative choices in the current state, without explicitly considering the future choices of others, nor their own potential impact on those choices.

       Our goal in this research is to investigate the implications of more forward-looking, or \emph{strategic} agent behavior.
       In what way do cascade patterns differ if agents behave strategically rather than myopically?
       How does the sophistication of agent decision making affect one's ability to influence a cascade process through scheduling of agent decisions?
       \subsection{Approach}
              To address these questions, we employ the framework of \citet{Chierichetti12}, described in Section~\ref{sec:model}.
              This prior work presents many interesting results about cascade behavior of myopic agents, and demonstrates the striking power of a scheduler to influence myopic cascades.
              Under our new assumption of strategic behavior, we find that even many simple cases of this game, such as pairwise agent interactions on a line, seem intractable to analyze. 
              Thus instead of solving the game generally, we take the approach of bounding the difference in cascade outcomes between myopic and strategic agent types.
              We find that cascade outcomes can be markedly different, as can the potential influence of a scheduler, depending on the particular network setting.
              We are able to obtain tight bounds through two easily analyzed graph families.

       \subsection{Results of Chierichetti et al.\ for myopic agents}
              \label{sec:myopic-description}

              Chierichetti et al.\ investigate a network of agents making choices between two options with positive externalities, $Y$ and $N$\@, under the influence of a scheduler.
              (We adopt their model and describe it in detail in Section~\ref{sec:model}.)
              The primary contribution of these authors is in analyzing the impact of the \emph{schedule}: the order in which agents make choices.
              They show that for any network there is some schedule which gets an expected constant fraction of the agents to choose $Y$\@.
              They also give networks and schedules which cause all but a constant number of agents to choose $Y$, in expectation.
              Lastly, they show that nonadaptive (fixed-sequence) schedules can obtain 50\% expected $Y$-adoption at best.

\subsection{Related work}
              \label{sec:related-work}

              Sequential voting and information cascades are two facets of a vast literature attempting to explain herd behavior \cite{Choi97}. Sequential voting models \cite{Alon12, Dekel00} consider strategic agents aiming to choose the majority decision, but with an additional private preference. Information cascades \cite{Banerjee92, Bikhchandani92} consider strategic agents with a noisy signal, attempting to determine the correct choice. Both of these models tend to simplify network effects by placing agents on a complete graph.

              \citet{Granovetter78} introduces the \emph{threshold model}, a foundational theory of network cascades which has since been studied and extended by many others \cite{Domingos01, Richardson02, Kempe03}.
              In the threshold model, agents take an action if a certain number of their neighbors have taken the same action.
              \citet{Altman12} give an example of self-interested agents behaving in accordance with the threshold model, but in general self-interested behavior may not align with set thresholds.

              Some agent-based cascade models \cite{Raub90, Morris00} allow agents to revise their decisions over multiple rounds of play.
              In each round, an agent myopically adopts its best choice in the current state.
              Some research on such models \cite{Blume93, Ellison93} also introduces an element of noise in agent choice, and investigates the convergence of cascades over time.

              \citet{Galeotti10} introduce a model with strategic agents which have access to incomplete information about the network outside their direct neighbors, thus making strategic agent behavior tractable.
              Lastly, \citet{Chierichetti12} introduce a cascade scheduling problem on networks based on a model studied by \citet{Arthur89}, which also assumes simple myopic agent decision making. Their model has been further extended by \citet{Cao13} and \citet{Hajiaghayi13}.


\section{Model}
       \label{sec:model}

       We model a \emph{game} $Q = (G, p, \pi)$ in which a collection of agents choose between two actions, $Y$ (``yes'') and $N$ (``no'').
       Agents make their choices one at a time in a sequence determined by the \emph{scheduler}.
       Once an agent has decided, it cannot change its action.
       We refer to the \emph{collection of agents} as $V$, the total number of agents as $n = \abs{V}$, an \emph{individual agent} as $i \in V$, and the \emph{choice} agent~\( i \) makes as $c_i \in \{Y, N\}$.
       Agents are vertices on the finite simple graph $G = (V, E)$, and we say that two agents are \emph{neighbors} if they are connected by an edge $e \in E$.
       We denote the set of neighbors of $i$ by $\nb(i) \subset V$.

       Each agent $i$ has a \emph{preference type}, $t_i \in \{Y, N\}$, which is independently randomly assigned at the beginning of the game.
       An agent is assigned type $Y$ with probability $p$ (a game parameter) and type $N$ with probability $1-p$. We assume that $Y$ is the less likely preference, so $p < .5$.
       Types are private: only $i$ knows the value of $t_i$ (until it is possibly revealed by $i$'s choice).

       Agents make their choices to maximize individual \emph{utility}.
       An agent obtains utility $\pi$ (a game parameter) for choosing its type ($c_i = t_i$), and a unit of utility for each neighboring node making the same decision that it does.
       Thus a node faces tension between choosing its type and the type it expects the majority of its neighbors to choose (when these types disagree).
       Formally, agent~$i$'s total utility is:
       \[u_i (t_i, c_i, \mathbf{c}_{-i} )
                =  \pi \mathbbm{1}(c_i = t_i) + \abs{ \{j \in \nb(i) : c_j = c_i\} },\]
       where $\mathbbm{1}$ is the indicator function and vector $\mathbf{c}_{-i}$ represents the choice of all other nodes.

       We differentiate between two modes of agent decision making: \emph{myopic} and \emph{strategic}.
       At the time agent~$i$ makes its decision, some nodes have already chosen and the remainder are undecided.
       A myopic agent makes its decision based on only the choices of decided nodes.
       It does not look into the future to consider the likely actions of undecided nodes, hence the term ``myopic''. Let $m_Y(i)$ and $m_N(i)$ denote the number of neighbors of~$i$ who have chosen $Y$ and $N$, respectively, at the time $i$ is scheduled to decide.
       Then a myopic $i$ chooses $t_i$ if $\abs{m_Y(i) - m_N(i)} \leq \pi$, and the majority type among its decided neighbors otherwise.

       A strategic agent aims to maximize its expected utility at the end of the game.
We assume it knows the details of the game ($G$, $p$, and $\pi$), the schedule $S$ (discussed below), and the decisions of already-decided agents.
       The agent reasons about the likely choices of undecided agents, assuming they all are strategic and play according to a \emph{perfect Bayesian equilibrium} (PBE)\@.
       A profile of strategies accords with PBE if and only if there exists a belief system, consistent with Bayesian updating, such that each agent's strategy is a best response to the other-agent strategies at every reachable information set.
       In our setting, each agent moves exactly once and types are independent, so there is no relevant updating.
       Under these conditions, each node of the game tree is essentially a singleton information set, treatable as a \emph{subgame}.
       Thus, the PBE concept here corresponds exactly to game solution by backward induction.
       To determine an agent's utility-maximizing action in some game, one can first solve for the choice of the last agent to move, in all possible subgames with only one agent left to move.
       Knowing the choice of the last agent, one can solve for the choice of the penultimate agent, in all subgames where all agents but two have moved.
       This reasoning can be repeated until the behavior of all agents in all subgames is known, yielding a PBE\@.

       We make the additional assumption that an agent chooses its preference type, $t_i$, if it would otherwise be indifferent between options.
       We show that any $Q$ and schedule combination correspond to exactly one PBE (see Theorem~\ref{prop:unique-spe}) consistent with this assumption.
       Thus the behavior of all strategic nodes is well defined.

       Following \citet{Chierichetti12}, our analysis includes a scheduler whose goal is to determine a schedule $S$ that maximizes the expected number of agents choosing $Y$\@.
       A schedule determines the order in which agents make their decisions.
       We consider two classes of schedule: \emph{nonadaptive} and \emph{adaptive}.
       A nonadaptive schedule is simply a fixed ordering of nodes, that is, a permutation of $V$\@.
       An adaptive schedule, in contrast, can select the next agent to choose based on previous agent decisions.
       Formally, adaptive schedule $S$ is a function of agent choices, $S : \{Y, N, U\}^{n}\allowbreak\rightarrow\allowbreak V$, where $U$ indicates that the corresponding agent is as yet \emph{undecided}.

       We evaluate schedules by their \emph{performance}, which is the expected number of nodes choosing $Y$ once all have decided.
       An \emph{optimal} schedule has the greatest performance among all schedules, or \emph{optimal performance}.
       We use \emph{strategic} and \emph{myopic} to qualify performance. For example, a schedule's strategic performance is the performance of the schedule for strategic agents.
       A state of a game in progress, in which some but not necessarily all agents have decided, is a \emph{situation}.

       We say that a situation is a $Y$-\emph{cascade} if every future agent chooses $Y$ regardless of type.
       We similarly define an $N$-\emph{cascade}.
       A situation is a \emph{total cascade} if the first agent necessarily initiates a cascade of its type.
       A game is a \emph{predetermined $Y$-cascade} if the starting situation is a $Y$-cascade.
       We similarly define \emph{predetermined $N$-cascade}.


\section{Roadmap of Results}
       \label{sec:contributions}
       The main result of this paper is a demonstration that cascade outcomes can vary drastically depending on the assumption of myopic or strategic agents.
Specifically, we show that the difference in performance between myopic and strategic agents can be arbitrarily close to the maximum possible difference of 100\% in either direction.
       In addition, we solve for equilibrium agent behavior in several particular game classes, provide miscellaneous results characterizing the behavior of cascade games with strategic agents, and a result demonstrating the importance of the capabilities of the scheduler:
       \begin{itemize}
              \item In Section~\ref{sec:clique}, we analyze the clique---both as a first example and as a way of introducing intuition, techniques, and results useful for subsequent sections.
We present instances in which strategic performance is 0\%: strictly worse than the constant expected adoption guaranteed for myopic agents.
We conversely present instances for which strategic performance is greater than myopic performance.
              \item In Section~\ref{sec:modified-clique}, we show that myopic performance can be much larger than strategic performance: the difference can be arbitrarily close to 100\%. We prove this by analyzing a specific class of games which occur on a graph we call a council graph.
              \item In Section~\ref{sec:cloud}, we show the converse: strategic performance can be arbitrarily close to 100\% greater than myopic performance. We show this by analyzing a class of games which occur on a graph we call a cloud graph.
              \item In Section~\ref{sec:general-results}, we give several results.  In particular we show that performance in the nonadaptive setting is bounded by $p$ for both myopic and strategic agents. This improves upon the results of \citet{Chierichetti12} showing a myopic agent upper bound of $\frac{1}{2}$. We also demonstrate a family of graphs in which myopic performance is always at least as great as strategic performance, no matter the parameter settings.
              \item In Section~\ref{sec:stack}, we investigate the commitment power of the scheduler and show that, in some cases, an ability to make non-credible threats can strictly enhance performance.
              \item  Finally, in Section~\ref{sec:block}, we present an algorithm to compute the performance of a graph with strategic agents that is efficient on a certain class of highly symmetric graphs.
       \end{itemize}

       Due to space limitations, the complete analysis of some results and several lemma and theorem proofs are relegated to the appendix.


\section{Clique Analysis}
       \label{sec:clique}
       A \emph{clique} is a complete graph where every two agents are connected.
       We begin our study of the difference between strategic and myopic performance with a description of behavior on the clique because it is illustrative of the difference between the myopic and strategic settings, and is used in subsequent proofs.
The clique is also easier to analyze as nodes occupy indistinguishable positions in the network, rendering all schedules identical.

\subsection{An example}
       \label{sec:clique-strategic-worse}
       Let $\pi = 1.1$, $p = 0.09$, and our graph be a clique of size~$3$.
       We name the nodes in the order that they are scheduled: $1$, $2$, and $3$.
       Note that on a clique all nodes have the same neighbors, so all schedules are identical.
       We reason about the behavior of strategic agents in this game by backward induction.

       First consider the behavior of the last node to choose, agent~3.
       If agents 1 and 2 have both chosen $N$ or have both chosen $Y$, 3 will match with them. Otherwise, $c_3 = t_3$.

       Next consider the behavior of agent~2.
       If $t_2 = N$, $c_2 = N$ no matter what. Even if $c_1 = Y$, agent~2 can expect to get a match from~3 with probability $0.91$ if $c_2 = N$.
       Its expected payoff would be $2.01$ for $c_2 = N$ versus $2$ for $c_2 = Y$\@.
       If $t_2 = Y$, $c_2 = N$ if $c_1 = N$. This is because agent~2's expected payoff for $c_2 = N$ is 2, versus $1.19$ for $c_2 = Y$.

       Knowing this behavior, $c_1 = N$ regardless of $t_1$.
       Suppose $t_1 = Y$.
       Then agent~1 gets payoff $2$ for $c_1 = N$, or payoff $1.1 + 0.91 (0 + 0.09) + (0.09) 2 \allowbreak= 1.3619$ for $c_1 = Y$, so is best off choosing $N$\@.

       Even this very simple graph demonstrates a qualitative difference between strategic and myopic behavior.
       As \citet{Chierichetti12} show, the optimal schedule for any graph with myopic agents achieves at least a constant fraction of $Y$-adoption, in expectation.
       In this example, myopic agents achieve over 6.7\% expected adoption.
       Yet for strategic agents, the example scenario yields zero adoption.
       We further characterize the behavior of the clique graph in Section~\ref{sec:clique-known}.

\subsection{Asymptotically large clique}
              \label{sec:clique-known}
       We characterize the behavior of games on cliques in the limit of large clique size.
       For the remainder of this section, we assume $p$ and $\pi$ to be fixed and represent a game $Q = (G,p,\pi)$ solely by its graph $G$. When $G$ is a complete graph (clique) of size $n$, we use $K_n$.
       Games on cliques can be divided into two classes of asymptotic behavior.

       \begin{theorem}
              \label{thm:clique-classes}
              For any fixed $0<p<1/2$ and $\pi >0$, there exists an $M$ such that for all $n \geq M$, $K_n$ gives either:
              \begin{enumerate}
                     \item A predetermined $N$-cascade (all agents always choose $N$), or
                     \item A total cascade (first agent chooses its type $t$ and the remaining agents match $t$, starting a $t$-cascade).
              \end{enumerate}
       \end{theorem}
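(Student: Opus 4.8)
The plan is to exploit the fact that on a clique the entire history relevant to any decision is captured by a single integer, the current \emph{lead} $\ell$ (the number of decided agents who chose $Y$ minus the number who chose $N$), together with the number $r$ of agents still to move. Since all schedules coincide, the PBE is obtained by backward induction on the pair $(\ell,r)$, and I would phrase the whole argument in these terms. My first step is a \emph{lock-in lemma}: there are constants $\theta^+$ and $\theta^-$, depending on $p$ and $\pi$ but \emph{not} on $n$, such that any situation with $\ell \ge \theta^+$ is a $Y$-cascade and any situation with $\ell \le -\theta^-$ is an $N$-cascade. The myopic direction is immediate (a type-$N$ agent already matches the majority once $\ell > \pi$); the strategic direction requires checking that a type-$N$ agent can never profitably defect from a large lead, since a single defection shifts $\ell$ by only one and cannot by itself reverse a cascade that all later agents will sustain. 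This confines every non-trivial decision to the bounded \emph{contested band} $-\theta^- < \ell < \theta^+$.

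The heart of the argument is to understand decisions inside this band when $r$ is large. The key observation is that an agent's payoff is dominated by \emph{which} cascade its action ultimately triggers: inducing a $Y$-cascade or an $N$-cascade each yields $\Theta(n)$ matches, so the comparison between the two actions is governed, up to an $O(1)$ term, by the type bonus $\pi$ and by the \emph{identity} (not the size) of the downstream cascade. I would make this precise by showing that, for each fixed $\ell$ in the band, the PBE continuation from $(\ell,r)$ is itself a cascade once $r$ exceeds a constant, and that the type of that cascade stabilizes as $r$ grows. Concretely, a type-$N$ agent compares joining a prevailing $Y$-cascade against defecting to swing the game toward $N$; because a lone defection cannot move the lead out of the band while later agents keep coordinating, defecting merely makes the agent lonely unless it actually reverses the induced cascade, and comparing $\pi + (n - O(1))$ against $n - O(1)$ shows that the outcome turns on $p$ and $\pi$ but not on $n$. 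The upshot is a \emph{stationary threshold policy} on the band: for $r$ large, the choice at lead $\ell$ is a fixed function of $\ell$ and the agent's type.

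With the stationary policy in hand, I would read off the two cases from the starting situation $(\ell,r)=(0,n)$. A type-$N$ first mover always chooses $N$: securing the $N$-cascade earns the bonus $\pi$ on top of the $\Theta(n)$ matches it obtains in either cascade, whereas choosing $Y$ forgoes that bonus and swims against the $p<1/2$ drift. Hence the dichotomy hinges entirely on a type-$Y$ first mover. If that agent \emph{holds} ($c_1=Y$), the stationary policy carries the game into a sustained $Y$-cascade and, since a type-$N$ first mover triggers an $N$-cascade, the first agent effectively chooses its type and all others match---exactly the \emph{total cascade} of case~(2). If instead a type-$Y$ first mover \emph{caves} ($c_1=N$)---which happens precisely when the $Y$-cascade cannot be sustained downstream, so that holding would strand the agent against an $N$-majority while caving secures $n-1$ matches---then every agent at lead $\le 0$ likewise chooses $N$, the lead only decreases, and the starting situation is a \emph{predetermined $N$-cascade}, case~(1). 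Taking $M$ to be the stationarity threshold on $r$ completes the proof.

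I expect the main obstacle to be the second step: rigorously closing the strategic backward induction inside the contested band. Unlike the myopic process---where the lead is a simple drifting random walk and a cascade forms only upon hitting $\pm\theta$---strategic agents must forecast the entire downstream equilibrium, and one must rule out persistent $r$-dependent or oscillatory behavior and establish that the continuation from each band state is genuinely one of the two cascade types. Proving this self-consistency, so that the ``which cascade do I induce'' comparison is well defined in the limit, is where the real work lies; the reduction to $(\ell,r)$ and the endpoint classification are comparatively routine.
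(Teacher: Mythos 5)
Your reduction to the state $(\ell,r)$ and your lock-in lemma are sound, and they track the paper's setup (its difference statistic $d$ and the threshold $\lfloor-\pi\rfloor$ at which an $N$-cascade is guaranteed). Your architecture is genuinely different, though: you run a single backward induction on $(\ell,r)$ inside one large clique and try to show the policy on the contested band stabilizes, whereas the paper splits off $\pi<1$ (where a direct induction shows every clique is a total cascade, Lemma~\ref{lem:clique-pi-lessthan-one}) and, for $\pi\ge 1$, inducts over the clique size itself: once consecutive cliques $K_{n-1},K_n$ are classified, all larger cliques remain classified (Lemmas~\ref{lem:clique-class-1} and~\ref{lem:clique-class-2}), and every $(p,\pi)$ eventually produces a classified clique (Lemma~\ref{lemma:clique-class-exists}).

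The gap is that you explicitly defer the one step that constitutes the proof. Your claim that ``for each fixed $\ell$ in the band, the PBE continuation from $(\ell,r)$ resolves into a cascade once $r$ exceeds a constant'' is the theorem in disguise, and your case-(2) endgame (``holding carries the game into a sustained $Y$-cascade'') presupposes it. The paper closes this with two tools you do not supply: (i) monotonicity of the equilibrium decision in $d$ (Lemma~\ref{prop:clique-monotonic-d}), and (ii) the fact that on a large enough clique a node always prefers a certain cascade to any uncertain continuation (Lemma~\ref{prop:clique-certain-cascade}). Together these let one grow the guaranteed-$N$-cascade region one lead value at a time, from $\lfloor-\pi\rfloor$ up to $-1$, collapsing the contested band to $\{0\}$; at $\ell=0$ the mover then has a certain $N$-cascade available and will choose $Y$ only if that too triggers a certain cascade, which is exactly the dichotomy. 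Without something playing the role of (ii), you cannot rule out the third behavior your dichotomy must exclude---a band state at which agents keep choosing their type and the lead performs a random walk. Two further soft spots: literal stationarity in $r$ can fail (the paper notes the classification alternates with the parity of $n$ near the transition, so the decision at $\ell=0$ need not stabilize; the theorem fortunately does not require it), and the assertion that an $N$-type mover always chooses $N$ at $\ell\le 0$ is true but not free---the paper derives it from Theorem~\ref{thm:non-adpative-p} together with monotonicity in $d$, not from a drift heuristic.
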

       We denote these two classes of behavior by $\classone$ and $\classtwo$.
       The class a particular game belongs to depends on $p$, $\pi$, and $n$.
       The proof of this theorem, in Appendix~\ref{app:clique}, follows from the fact that, as cliques become very large, a node prefers any guaranteed cascade over a chance of being left out of a cascade.

       We find cliques in both $\classone$ (see Section~\ref{sec:clique-strategic-worse}) and $\classtwo$ (see below).
       $\classone$ corresponds to cases where myopic agents give higher performance than strategic agents, and $\classtwo$ corresponds to the opposite.
       A clique transitions from $\classtwo$ to $\classone$ as $p$ decreases and $\pi$ increases.
       On the boundary of this transition we find cases where a clique alternates, depending on the parity of $n$, between $\classone$ and $\classtwo$. Computational confirmation of these results can be found in Section~\ref{sec:clique-simulation}.

       \paragraph*{\textbf{When Strategic Outperforms Myopic on a Clique}}
              \label{sec:clique-strategic-better}
              Section~\ref{sec:clique-strategic-worse} presents an example where strategic agents yield zero performance but myopic agents give positive performance.
              One might expect the clique to always favor myopic performance, as strategic agents are aware that $Y$-preference is less likely, and thus might be more likely to choose $N$ than their myopic counterparts.
We show that this is not the case.
              When $1 \leq \pi < 1+p$, myopic agents underperform strategic agents because two $Y$ decisions are required to start a myopic $Y$-cascade and only one is required to start a strategic $Y$-cascade. Thus the probability of a $Y$-cascade is $\frac{p^2}{p^2 + (1-p)^2} \approx p^2$ for myopic agents and $p$ for strategic agents.

              \begin{theorem} \label{thm:clique-strategic-cascade}
                     For any $1 \leq \pi < 1 + p$, the probability of a $Y$-cascade with strategic users on a clique graph is $p$.
              \end{theorem}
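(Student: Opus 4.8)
The plan is to solve the clique directly by backward induction, exploiting the fact that on $K_n$ a mover's decision collapses to two numbers: the signed difference $d = m_Y(i) - m_N(i)$ among its already-decided neighbors, and the number $r$ of agents still to move (including $i$). Writing the mover's payoff from $Y$ minus its payoff from $N$, the $\pi$-bonus contributes $\pm\pi$ according to type, the already-decided neighbors contribute exactly $d$, and the yet-to-move neighbors contribute a term that depends only on how the continuation unfolds from difference $d\pm 1$ with $r-1$ movers left. Hence the sign of this comparison, and so the mover's choice, is a function of $(d,r)$ and the mover's type alone, which is what makes a clean induction possible. Throughout I would invoke the uniqueness of the equilibrium (Theorem~\ref{prop:unique-spe}) and the tie-break-toward-type convention so that the continuation payoffs are well defined.

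With this reduction I would establish three mutually reinforcing structural facts by a single backward induction on $r$. \emph{Locking:} if $\abs{d}\ge 2$ then every remaining agent chooses the majority sign regardless of type, so the situation is a cascade; the base case (last mover at $d=2$) uses $\pi<2$, and the off-path branch where a type-$N$ mover drops the difference back to $1$ is controlled by the other two facts. \emph{Tipping:} at $d=1$ with $r\ge 2$ the mover chooses $Y$ regardless of type, driving the difference to $2$ and locking a $Y$-cascade; symmetrically $d=-1$ with $r\ge 2$ forces $N$ (this symmetric case needs $\pi<2-p$, which follows from $\pi<1+p<2-p$). \emph{Neutral:} at $d=0$ the mover plays its own type, which by tipping and locking immediately sets off a full cascade of that type.

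The heart of the argument is the tipping step, and this is exactly where $1\le\pi<1+p$ is used. Consider a type-$N$ mover at $d=1$, where $y=\nu+1$. Choosing $Y$ pushes the difference to $2$ and, by locking, produces a $Y$-cascade in which every one of the $r$ agents from the mover onward ends up $Y$, for a payoff of $\nu+r$. Choosing $N$ instead returns the difference to $0$; by the neutral fact the next mover plays its type, so the continuation is a type-cascade whose expected number of $N$-choosers is $(1-p)(r-1)$, for a payoff of $\pi+\nu+(1-p)(r-1)$. The $\nu$ terms cancel, and the mover strictly prefers $Y$ precisely when $1+p(r-1)>\pi$, which holds for every $r\ge 2$ if and only if $\pi<1+p$; the assumption $\pi\ge 1$ keeps $d=1$ out of the myopic trigger band, so a lone $Y$ does not force the last mover and the neutral and locking base cases go through. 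Granting these facts the conclusion is immediate: agent~$1$ faces $d=0$ with $r=n$, so it plays $t_1$ and triggers a cascade of its own type (placing the clique in class $\classtwo$ of Theorem~\ref{thm:clique-classes}); a $Y$-cascade therefore occurs if and only if $t_1=Y$, an event of probability $p$.

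The main obstacle is precisely the circular dependence among the three facts: tipping at $d=1$ invokes the neutral continuation from $d=0$, which in turn relies on tipping and locking with one fewer mover. They cannot be proved in isolation and must be carried through a single induction on $r$, with the small-$r$ base cases ($r\le 2$) verified by hand. Some care is also needed to note that the clean statement requires $n$ large enough for agent~$1$ to have room to tip; the degenerate tiny cliques (e.g.\ $n=2$, where the difference can never reach $\abs{d}=2$ and no cascade ever occurs) are genuine exceptions, consistent with the asymptotic framing of this section.
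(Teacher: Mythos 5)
Your proposal is correct and takes essentially the same route as the paper's proof: a backward induction on the number of remaining movers with state $d = m_Y - m_N$, showing that a nonzero $d$ with at least two movers left locks a cascade of the majority sign while $d=0$ yields a type choice, with the decisive inequality $2 > (1-p)+\pi$ (equivalently $\pi < 1+p$) arising at exactly the same tipping step ($d=1$, type-$N$ mover, two movers remaining). Your explicit caveat about degenerate tiny cliques such as $n=2$ is a reasonable addition that the paper leaves implicit.
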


              Note that when $\pi < 1$, strategic performance is equal to myopic performance by Lemma~\ref{lem:clique-pi-lessthan-one}.

              Computational results in Section~\ref{sec:clique-simulation} suggest that for any $\pi \geq 1$, there exists settings of $p$ such that strategic performance is greater than myopic performance.

       \subsection{Clique computational solution}
              \label{sec:clique-simulation}
              In Section~\ref{sec:clique-strategic-worse} we prove that, under some parameter settings for the clique, strategic agents result in a performance of zero.
              In Section~\ref{sec:clique-strategic-better} we prove that other parameter settings result in strategic agents outperforming myopic agents.
              In this section we provide computational verification for these two scenarios.
              For the specifics of our algorithm, please refer to Section~\ref{sec:block}.

              \begin{figure}
                    \centering
                    \subfigure[$n = 40$]{\includegraphics[width=0.493\columnwidth]{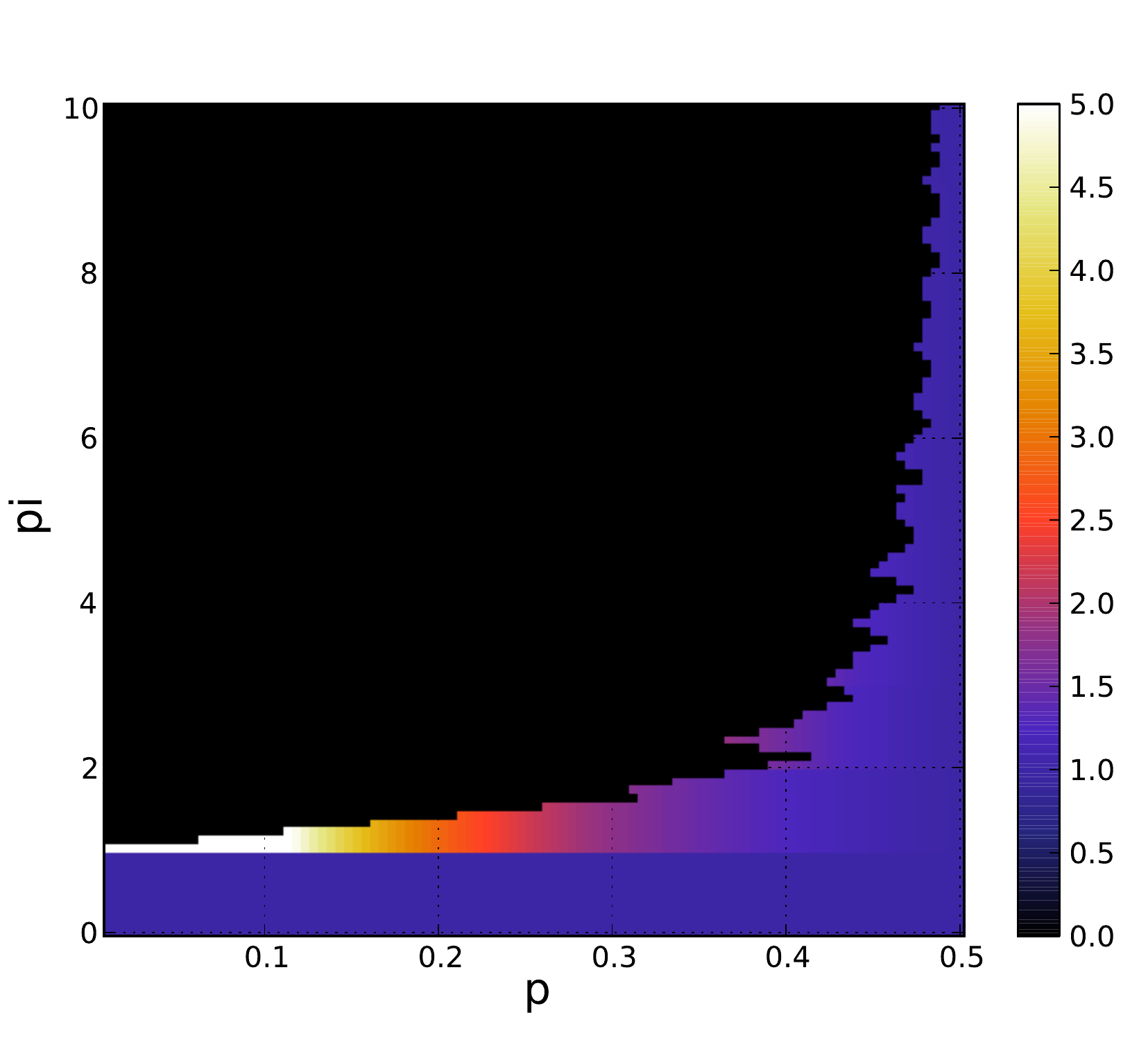}\label{fig:clique-simulation-1}}
                    \subfigure[$p = 0.25$]{\includegraphics[width=0.493\columnwidth]{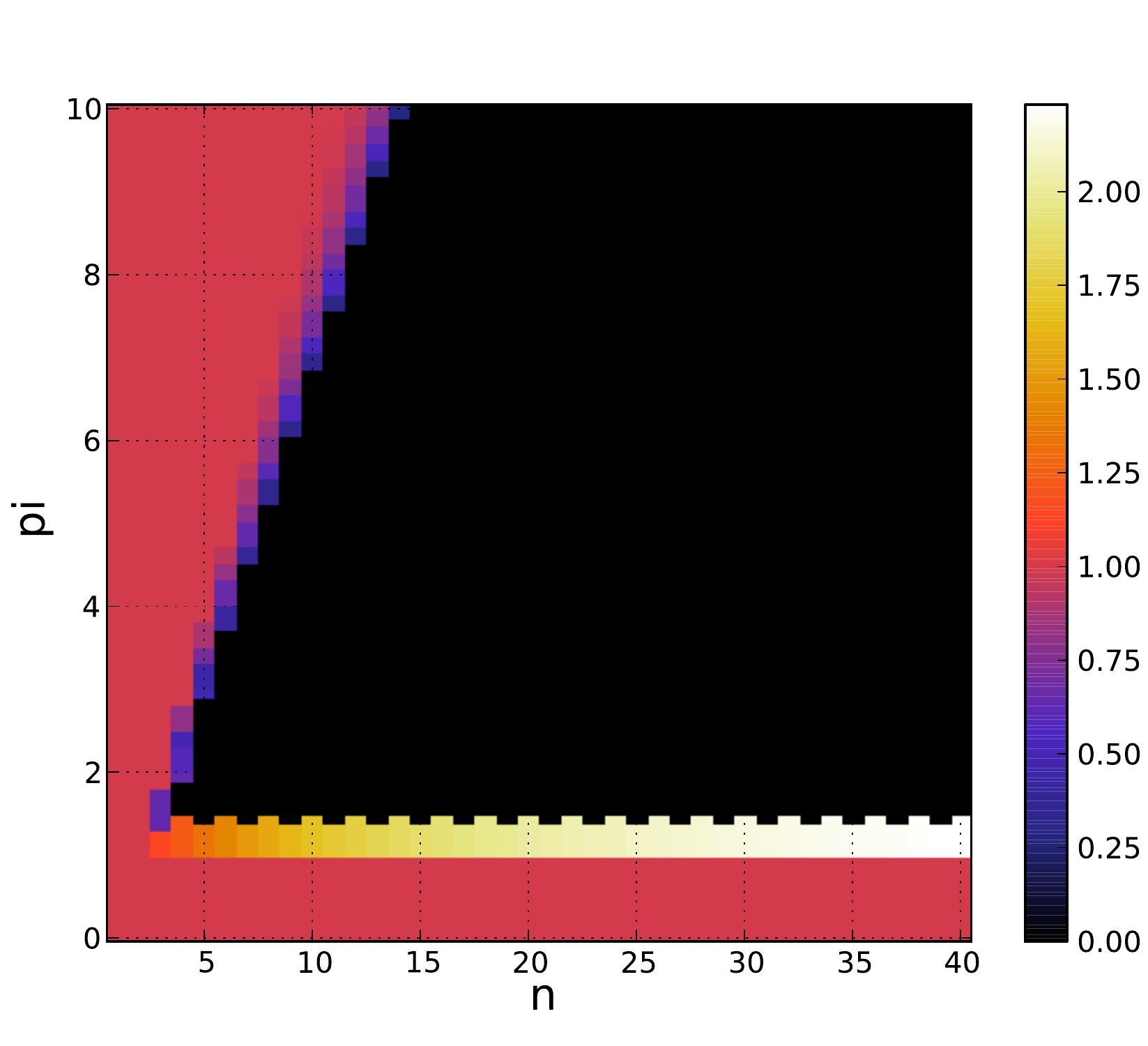}\label{fig:clique-simulation-2}}
                    \caption{Strategic-to-myopic performance ratio on clique graph.}
                    \label{fig:clique-simulation}
              \end{figure}
              Figure~\ref{fig:clique-simulation-1} displays results of a program which simulates a clique of 40 agents, each making the optimal strategic or myopic decision. We calculate the strategic and myopic performance for a variety of $p$ and $\pi$ combinations and plot their ratio in the figure. The black region corresponds to the class $\classone$ and the lighter regions correspond to the class $\classtwo$. The band at the bottom for $\pi < 1$ results from the immediate total cascade (Lemma~\ref{lem:clique-pi-lessthan-one}). The band just above $\pi=1$ in Fig.~\ref{fig:clique-simulation-1} corresponds to the region partially described by Theorem~\ref{thm:clique-strategic-cascade}, where one agent can start a strategic cascade but two agents are necessary for a myopic cascade.
              Figure~\ref{fig:clique-simulation-2} displays results of the same program, but with fixed $p$ to examine the effect of varying $n$.
              The resulting black area is governed by two simple bounds. The lower pink area results from $\pi < 1$ according to Lemma~\ref{lem:clique-pi-lessthan-one} as described above. The left pink wedge appears when $\pi$ is large relative to $n$ and all agents choose their preference. At the pink-black border we see non-trivial behavior.



\section{Myopic Outperforms Strategic}
       \label{sec:modified-clique}

We exhibit a setting where myopic performance is $(100- \epsilon)\%$ but strategic performance is zero.
Thus, unlike in the myopic case, where performance is always bounded above some constant~\cite{Chierichetti12}, it is possible to get zero strategic performance while simultaneously having arbitrarily high myopic performance. This constitutes the first half of our core result.
       We prove this bound constructively, by characterizing the behavior of a family of graphs which have optimal strategic performance of 0\% and an optimal myopic performance which approaches 100\% in the limit of large graph size.

       \begin{figure}
         \centering
         \includegraphics[width=0.25\columnwidth]{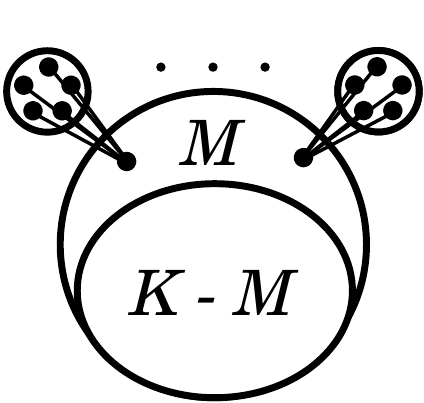}
         \caption{A council graph, as described in the text, with intra-clique connections excluded. The large circle is the council and the small circles are subcliques.}
         \label{fig:council_graph}
       \end{figure}

       Our graph is a modified version of a clique graph, which we call a \emph{council} graph (see Figure~\ref{fig:council_graph}).
       It consists of a large clique, the council, of size $K$ and $M$ smaller subcliques of size 5.%
\footnote{Any subclique of constant size $\geq 5$ will work.}
       \( M \) is $o(K)$, for example, $M = \sqrt{K}$.
       Each of the subcliques is completely connected to a unique node, its \emph{representative}, from the council.
       This gives $K-M$ council nodes of degree $K-1$, $M$ council nodes of degree $K+4$, and $5M$ subclique nodes of degree $5$.


       \paragraph*{\textbf{Near 100\% Myopic Performance}}
              \label{sec:council-myopic}
              We demonstrate a schedule giving performance tending to $100\%$ in the limit of large graph size.
              We do not prove this schedule's optimality, but it gives a lower bound sufficient for our purposes. We say a subclique is \emph{fresh} if none of its nodes have been scheduled. Our schedule, $S$, is the following:

              \RestyleAlgo{tworuled}
              \begin{algorithm}[H]
              \nl Choose any fresh subclique, $j$. \\
                            \Indp \Indp Schedule nodes from $j$ until one chooses $N$ or all have chosen $Y$. \\
                            If all nodes in $j$ have chosen $Y$, schedule $j$'s council representative, $r_j$. \\ \Indm \Indm
              \nl Repeat \texttt{1} until three representatives have been scheduled or no fresh subcliques remain. \\
              \nl Schedule all council nodes without $N$-decided neighbors. \\
              \nl Schedule all remaining council nodes in any order. \\
              \nl Schedule all remaining subclique nodes in any order.
              \end{algorithm}%
              \begin{theorem}
                     For any $p < .5$, $2 < \pi < 3$, the myopic performance of $S$ approaches 100\% as $K \rightarrow \infty$.
              \end{theorem}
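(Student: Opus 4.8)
The plan is to trace the myopic dynamics induced by the schedule $S$ and show that, with probability tending to $1$, a $Y$-cascade engulfs the entire council, leaving only $o(K)$ nodes free to choose $N$. First I would record the decision rule for $2 < \pi < 3$: since $m_Y(i) - m_N(i)$ is an integer and $2 < \pi < 3$, a myopic agent $i$ is forced to its decided-neighbor majority exactly when $\abs{m_Y(i) - m_N(i)} \geq 3$, and otherwise chooses its own type $t_i$. This single observation drives the whole argument.

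Next I would analyze one subclique as processed in step~1. A subclique node's only neighbors are the other four subclique nodes and its representative, none of which is decided when the subclique is first touched. Hence the first three scheduled nodes each see a $Y$-minus-$N$ gap of at most $2$ and so choose their types, while the fourth and fifth then see a gap of $3$ or $4$ and are forced to $Y$. Therefore the subclique turns all-$Y$ precisely when its first three scheduled nodes have type $Y$, an event of probability $p^3$; in that case its representative is scheduled and, seeing five $Y$ neighbors against no $N$ neighbor (and at most two previously-decided council representatives, themselves $Y$), is forced to choose $Y$. Since $M = \omega(1)$ and the $M$ subcliques give independent trials each succeeding with the constant probability $p^3$, a Chernoff/large-numbers estimate yields that at least three subcliques succeed with probability $1 - o(1)$; call this event $A$.

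On $A$ I would show the council cascades entirely to $Y$. Exactly three representatives have chosen $Y$ and, crucially, no council node has chosen $N$ during steps~1--2. Thus in step~3 every council node without an $N$-decided neighbor (all non-representatives, together with the successful and still-fresh representatives) sees at least the three $Y$-decided representatives and no $N$-decided neighbor, so its gap is at least $3$ and it is forced to $Y$; each such decision only widens the gap for later nodes, so step~3 turns all of them $Y$. The only council nodes left are the representatives of the at most $M = o(K)$ unsuccessful subcliques, scheduled in step~4, where each sees roughly $K$ council $Y$'s against at most a handful of subclique $N$'s and is again forced to $Y$. Hence all $K$ council nodes choose $Y$, and the only possible $N$ decisions occur among the $5M = o(K)$ subclique nodes handled in step~5.

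Finally I would assemble the bound. Conditioned on $A$, at least $K$ of the $n = K + 5M$ nodes choose $Y$, so the $Y$-fraction is at least $K/(K + 5M) = 1 - o(1)$; since $P(\neg A) = o(1)$ contributes only vanishing mass, the expected $Y$-fraction is at least $P(A)\cdot K/(K + 5M) \to 1$, which gives the claim. I expect the main obstacle to be the bookkeeping that keeps the cascade from stalling: one must verify that the unsuccessful subcliques' representatives, which carry $N$-decided neighbors, are quarantined by being scheduled only in step~4---after the council cascade is already established---so that they can never inject an $N$ into the council. The ordering of steps~3 and~4 in $S$, together with the $\abs{m_Y - m_N} \geq 3$ threshold, is precisely what makes this quarantine work.
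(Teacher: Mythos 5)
Your proof is correct and follows essentially the same route as the paper's: trace the schedule, compute the per-subclique success probability, bound the probability of getting fewer than three successful subcliques, and observe that three $Y$-representatives force the entire council to $Y$. One small but genuine point of divergence: you compute the subclique success probability as $p^3$ (the first three scheduled subclique nodes must be $Y$-type, since with $2<\pi<3$ a decided-neighbor gap of $2$ does not force a choice), whereas the paper asserts $p^2$; your figure is the one consistent with the paper's own stated decision rule, and the discrepancy is immaterial to the conclusion since $M^2(1-p^3)^{M-2}\to 0$ just as well. Your explicit check that the representatives of failed subcliques are quarantined until step~4, after the council cascade is already locked in, is a detail the paper leaves implicit but which is indeed needed.
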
%
              \begin{proof}
                     Our proof proceeds by a careful description of behavior at each point in the schedule. First note that a myopic agent will choose its type if $\leq 2$ of its neighbors have been scheduled, and $Y$ if $\geq 3$ more of its neighbors have chosen $Y$ than $N$.

                     So with probability $p^2$ a fresh subclique from Step~\texttt{1} will be a $Y$-cascade and its representative will also choose $Y$\@. 
With probability $1-p^2$ a fresh subclique from Step~\texttt{1} will not be a $Y$-cascade and its representative will not be scheduled.

                     We can bound the probability of (the undesirable event of) not having 3 subclique $Y$-cascades by:
                     \begin{multline*}
                     \binom{M}{2}(p^2)^2 (1-p^2)^{M-2} + M p^2 (1-p^2)^{M-1} + (1-p^2)^M < M^2 (1-p^2)^{M-2}.
                     \end{multline*}

                     Once there are 3 subclique $Y$-cascades, the entire council chooses $Y$. Thus, the expected fraction of $Y$ is at least: $$K (1 - M^2(1-p^2)^{M-2}) / (K+M),$$ which tends to 1 as $K \rightarrow \infty$. \qed
              \end{proof}

       \paragraph*{\textbf{0\% Strategic Performance}}
              We characterize behavior of the council graph with strategic agents and prove that certain choices of $p$ and $\pi$ give 0\% performance. This, together with the results from above, gives a tight bound on the extent to which myopic performance can be greater than strategic performance.

              \begin{theorem}\label{thm:councilstrategicbad}
                     For some $p < .5$, $2 < \pi < 3$, any schedule on a council graph with strategic agents has 0\% performance.
              \end{theorem}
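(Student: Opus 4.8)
The plan is to exhibit a single pair $(p,\pi)$ with $p$ small and $2<\pi<3$ for which the whole council graph collapses to a predetermined $N$-cascade under \emph{any} schedule, so that performance is exactly $0\%$. I would choose $(p,\pi)$ so that two facts hold simultaneously: (i) a clique of size $K$ lies in class $\classone$ for all large $K$ (a predetermined $N$-cascade), which Theorem~\ref{thm:clique-classes} guarantees once $p$ is small enough, since a clique transitions from $\classtwo$ to $\classone$ as $p$ decreases; and (ii) a subclique of size $5$ in which every node has one extra neighbor permanently committed to $N$ is itself a strategic $N$-cascade. Establishing that such a pair exists with $\pi\in(2,3)$ is the one genuinely quantitative point; everything else is structural.

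The heart of the argument is a robustness claim about the council, which I would prove first: for large $K$, whenever a council node is scheduled it chooses $N$, no matter the current choices inside the subcliques and no matter the (possibly adaptive) schedule. A non-representative council node has $K-1$ council neighbors and no subclique neighbors, so the $\classone$ reasoning of Theorem~\ref{thm:clique-classes} applies essentially verbatim: the continuation among the remaining council nodes is an $N$-cascade, the node is nearly certain of an $N$-match of size $\approx K$ by choosing $N$, and choosing $Y$ risks isolation since $Y$-preference has probability $p<\tfrac12$. A representative has the same $K-1$ council neighbors plus at most $5$ subclique neighbors; because $\pi<3$, these five extra edges can shift its payoff by at most $5+\pi<8$, which is dominated by the $\Theta(K)$ gap favoring $N$. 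Hence the subcliques cannot perturb the council, and every representative chooses $N$ regardless of whether its own subclique happens to be a $Y$-cascade. This is the step I expect to be the main obstacle, because it requires upgrading the pure-clique $\classone$ analysis to one that is uniform over all interleavings of subclique and council moves and over all subclique configurations; the adaptivity of the schedule is what makes the backward induction delicate. The right way to handle it is to show that in every reachable subgame the council's continuation is already an $N$-cascade, which breaks the apparent circularity between ``the council chooses $N$'' and ``representatives choose $N$.''

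With the council pinned down, each subclique node may treat its representative as a neighbor guaranteed to choose $N$, so its subclique behaves exactly like a $5$-clique with one permanently $N$-committed external neighbor per node. I would then analyze this finite game by backward induction, mirroring the $3$-clique computation of Section~\ref{sec:clique-strategic-worse}. The last subclique mover chooses $N$ whenever its predecessors have, since it then secures up to $4+1=5$ matches against a gain of at most $\pi<3$ from defecting to $Y$; working backward, the first subclique mover chooses $N$ even when its type is $Y$, because choosing $N$ yields the guaranteed representative match plus the anticipated four subclique matches, whereas choosing $Y$ forfeits the representative's match and, as $p$ is small and no subsequent mover is willing to follow a lone $Y$, cannot trigger a $Y$-cascade. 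Thus the first mover's deviation is unprofitable for the chosen $(p,\pi)$, and by induction the entire subclique chooses $N$.

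Combining the two steps, in the unique PBE (Theorem~\ref{prop:unique-spe}) every council node and every subclique node chooses $N$ under any schedule, so no agent ever chooses $Y$ and the strategic performance is $0\%$. The only delicate verification that remains is the joint feasibility of (i) and (ii) at a common $(p,\pi)$ with $2<\pi<3$, which I would settle by taking $p$ sufficiently small: both the council's $\classone$ membership and the dominance of the representative's $N$-pull inside each subclique become easier as $p$ shrinks, so a small enough $p$ makes both hold at once.
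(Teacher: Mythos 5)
Your decomposition is the paper's in reverse, and the reversal is what creates the one step you yourself flag as the main obstacle --- and that step is left unresolved. The paper's proof rests on a single quantitative device, Lemma~\ref{lem:council-behavior}: for suitable $p$ and $2<\pi<3$, a clique of $k\geq 5$ undecided nodes with a common external neighbor \emph{guaranteed to choose $Y$} still produces $k$ $N$-decisions (proved by the explicit backward-induction table in Appendix~\ref{app:council}). Because this is the worst case for the $N$-conclusion, it pins down every subclique's behavior \emph{first}, independently of whatever the council or the schedule does: each subclique node's only external neighbor is its representative, and even a guaranteed-$Y$ representative cannot save it. Only then does the paper turn to the council, whose external neighbors are now all \emph{known} to be $N$, so the same lemma applies a fortiori and no perturbation or interleaving issue ever arises. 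You instead analyze the council first, while its representatives' subclique neighbors are still undetermined, and you must therefore show that the large-clique $\classone$ backward induction is uniformly robust to a $\pm 5$ utility perturbation at every representative, in every reachable subgame, under every adaptive interleaving of council and subclique moves. Your ``$5+\pi<8$ is dominated by $\Theta(K)$'' bound addresses a single node's payoff comparison, but the perturbation also propagates through the continuation: each council node's decision depends on the behavior of all later council nodes, whose decisions are themselves perturbed, so the claim ``the continuation is already an $N$-cascade'' has to be re-established inductively with the perturbation carried along. You correctly name this as the crux and correctly describe the shape of the fix, but you do not carry it out, and it is not a routine verification --- the paper's own $\classone$ analysis (Lemmas~\ref{lem:clique-class-1}--\ref{lemma:clique-class-exists}) is already delicate for the unperturbed clique.

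Two smaller points. First, your step (ii) --- the $5$-clique with a guaranteed-$N$ external neighbor --- is genuinely easier than the paper's Lemma~\ref{lem:council-behavior} (guaranteed-$Y$ neighbor), but you only get to use the easy version because you assumed step (i); if you adopt the paper's ordering you need the guaranteed-$Y$ version, which is exactly what Table~\ref{tab:clique} computes. Second, your step (i) also quietly requires that the chosen $(p,\pi)$ with $2<\pi<3$ puts $K_n$ firmly in $\classone$ for \emph{all} large $n$, not merely in $\classone\cup\classtwo$; Theorem~\ref{thm:clique-classes} alone does not give this, and the paper notes boundary cases that alternate with the parity of $n$, so this needs the same kind of explicit utility comparison the paper performs. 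The cleanest repair of your proposal is simply to swap the order of your two steps and strengthen your subclique lemma to the guaranteed-$Y$ hypothesis --- at which point you have reconstructed the paper's proof.
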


          The following lemma is used in the proof below.
              \begin{lemma}
                     \label{lem:council-behavior}
                     For some $p < .5$, $2 < \pi < 3$, a clique of $k \geq 5$ undecided nodes and one node guaranteed to choose $Y$ will result in $k$ $N$-decisions and 1 $Y$-decision.
              \end{lemma}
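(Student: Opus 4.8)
The plan is to solve the subgame by backward induction, exploiting the clique structure. Since every undecided node is adjacent to every other node and to the guaranteed-$Y$ node, the only payoff-relevant information when a node moves is the triple $(y,n,m)$: the number $y$ of nodes that have already chosen $Y$ (the guaranteed node counts as one such, so play begins at $(1,0,k)$ on a clique of $k+1$ nodes), the number $n$ that have chosen $N$, and the number $m$ of strategic nodes still undecided. I would define $g(y,n,m)$ to be the expected number of $Y$-choices at the end of the game from this state under the unique PBE (Theorem~\ref{prop:unique-spe}). A one-step look-ahead then gives a clean recursion: a type-$Y$ mover compares a $Y$-payoff of $\pi + g(y+1,n,m-1)-1$ against an $N$-payoff of $(y+n+m)-g(y,n+1,m-1)-1$, while a type-$N$ mover compares $g(y+1,n,m-1)-1$ against $\pi+(y+n+m)-g(y,n+1,m-1)-1$, with ties broken toward the mover's own type. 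The goal is to show $g(1,0,k)=1$, and in fact that every undecided node chooses $N$, so that the outcome is deterministically $k$ $N$-decisions and the single $Y$.

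Next I would establish, by a simultaneous induction on the number of remaining movers, two structural facts about $g$ under $2<\pi<3$ and $p$ small. First, a cascade threshold: any state whose surplus $y-n$ is at least $3$ (with at least one node left) is a $Y$-cascade, so $g(y,n,m)=y+m$. The last mover behaves myopically and joins $Y$ once the surplus exceeds $\pi$, and since $\pi<3$ the comparison $y-n\ge\pi$ propagates to earlier movers even for a type-$N$ node, using that the $Y$-successor has surplus $\ge 4$ and is a cascade by the inductive hypothesis. Second, an $N$-persistence fact: for the low-surplus states reachable from $(1,0,k)$, every undecided node chooses $N$, and the only way play can escape toward the cascade region is through a deliberate deviation to $Y$ followed by a lucky type-$Y$ realization, an event of probability $O(p)$; consequently $g$ at these states exceeds $y$ by only $O(p)$, and in particular $g(2,0,k-1)=2+O(p)$.

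With these in hand the main line is short. Along the realized path the state is $(1,j,k-j)$, so each mover's safe $N$-payoff is $k-1\ge 4$, whereas a deviation to $Y$ moves play to a state with two $Y$-votes, where $g=2+O(p)$, and so yields expected payoff $\pi+g-1=\pi+1+O(p)$. Since $\pi<3$ we have $\pi+1<4$, so for $p$ small enough the deviation is strictly worse and a type-$Y$ node chooses $N$; a type-$N$ node prefers $N$ a fortiori. The most tempting state is $(1,0,k)$ itself, since any already-present $N$-votes only further depress the $Y$-payoff. Carrying this comparison through all reachable states yields the $N$-cascade, and together with the guaranteed node the outcome is $k$ $N$-decisions and one $Y$-decision.

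The main obstacle is the quantitative control in the second structural fact. A deviation to $Y$ does not merely cost the deviator a mismatch; it also threatens to ignite a genuine $Y$-cascade once the surplus reaches $3$, and backward induction forces us to evaluate exactly that off-path possibility. The crux is therefore to bound the expected $Y$-count $g(2,0,k-1)$ produced by a single seeded $Y$ and to show it is $2+O(p)$ uniformly in $k$. The governing inequality, $g(2,0,k-1)<k-\pi$, is tightest at $k=5$, since larger clique sizes only pile on more $N$-votes and make $N$ more attractive; it therefore suffices to fix a single $p<(3-\pi)/c$ for an absolute constant $c$ (the explicit finite computation at $k=5$ yields $g(2,0,4)=2+7p-2p^2$, so $c=7$), which then works for every $k\ge 5$.
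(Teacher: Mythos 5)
Your approach is the same as the paper's: the paper proves this lemma by backward induction on the clique state $(d,k)$, tabulating expected numbers of $Y$- and $N$-matches for each state (Table~\ref{tab:clique}), observing that for small $p$ and $2<\pi<3$ every undecided node prefers $N$, and asserting that ``the table continues for cliques larger than size 6, as payoffs are shifting in favor of $N$ as the clique grows.'' Your value function $g(y,n,m)$, the one-step recursion, the cascade threshold at surplus $3$ (which matches the $d\geq 2$ column of the paper's table once the external $Y$-neighbor is accounted for), and your explicit computation $g(2,0,4)=2+7p-2p^2-O(p^3)$ are all correct and in fact somewhat more structured than what the paper writes down.

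There is, however, one step in your argument that is false as stated: the claim that $g(2,0,k-1)=2+O(p)$ \emph{uniformly in $k$}. It is not. With probability $p$ the first mover after the seeded deviation is $Y$-type, chooses $Y$ (this pushes the surplus to $3$ and ignites a full cascade by your own first structural fact), and contributes $k-1$ additional $Y$s; hence $g(2,0,k-1)\geq 2+p(k-1)$, i.e.\ the error term is $\Theta(pk)$, not $O(p)$. Your own $k=5$ computation already betrays this: the linear coefficient is $7=2k-3$, which grows with $k$. The conclusion survives, but the repair is needed: bound the probability of ever reaching surplus $3$ from surplus $2$ by the gambler's-ruin quantity $\frac{p}{1-p}$, and bound the non-cascade $Y$s by the expected number of $Y$-types, giving $g(2,0,k-1)\leq 2+\frac{2pk}{1-p}$. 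The governing inequality $g(2,0,k-1)<k-\pi$ then reads roughly $p<\frac{k-2-\pi}{2k}$, whose right-hand side is increasing in $k$, so it really is tightest at $k=5$ and a single choice of $p$ (depending on $\pi$) works for all $k\geq 5$. Without this uniform-in-$k$ bound, your ``larger cliques only pile on more $N$-votes'' is the same unproved assertion the paper itself makes; with it, your write-up would actually be more rigorous than the paper's.
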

              We prove this lemma in Appendix~\ref{app:council} by direct comparison of expected utilities.

              \begin{proof}[of Theorem~\ref{thm:councilstrategicbad}]
                     The council graph was chosen to facilitate analysis by simplification to more easily understood cliques. As such, we invoke Lemma~\ref{lem:council-behavior}, which proves the existence of cliques and settings of $p$ and $\pi$ which strongly favor $N$, in the sense that even if a common neighbor of the clique is guaranteed to choose $Y$, the bias towards $N$-preference results in a predetermined $N$-cascade.

                     Thus, even if a clever scheduler convinces the whole council to choose $Y$, we see by Lemma~\ref{lem:council-behavior} that, for some $p$ and $\pi$, the subclique chooses $N$.

The nodes in the council, being fully strategic, know any subclique neighbors they have are guaranteed $N$-neighbors.
By another application of Lemma~\ref{lem:council-behavior}, we see that all council nodes choose $N$\@ and thus, for some $p$ and $\pi$, any schedule is doomed to $0\%$ performance. \qed
              \end{proof}


\section{Strategic Outperforms Myopic}
       \label{sec:cloud}

By now it is natural to see how strategic agents' expectations of future $N$-preference lead to lower strategic performance than myopic performance.
As seen in Section~\ref{sec:clique-strategic-better}, there are also games where strategic agents give higher performance than myopic agents.
In this section we show the second half of our core result, that this difference can be as large as $(100-\epsilon)\%$.

       We prove this bound constructively by analyzing a special graph we call a \emph{cloud} graph. We first show that for certain parameters on the cloud graph it is possible to obtain strategic performance of 100\%.  Recall that no graph can achieve exactly 0\% or 100\% myopic performance, because the first myopic node always chooses its type.
We do, however, show that for some settings on the cloud, myopic performance approaches $0\%$ while strategic performance remains $100\%$, giving our desired bound.

       A cloud graph (Figure~\ref{fig:cloud_graph}) consists of two singular \emph{outer} vertices of
       degree $a$ and $b$ respectively, one singular \emph{inner} vertex of
       degree $a + b$, and two \emph{clouds} of vertices of respective size $a$ and $b$, with
       each vertex of degree two. Each of the outer vertices is connected
       to every vertex in a distinct cloud. The inner vertex is
       connected to every vertex in both clouds.
We call the cloud with $a$ vertices $A$ and the one with $b$ vertices $B$.

       \begin{figure}
         \centering
         \includegraphics[width=0.30\columnwidth]{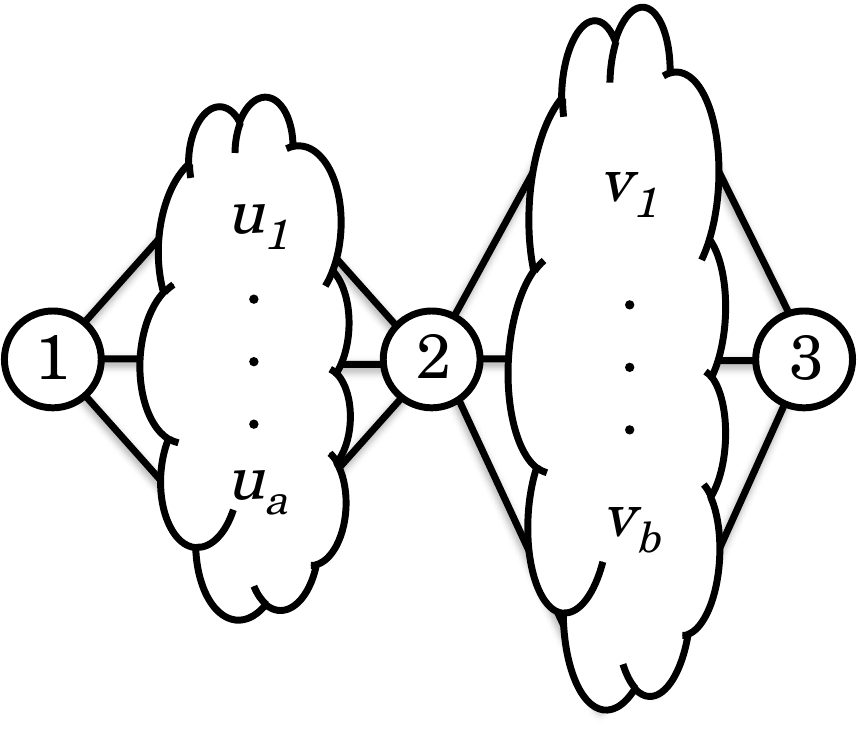}
         \caption{An example cloud graph.}
         \label{fig:cloud_graph}
       \end{figure}

        \begin{theorem}
        Fix arbitrary $\epsilon > 0$.  Then there exist parameters $a$, $b$, $p$, and $\pi$ such that strategic performance on the cloud graph is 100\% whilst the myopic performance is at most $\epsilon$.
        \label{thm:cloud-100}
        \end{theorem}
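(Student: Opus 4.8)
The plan is to fix parameters in the regime $1 < \pi < 2$, a small but positive type probability $p$, and two large clouds (for concreteness one may take $a = b \to \infty$), and to establish the two halves separately. For the strategic half I would exhibit a schedule that releases all $a+b$ cloud vertices before the three hubs $u_A, u_B, w$ and argue that the unique PBE (guaranteed by Theorem~\ref{prop:unique-spe}) is the all-$Y$ cascade. The engine is that the all-$Y$ configuration is \emph{self-enforcing}: since $\pi < 2$, a cloud vertex whose two neighbors both choose $Y$ strictly prefers $Y$ (gaining $2 > \pi$) even when its type is $N$; and since each cloud vertex has degree only two, a single unilateral deviation cannot flip the sign of either hub's neighborhood majority, so the deviator's two neighbors remain $Y$ and it forfeits both matches. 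Dually, a hub that is scheduled last and deviates from $Y$ against an all-$Y$ cloud forfeits on the order of its (enormous) degree in matches. I would package this as a backward induction over the schedule showing that, node by node, $Y$ is the strictly dominant continuation.

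For the myopic half I would show that the \emph{same} parameters force expected performance below $\epsilon$ under every adaptive schedule. Because the $a+b$ degree-two cloud vertices dominate the vertex count, it suffices to bound the expected fraction of them choosing $Y$. With $1 < \pi < 2$ a myopic degree-two vertex copies the majority of its neighbors only when \emph{both} have already committed to the same value, and otherwise takes its type; hence a cloud vertex is driven to $Y$ against its type only if both of its hub-neighbors have already chosen $Y$. Turning a hub to $Y$ myopically in turn requires a net surplus of at least two $Y$-decisions among its already-scheduled neighbors, and since each freshly scheduled cloud vertex independently takes type $Y$ only with probability $p < 1/2$, the probability that such a surplus is ever reached is governed by a negatively biased random walk and is $O((p/(1-p))^{2})$. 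Conditioning on the overwhelmingly likely event that no hub is ever turned, every cloud vertex simply takes its type, contributing expected $Y$-fraction $p$; the rare complementary event together with the three hubs contributes only a vanishing correction, so the myopic performance is at most $p + O((p/(1-p))^{2}) + O(1/(a+b))$.

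Combining the halves, I would choose $p$ small enough that $p + O((p/(1-p))^{2}) < \epsilon/2$ and then $a, b$ large enough to absorb the $O(1/(a+b))$ boundary term, which simultaneously keeps the strategic construction valid, yielding strategic performance $100\%$ and myopic performance at most $\epsilon$. The main obstacle is the strategic half, and specifically the \emph{selection} problem: the all-$N$ configuration is self-enforcing by exactly the same local argument as all-$Y$, so the crux is proving that backward induction actually resolves to all-$Y$ rather than all-$N$ (or a type-dependent outcome) despite the $N$-bias $p < 1/2$. This is where the inner vertex must do the work: because $w$ is adjacent to every one of the $a+b$ cloud vertices, one can try to schedule so that at the pivotal decision a $Y$-choice pins all cloud vertices (each then matching two $Y$-neighbors) whereas the $N$-alternative lets a $p$-fraction defect to their types, producing a match differential of order $p(a+b)$ that exceeds $\pi$ and forces $Y$ regardless of type. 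Propagating this single forced choice through the schedule so that it tips, rather than is tipped by, the pervasive $N$-preference is the delicate part of the argument.
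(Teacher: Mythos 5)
There is a genuine gap in the strategic half, and it is fatal to the approach as stated. You propose to schedule all $a+b$ cloud vertices \emph{before} the three singular vertices, in what is (as described) a fixed order. But the paper's Theorem~\ref{thm:non-adpative-p} shows that \emph{no nonadaptive schedule} can exceed performance $p$ in the strategic setting, so your schedule cannot possibly yield $100\%$; the paper explicitly remarks that the cloud result ``clearly requires adaptivity.'' Even setting that aside, you correctly identify the selection problem --- all-$N$ is self-enforcing by the same local argument as all-$Y$ --- but you do not resolve it, and with your ordering it resolves the wrong way: a cloud vertex deciding first, with both neighbors undecided and every type independently $N$ with probability $1-p>1/2$, anticipates that the hubs will later side with the (typically $N$-) majority, so backward induction drives the cloud to $N$. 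Your choice $a=b$ also cannot be repaired into the paper's construction, since the paper's key inequality $a(1-p)+\pi<bp$ is unsatisfiable when $a=b$ and $p<1/2$.

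The missing idea is the adaptive threat. The paper's schedule $S_{opt}$ puts the singular vertices \emph{first} and makes the relative order of vertices $2$ and $3$ contingent on vertex $1$'s choice: if $1$ chooses $Y$, schedule $2$ then $3$, which (by Lemma~\ref{lem:2-chooses-1}, using $bp>ap>\pi$) pins $c_2=c_1$ and hands $1$ all $a$ matches in cloud $A$; if $1$ chooses $N$, schedule $3$ then $2$, which (by Lemma~\ref{lem:3-chooses-type}, using $a(1-p)+\pi<bp$) makes $2$ follow $t_3$ and leaves cloud $A$ split with probability $p$, costing $1$ an expected $ap^2>\pi$ in matches. The asymmetry $b\gg a$ is what lets the scheduler control $2$ through cloud $B$ while punishing $1$ through cloud $A$. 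The clouds are scheduled last precisely so that their behavior is a deterministic function of the already-decided singular vertices. Your myopic half, by contrast, is essentially the paper's Lemma~\ref{lem:cloud-small-myopic} (a biased-random-walk bound of roughly $p/(1-p)$ on a cloud ever reaching a $Y$ majority, giving performance tending to $0$ as $p\to 0$), and that part is sound.
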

        This follows from Lemma~\ref{lem:cloud-onehundred} and Lemma~\ref{lem:cloud-small-myopic}, proved in the remainder of this section.

       \paragraph*{\textbf{100\% Strategic Performance}}
              We give sufficient conditions for obtaining 100\% performance with strategic agents in a cloud graph. We refer to the singular vertices, from left to right, as $1, 2$, and $3$, and assume that $a < b$.

              An optimal schedule, $S_{opt}$, is the following:

              \begin{algorithm}[H]
                     \nl Schedule 1. \\
                     \If{1 chooses $Y$}{Schedule 2 followed by 3.}
                     \Else{Schedule 3 followed by 2.}
                     \nl Schedule all cloud vertices in any order.
              \end{algorithm}

              An overview of the proof of optimality is as follows. Scheduling agent~2 before agent~3 guarantees that 2, and thus all nodes in $A$, will match the choice of 1 (Lemma~\ref{lem:2-chooses-1}). On the other hand, scheduling agent~3 before agent~2 gives some positive probability that the nodes in $A$ choose their type (Lemma~\ref{lem:3-chooses-type}). This outcome results in lower utility for 1. Thus the adaptive schedule can be used to incentivize 1 to choose $Y$ through threat of punishment for choosing $N$\@. We are able to show that, for large enough cloud sizes, threat of punishment to 1 for choosing $N$ is enough to convince it to choose $Y$, giving 100\% performance. This is given formally below.
              \begin{lemma}
                     \label{lem:cloud-onehundred}
                     If cloud sizes satisfy $a (1-p) + \pi < b p$ and $ap^2 > \pi$, then, under the schedule $S_{opt}$, 1, and thus all agents, will always choose $Y$\@.
              \end{lemma}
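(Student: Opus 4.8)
The plan is to solve the induced game by backward induction down each of the two branches of $S_{opt}$ and then compare agent~1's expected utility for its two choices $c_1 \in \{Y,N\}$, showing $Y$ is strictly better for both preference types $t_1$. The only neighbors of agent~1 are the $a$ vertices of cloud $A$, so its payoff equals the number of those vertices that ultimately match $c_1$, plus $\pi$ if $c_1 = t_1$; the whole argument therefore reduces to counting matches in $A$ under each branch. As a preliminary I would record the behavior of a single cloud vertex $v$ once both of its singular neighbors have decided: if they agree, $v$ matches them, and if they split, $v$ chooses its type $t_v$ (matching two agreeing neighbors beats the unit type bonus whenever $\pi < 2$, which I take to hold in the regime of interest).

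For the branch $c_1 = Y$ (schedule $2,3$, then the clouds), I would invoke Lemma~\ref{lem:2-chooses-1} to get $c_2 = Y$ and that every vertex of $A$ matches, so all of $A$ chooses $Y$. I would then check the symmetric facts on the other side: with $c_2 = Y$ fixed, agent~3 compares the $b$ matches it secures from $B$ when $c_3 = Y$ against the $b(1-p)$ matches it gets when $c_3 = N$ (cloud $B$ then splits), and since $bp > \pi$ follows from $a(1-p)+\pi < bp$, agent~3 chooses $Y$ and all of $B$ follows. Hence this branch is a full $Y$-cascade and $U_1(c_1 = Y) = a + \pi\,\mathbbm{1}(t_1 = Y)$, a deterministic quantity; this is also what yields the ``and thus all agents'' half of the claim once $c_1 = Y$ is established.

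The crux is the branch $c_1 = N$ (schedule $3,2$, then the clouds), where I must determine the joint behavior of agents $3$ and $2$. First I would argue that agent~3 simply plays its type: whichever way $3$ decides, agent~2 follows it on cloud $B$ and all of $B$ matches, so $3$'s cloud-$B$ payoff is $b$ either way and the $\pi$ term breaks the tie toward $t_3$; thus $c_3 = Y$ with probability exactly $p$. Next, conditioning on $c_3$, I would pin down agent~2 using the first hypothesis: cloud $A$ always pulls $2$ toward $N$ (all $a$ of them match an $N$ choice, versus only $ap$ matching a $Y$ choice), but when $c_3 = Y$ cloud $B$ pulls $2$ toward $Y$ strongly enough that the net comparison is governed by $bp - a(1-p) > \pi$, i.e. exactly $a(1-p)+\pi < bp$; so $c_3 = Y$ forces $c_2 = Y$, while $c_3 = N$ leaves $c_2 = N$. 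Consequently cloud $A$ splits (each vertex matching $N$ with probability $1-p$) when $c_3 = Y$, and matches $N$ wholesale when $c_3 = N$, giving expected match count $p\cdot a(1-p) + (1-p)\cdot a = a(1-p^2)$ and $U_1(c_1 = N) = a(1-p^2) + \pi\,\mathbbm{1}(t_1 = N)$.

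Comparing the branches finishes the argument. For $t_1 = Y$ we have $a + \pi > a(1-p^2)$ trivially, and for $t_1 = N$ the comparison $a > a(1-p^2) + \pi$ is exactly the second hypothesis $ap^2 > \pi$; so agent~1 chooses $Y$ regardless of its type, and by the branch-$Y$ analysis every other agent then chooses $Y$ as well. The main obstacle I anticipate is the branch-$c_1=N$ subgame: correctly resolving the coupled best responses of agents $3$ and $2$ — showing that agent~3 is effectively indifferent on $B$ and so plays its type, and that the two cloud pulls on agent~2 balance precisely at the threshold $a(1-p)+\pi < bp$ — is what produces the $a(1-p^2)$ expected match count and hence the clean $ap^2 > \pi$ condition. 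The separate Lemma~\ref{lem:3-chooses-type} isolates exactly the ``$A$ chooses its type with positive probability'' phenomenon driving this punishment.
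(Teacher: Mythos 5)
Your proposal is correct and follows essentially the same route as the paper: backward induction down the two branches of $S_{opt}$, using the behavior of agents 2 and 3 (the content of Lemmas~\ref{lem:2-chooses-1} and \ref{lem:3-chooses-type}, which you re-derive inline with the same utility comparisons) to compute agent~1's expected utility as $a$ for choosing $Y$ versus $\pi + a(1-p^2)$ for choosing $N$, reducing the conclusion to $ap^2 > \pi$. Your explicit flagging of the implicit $\pi < 2$ requirement for the cloud-vertex behavior matches a caveat the paper states only after the lemma.
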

              \begin{proof}
              We are able to punish 1 for choosing $N$ because scheduling 3 first gives a $p^2$ chance of the nodes in $A$ choosing their type, while scheduling 2 first guarantees all nodes in $A$ choose $c_1$, a more desirable outcome to 1. We use several lemmas outlining the behavior of agents 2 and 3, proved below.

              Assume 1 is $N$-type. Being $Y$-type only increases 1's payoff for choosing $Y$\@. We show below that 1's utility for choosing $Y$ is higher than its utility for choosing $N$ when $a (1-p) + \pi < b p$ and $ap^2 > \pi$.

              If 1 chooses $N$, then we schedule 3, followed by 2. In this case, Lemma~\ref{lem:3-chooses-type} shows that $c_2 = c_3 = t_3$. With probability $1-p$ all nodes in $A$ choose $N$, and with probability $p$ $A$ is split. 1's expected utility is $\pi + a (1-p) + a (1-p) p$.

              If 1 chooses $Y$, then we schedule 2, followed by 3. In this case, Lemma~\ref{lem:2-chooses-1} shows that $c_2 = 1$. 1's expected utility is $a$.

              Then 1 will choose $Y$ as long as $a > \pi + a(1-p)(1+p)$. Or, equivalently, $a p^2 > \pi$, which is true by assumption. Once $1$ chooses $Y$, the schedule leads to all remaining nodes choosing $Y$. \qed
              \end{proof}

              We must pick appropriate cloud sizes (depending on $\pi$ and $p$) and have $\pi < 2$ for the theorem to be true. This is possible for any $p$ by selecting large enough $a$ and even larger $b$. The following lemmas detail the behaviors of the clouds and agents 2 and 3 used in the proof of Lemma~\ref{lem:cloud-onehundred}.

              \begin{lemma}
                     The behavior of an unscheduled cloud neighbored by two decided singular agents is completely determined by the singular agents' choices.
                     If their choices are different, then every cloud agent will choose its type and an expected $p$ fraction of the cloud agents will choose $Y$. In this case we say that the cloud has been \emph{split}.
                     If they make the same choice $c$, all cloud agents will choose $c$.
              \end{lemma}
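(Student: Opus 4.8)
The plan is to exploit the degree structure of the cloud graph. Every cloud vertex has degree exactly two, and its two neighbors are both singular vertices — the inner vertex and the cloud's own outer vertex — while no two cloud vertices are adjacent. So the first thing I would record is the reduction this forces: once the two singular neighbors of a cloud have decided, a cloud agent's utility $u_i = \pi\,\mathbbm{1}(c_i = t_i) + \abs{\{j \in \nb(i) : c_j = c_i\}}$ depends only on its own type, its own choice, and the two already-fixed singular choices. The choices of the other cloud agents never enter its utility, and since both of its neighbors are already decided, its choice never affects any agent it cares about either. Hence each cloud agent's strategic best response collapses to a one-shot maximization against its two decided neighbors, there is no game among the cloud agents, and the cloud's behavior is a product of independent per-agent decisions — which is exactly what ``completely determined by the singular agents' choices'' should be read to mean.

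With the reduction in place the argument is a two-case comparison of utility values. In the case where the two singular neighbors disagree, choosing $Y$ matches exactly the one neighbor who chose $Y$ and choosing $N$ matches exactly the one who chose $N$, so the neighbor-matching term equals $1$ irrespective of the agent's choice. The only differentiator is the type bonus, and since $\pi > 0$ each cloud agent strictly prefers to play its own type $t_i$. Because the types are i.i.d.\ with $\Pr[t_i = Y] = p$, the number of $Y$-choosers is binomial and the expected fraction choosing $Y$ is exactly $p$; this is the split case.

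In the case where both singular neighbors made the same choice $c$, playing $c$ matches both neighbors (matching term $2$) whereas playing $\overline{c}$ matches neither (matching term $0$). A cloud agent of type $c$ trivially prefers $c$. A cloud agent of type $\overline{c}$ compares utility $2$ for playing $c$ against $\pi$ for playing $\overline{c}$, and here I would invoke the section's standing hypothesis $\pi < 2$ (noted in the discussion following Lemma~\ref{lem:cloud-onehundred}) to conclude $2 > \pi$, so it too chooses $c$. Thus all cloud agents choose $c$, completing both cases.

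The only genuine subtlety, and the step I would be most careful about, is this reliance on $\pi < 2$ in the same-choice case together with the tie-breaking convention: at $\pi = 2$ a type-$\overline{c}$ agent would be exactly indifferent, and the assumption that an indifferent agent plays its own type would then break the ``all choose $c$'' conclusion. I would therefore state the strict inequality $\pi < 2$ explicitly and flag that the lemma is read under that standing hypothesis. Everything else follows immediately from comparing the two utility values, so I expect no further difficulty.
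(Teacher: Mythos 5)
Your proof is correct. The paper in fact states this lemma without any proof at all (it carries no label and nothing for it appears in the appendix), evidently treating it as immediate; your argument is exactly the one the authors must have in mind: a cloud vertex has degree two with both neighbors singular and already decided, so its decision reduces to a one-shot comparison of $1+\pi\mathbbm{1}(c_i=t_i)$ versus $1+\pi\mathbbm{1}(c_i\neq t_i)$ in the split case and $2$ versus $\pi\mathbbm{1}(t_i=\overline{c})$ in the agreeing case. Your one substantive observation --- that the agreeing case genuinely requires $\pi<2$ (strictly, given the tie-breaking convention that an indifferent agent plays its type), a hypothesis the lemma statement omits but which is a standing assumption of the section --- is a real and worthwhile point: as written the lemma is false for $\pi\geq 2$, and making the hypothesis explicit is an improvement on the paper.
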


              Knowing the cloud behavior, we can characterize the behavior of the case where agent~3 is scheduled and then agent~2 is scheduled.
              \begin{lemma}
                     \label{lem:3-chooses-type}
                     When $a (1-p) + \pi < b p$, if agent~3 is scheduled to choose and 2 has not been scheduled yet, $c_3 = t_3$. Then, when 2 is scheduled next, $c_2 = c_3$.
              \end{lemma}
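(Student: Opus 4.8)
The plan is to proceed by backward induction on the two remaining singular agents, using the cloud-behavior lemma to collapse each cloud into a deterministic (in expectation) contribution once both of its singular neighbors have decided. Since agent~3 is scheduled before agent~2, I would first solve agent~2's subgame as a function of $c_3$ (recall that this branch of $S_{opt}$ is reached only after agent~1 chose $N$, so $c_1 = N$ throughout), and then feed that back to pin down agent~3's own choice.

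First I would compute agent~2's expected utility for each of its two options. Agent~2 is the inner vertex, adjacent to every vertex of both clouds, so its decision together with the fixed $c_1 = N$ and the given $c_3$ determines each cloud's response. By the cloud-behavior lemma, cloud $A$ (neighbors $1,2$) contributes $a$ matches to agent~2 if $c_2 = N$ and an expected $pa$ matches if $c_2 = Y$ (a split), while cloud $B$ (neighbors $2,3$) contributes $b$ matches if $c_2 = c_3$ and, when split, an expected $pb$ matches if $c_2 = Y$ or $(1-p)b$ if $c_2 = N$. Tabulating the four combinations and adding the type bonus $\pi\mathbbm{1}(c_2 = t_2)$, I would check that $c_2 = c_3$ is optimal for both values of $c_3$.

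The main obstacle, and the only place the hypothesis is genuinely used, is the tightest of these comparisons: $c_3 = Y$ together with $t_2 = N$. Here matching ($c_2 = Y$) yields $pa + b$, whereas deviating ($c_2 = N$) yields $a + (1-p)b + \pi$, so matching is preferred exactly when $bp > a(1-p) + \pi$, which is the lemma's hypothesis; a type $t_2 = Y$ only reinforces matching by adding the bonus $\pi$ to the matching branch. For the case $c_3 = N$ I would observe that the hypothesis forces $bp > \pi$, whence $(1-p)(a+b) > (1-p)b > pb > \pi$, making $c_2 = N$ strictly dominant. Thus $c_2 = c_3$ regardless of $t_2$ and regardless of which value agent~3 picked.

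Finally I would turn to agent~3, which neighbors only cloud $B$. Having shown that agent~2 matches agent~3 no matter what agent~3 does, agent~3 foresees that both singular neighbors of cloud $B$ (itself and agent~2) will agree, so by the cloud-behavior lemma all $b$ cloud-$B$ vertices match $c_3$, and agent~3 collects a full $b$ neighbor-matches under either choice. The cloud term is therefore independent of $c_3$, leaving only the type bonus $\pi\mathbbm{1}(c_3 = t_3)$, which agent~3 captures by setting $c_3 = t_3$. Since $\pi > 0$ this is a strict preference, giving both conclusions of the lemma.
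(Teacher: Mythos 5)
Your proposal is correct and takes essentially the same route as the paper: both identify $t_2=N$, $c_1=N$, $c_3=Y$ as the binding case, compare agent~2's payoffs $a+(1-p)b+\pi$ versus $pa+b$, and observe that the hypothesis $a(1-p)+\pi<bp$ is exactly the condition for matching. You are somewhat more explicit than the paper in dispatching the $c_3=N$ case and in deriving $c_3=t_3$ from the fact that agent~3's cloud-$B$ payoff is $b$ under either choice, but these are the same implicit steps.
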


              \begin{lemma}
                     \label{lem:2-chooses-1}
                     If agent~2 is scheduled to choose after 1 but before 3, it will choose $c_1$ if $bp > ap > \pi$. When 3 is scheduled, it will match 2.
              \end{lemma}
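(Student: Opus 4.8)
The plan is to resolve the lemma by backward induction, peeling off the schedule from its tail: because agent~2 is strategic, its choice is determined by what it anticipates agent~3 and then the two clouds will do, so I would first fix agent~3's behavior and then substitute it into agent~2's utility calculation.

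First I would analyze agent~3. At the moment it is scheduled its only neighbors are the $b$ vertices of cloud~$B$, all still undecided, while agent~2 has already committed to some $c_2$. After agent~3 decides, both singular neighbors of cloud~$B$ (agents~2 and~3) are fixed, so the cloud-behavior lemma governs the cloud: if $c_3 = c_2$ the whole cloud copies $c_2$ and agent~3 collects all $b$ matches, whereas if $c_3 \neq c_2$ the cloud splits and agent~3 expects only $bp$ matches when $c_3 = Y$ or $b(1-p)$ matches when $c_3 = N$. Writing out the expected utility in each type/choice combination, matching $c_2$ beats deviating exactly when the matching advantage---$b - b(1-p) = bp$ in the $c_2 = Y$ case and $b - bp = b(1-p)$ in the $c_2 = N$ case---exceeds the type premium $\pi$. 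Since $bp \leq b(1-p)$ for $p < 1/2$, the binding requirement is $bp > \pi$, which holds by hypothesis, so $c_3 = c_2$ regardless of $t_3$; this is the second assertion of the lemma.

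Next I would turn to agent~2, now equipped with the fact that agent~3 will mirror $c_2$ and hence, by the same-choice case of the cloud-behavior lemma, all of cloud~$B$ will choose $c_2$. This contributes a constant $b$ matches to agent~2 independent of its own decision, so it cancels from the comparison. The only choice-sensitive term is cloud~$A$, whose singular neighbors are agent~1 (committed to $c_1$) and agent~2: matching $c_1$ yields the full $a$ matches, while deviating splits cloud~$A$ and yields only $ap$ or $a(1-p)$ expected matches. Comparing agent~2's expected utility for $c_2 = c_1$ against $c_2 \neq c_1$, matching wins whenever the cloud-$A$ advantage---$a - a(1-p) = ap$ when $c_1 = Y$, and $a - ap = a(1-p)$ when $c_1 = N$---beats $\pi$; since $ap \leq a(1-p)$, the hypothesis $ap > \pi$ settles both cases, giving $c_2 = c_1$ regardless of $t_2$.

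The main obstacle is careful bookkeeping rather than any conceptual leap: I must invoke the cloud-behavior lemma only after both of a cloud's singular neighbors have decided, so the backward-induction ordering (agent~3 before cloud~$B$, agent~2 before cloud~$A$) must be respected, and the ``expected $p$ fraction'' from a split cloud must be carried through the utility comparison as an expectation. The delicate check is the worst case for matching, where the agent's own type premium $\pi$ pulls against the match; the strict inequalities $ap > \pi$ and $bp > \pi$, both supplied at once by the chain $bp > ap > \pi$, are exactly what overcome this pull.
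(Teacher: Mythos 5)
Your proposal is correct and follows essentially the same route as the paper's proof: first establish that agent~3 matches $c_2$ via the comparison $b$ versus $b(1-p)+\pi$ and $bp+\pi$ (binding condition $bp > \pi$), then use the cloud-behavior lemma to reduce agent~2's decision to the cloud-$A$ term, where the worst case (deviation aligned with own type) is settled by $ap > \pi$. The only difference is that you spell out both deviation cases where the paper identifies only the binding one; the substance is identical.
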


Here, the scheduler persuades cloud agents to adopt the minority preference through the threat of unfavorable adaptive sequencing.
All nonadaptive schedules have performance bounded by $p$ (Theorem~\ref{thm:non-adpative-p}), and thus the above result clearly requires adaptivity.
In fact, the difference in performance for adaptive and nonadaptive scheduling can be arbitrarily large for strategic agents (Corollary~\ref{cor:adaptive-greaterthan-nonadaptive}).

       \paragraph*{\textbf{Near 0\% Myopic Performance}}
              \label{sec:cloud-bound}
              By Lemma~\ref{lem:cloud-onehundred}, we can obtain 100\% adoption with strategic agents for any $p$ if we pick cloud sizes $a$ and $b$ large enough. The proportion of myopic adoption, however, is some polynomial of $p$, and thus can be made arbitrarily small. The combination of these two results gives us a tight bound on the extent to which strategic performance can be greater than myopic performance.

              \begin{lemma}
              \label{lem:cloud-small-myopic}
        Fixing $\pi < 2$, the  myopic performance is bounded by $p \loosecloudbound + [1-\loosecloudbound]$.  In the limit of $p \rightarrow 0$, the proportion of myopic adoption in the cloud graph also goes to 0.
        \end{lemma}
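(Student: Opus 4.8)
The plan is to exhibit a low-probability ``good'' regime in which the scheduler might conceivably drive the whole graph to $Y$, and to show that in the complementary ``bad'' regime, which occurs with probability at least $\loosecloudbound$, no agent can ever be forced away from its type toward $Y$, so at most a $p$-fraction adopts. Write $n=a+b+3$ for the number of agents, let $L=\loosecloudbound$, and recall that performance is the expected fraction choosing $Y$. First I record the myopic forcing rules on this graph for $\pi<2$: a cloud vertex has degree two, so it is forced to $Y$ only if both of its singular neighbours have already chosen $Y$; a singular vertex is forced to $Y$ only when the $Y$-surplus (number of decided $Y$-neighbours minus number of decided $N$-neighbours) among its cloud neighbours exceeds $\pi$. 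Crucially, before a singular vertex has decided, each of its cloud neighbours has at most one decided neighbour and hence simply chooses its own type, so the surplus that a singular vertex faces is the running total of an i.i.d.\ sequence of $\pm1$ steps, $+1$ with probability $p$ and $-1$ with probability $1-p$.

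Next I define the bad event $\mathcal{B}$: all three singular agents have type $N$, and for each of the two clouds the running $Y$-surplus of its (revealed, type-determined) vertices never reaches $+1$. Because the scheduler is non-anticipating, the types it uncovers within a cloud form a simple random walk with downward drift, so by the gambler's-ruin identity the probability that such a walk ever reaches $+1$ is at most $p/(1-p)$, whence the probability it never does is at least $1-\tfrac{p}{1-p}$; truncating to a finite cloud can only increase this. Since the three singular types and the two clouds are independent,
\[\Pr[\mathcal{B}]\ \ge\ (1-p)^3\Bigl(1-\tfrac{p}{1-p}\Bigr)^2\ =\ L.\]

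I then argue that on $\mathcal{B}$ no agent is ever forced to $Y$. Each cloud surplus stays $\le 0$, so the surplus faced by either outer agent stays below $\pi$, and, since the inner agent's surplus is the sum of the two cloud surpluses, it too stays $\le 0<\pi$; being $N$-type, all three singular agents therefore choose $N$. Consequently no cloud vertex ever sees two $Y$-neighbours, so none is forced to $Y$. (When $\pi<1$ a cloud vertex adjacent to a decided $N$ is instead forced to $N$, which only lowers the surplus and reinforces $\mathcal{B}$.) Hence on $\mathcal{B}$ every $Y$-choice comes from a $Y$-type agent, giving $\#Y\le\#\{i:t_i=Y\}$.

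Finally I combine these. The event $\mathcal{B}$ is decreasing and $\#\{i:t_i=Y\}$ is increasing in the i.i.d.\ type variables, so by the FKG (Harris) inequality they are negatively correlated, giving $E\bigl[\#\{i:t_i=Y\}\cdot\mathbbm{1}_{\mathcal{B}}\bigr]\le np\,\Pr[\mathcal{B}]$. Therefore
\[E[\#Y]\ \le\ E\bigl[\#\{i:t_i=Y\}\,\mathbbm{1}_{\mathcal{B}}\bigr]+n\,\Pr[\neg\mathcal{B}]\ \le\ n\bigl(1-(1-p)\Pr[\mathcal{B}]\bigr)\ \le\ n\bigl(1-(1-p)L\bigr),\]
and dividing by $n$ gives performance at most $1-(1-p)L=pL+(1-L)=p\loosecloudbound+[1-\loosecloudbound]$, holding for every schedule and hence for the optimal one; as $p\to0$ we have $L\to1$, so the bound tends to $0$. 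The delicate points, which I expect to be the main obstacle, are making the non-anticipation/exchangeability argument rigorous so that each per-cloud surplus is genuinely a drift-down random walk under an \emph{arbitrary adaptive} schedule, and justifying the monotonicity of $\mathcal{B}$ required for FKG; both are cleanest in a representation that assigns the i.i.d.\ types to revelation ``slots'' rather than to vertices.
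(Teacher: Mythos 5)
Your proposal is correct and follows essentially the same route as the paper: condition on the event that all three singular agents are $N$-type (probability $(1-p)^3$) and that neither cloud's type-walk ever attains a $Y$-surplus of $+1$ (probability at least $1-\frac{p}{1-p}$ each, by gambler's ruin), on which only $Y$-type agents can choose $Y$, yielding the bound $p \loosecloudbound + [1-\loosecloudbound]$. Your write-up is somewhat more careful than the paper's---you make the adaptive-schedule exchangeability and the negative correlation (via FKG) explicit where the paper simply asserts the conditional adoption rate is at most $p$---but the decomposition and the key probabilistic estimates are identical.
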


  \begin{proof}
              We bound the myopic performance by a polynomial in $p$ for $\pi < 2$.

              Denote the current difference between the number of agents in cloud $A$ ($B$) who have chosen $Y$ and those who have chosen $N$ by $d_A$ ($d_B$). With probability $(1-p)^3$, all singular agents are type $N$\@. A singular $N$-type agent will choose $Y$ only if $d_A \geq 1$ or $d_B \geq 1$. A cloud agent will choose its type or $N$ unless at least one of the singular agents has already chosen $Y$\@.

              We bound the probability of a singular agent choosing $Y$ by noticing that the probability of a cloud ever achieving a $Y$ majority by agents choosing their type is no greater than $\frac{p}{1-p}$, a result from the mathematics of biased random walks. Thus, the probability that all cloud nodes choose their type (or $N$) is at least $\loosecloudbound$. This probability, which we denote $q$, tends to 1 as $p \rightarrow 0$. The expected proportion of $Y$-adoptions is no greater than $p q + (1 - q)$, which goes to 0 as $p \rightarrow 0$.
        \qed
       \end{proof}


\section{Miscellaneous Results}
       \label{sec:general-results}
       Having analyzed behavior of cascades on specific classes of graphs, we aim to give properties of cascade behavior for arbitrary games, regardless of $G$, $p$, or $\pi$.

       We first address the issue of the multiplicity of PBE\@.
       The existence or uniqueness of PBE is not guaranteed for all classes of games.
       The possibility of zero or multiple PBE would render some of our key concepts, such as the performance of a schedule, unclear.
       Fortunately, as we show below, our assumption that agents consistently choose their type when indifferent between options always results in the selection of a unique PBE\@. This follows from a simple backward induction argument.
       We also give a technique for relaxing this behavioral assumption but keeping the unique PBE property:
       \begin{theorem}
              \label{prop:unique-spe}
              If agents are never indifferent between choices, or always resolve any indifference in a consistent way---by choosing the same option whenever they are in the same situation---then their PBE behavior is uniquely defined.
       \end{theorem}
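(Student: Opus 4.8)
The plan is to reduce the claim to the standard uniqueness-of-backward-induction argument for finite games, exploiting the structural observation already made in Section~\ref{sec:model}: since each agent moves exactly once and types are independent, every node of the game tree is a singleton information set, so PBE coincides with backward induction. It then suffices to show that backward induction, run with a deterministic tie-breaking rule, pins down a unique choice at every node. Note that the two hypotheses of the theorem can be handled uniformly, since ``never indifferent'' is the special case of a consistent rule in which the rule is never invoked.

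First I would set up an induction on the number of agents remaining to move in a subgame. The game tree is finite: there are $n$ agents, each choosing once between $Y$ and $N$, so the tree has depth $n$ and branching factor~$2$. For the base case, consider any subgame with a single agent left, say agent~$i$. All of $i$'s neighbors have already decided, so the utilities $u_i(t_i, Y, \mathbf{c}_{-i})$ and $u_i(t_i, N, \mathbf{c}_{-i})$ are fully determined numbers. If they differ, $i$'s choice is the strict maximizer; if they coincide, $i$'s choice is fixed by the consistent tie-breaking rule, which by hypothesis depends only on $i$'s situation. Either way the choice is unique.

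For the inductive step I would assume that in every subgame with at most $k$ undecided agents the behavior of all agents---as a function of their realized types and the history---is uniquely determined. Take a subgame with $k+1$ undecided agents and let $i$ be the next to move. For each candidate $c_i \in \{Y,N\}$, the continuation is a subgame with $k$ undecided agents, whose play is uniquely determined by the inductive hypothesis. Hence $i$'s expected end-of-game utility for each $c_i$---an expectation over the independent types of the undecided agents, with future choices substituted by their uniquely determined continuation values---is a well-defined number. Agent~$i$ selects the choice of higher expected utility, again breaking any tie by the fixed rule, so its behavior at this node is unique; applying the inductive hypothesis to the two continuations gives uniqueness throughout the subgame.

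I do not expect a deep obstacle here---this is the classical Zermelo/Kuhn argument---so the real care lies in the definitions rather than the logic. The most delicate point is the expected-utility computation in the inductive step: one must phrase the inductive hypothesis as uniqueness of the continuation \emph{as a function of the type profile and history}, so that substituting continuation values and taking the expectation over undecided agents' types is legitimate. This is exactly where independence of types and the singleton-information-set property are used, and it is also where the meaning of ``same situation'' must be fixed (the decided agents' choices, the schedule going forward, and the mover's own type). Finally I would observe that the paper's default convention---an indifferent agent chooses $t_i$---is a consistent rule in this sense, since it depends only on $i$'s type, so the general theorem specializes to the uniqueness asserted for our model.
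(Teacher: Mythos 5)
Your proposal is correct and follows essentially the same route as the paper: the paper's proof simply observes that the game is a finite extensive-form game with perfect information and cites the standard backward-induction uniqueness result (Mas-Colell et al., Prop.~9.B.2), noting that a consistent tie-breaking rule preserves uniqueness. You have merely unpacked that citation into the explicit Zermelo/Kuhn induction, which is a faithful (and somewhat more self-contained) rendering of the same argument.
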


       The following theorem shows that our assumption of indifference can be avoided, while keeping the same cascade outcome, by slightly adjusting $\pi$.
       \begin{theorem}
              \label{prop:remove-indifference}
              Given a game $Q = (G, p, \pi)$, let $P$ be the performance under an adaptive schedule $S$ with the assumption that a node always chooses its type if it is indifferent between $Y$ and $N$.
              Then there exists an $\epsilon$ such that $Q' = (G, p, \pi' = \pi + \epsilon)$ also achieves performance $P$ under $S$, and under $S$ no node is indifferent between choices.
       \end{theorem}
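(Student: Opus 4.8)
The plan is to run a backward induction over the finite game tree, exploiting the fact established earlier that PBE here coincides with backward induction (Theorem~\ref{prop:unique-spe}), and to show that a small \emph{increase} in $\pi$ never reverses a strict preference while always breaking a tie in favor of the agent's own type---which is exactly the convention used to define $P$.

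First I would isolate the single structural observation that drives everything. Fix any decision node at which agent $i$ moves, and freeze the continuation behavior of all later agents. Conditioned on $i$'s known type $t_i$ and on the history, $i$'s expected utilities satisfy $U_Y - U_N = \pm \pi + (\mu_Y - \mu_N)$, where the sign is $+$ when $t_i = Y$ and $-$ when $t_i = N$, and where $\mu_Y, \mu_N$ are the expected numbers of neighbors matching $i$ computed from the frozen continuation (and hence do not depend on $\pi$ directly). The crucial point is that the coefficient of $\pi$ is positive for exactly the choice $c_i = t_i$: raising $\pi$ strictly increases the relative attractiveness of choosing one's type, which is precisely the direction in which the tie-breaking rule resolves indifference. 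So if I can keep the continuation frozen while perturbing $\pi$ upward, every previously-indifferent agent will now strictly prefer its type, matching $P$'s convention while eliminating the indifference.

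Then I would carry out the induction from the leaves up, ordering decision nodes by the number of remaining movers. The inductive hypothesis is that for every strict descendant of the current node there is a positive threshold below which, under $\pi + \epsilon$, behavior is identical to the $\pi$-with-tie-breaking PBE and no agent is indifferent. For $\epsilon$ below the minimum of these finitely many descendant thresholds, the continuation after either of $i$'s choices is unchanged, so $\mu_Y, \mu_N$ are the same constants under $\pi$ and under $\pi + \epsilon$. I then compare two cases: at nodes where $i$ had a strict preference under $\pi$, the gap $\lvert U_Y - U_N\rvert$ is a fixed positive number, so keeping $\epsilon$ strictly below that gap preserves both the strict inequality and the choice; at nodes where $i$ was indifferent, any $\epsilon > 0$ makes $U_Y - U_N = \pm\epsilon \neq 0$ with sign favoring $t_i$, reproducing the tie-break choice and removing indifference. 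Setting the node's threshold to the minimum of the descendant thresholds and its own positive strict gaps completes the step.

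The main obstacle is the circular dependence between behavior and $\pi$: the constants $\mu_Y, \mu_N$ are themselves defined through a continuation behavior that depends on $\pi$. Backward induction is exactly what breaks this circularity---by the time I analyze a node, its continuation is already pinned down by the induction---and finiteness of the game tree guarantees that the minimum over all nodes of the finitely many strict gaps (together with inherited thresholds) is a single positive $\epsilon^*$. For any $0 < \epsilon < \epsilon^*$, the full strategy profile, viewed as a map from type realizations to choices, is identical to the one defining $P$; hence the induced distribution over outcomes, and therefore the expected number of $Y$-choices under $S$, is unchanged, giving performance $P$ with no node ever indifferent. A minor point I would handle cleanly is reachability: I run the argument at every node of the tree rather than only reachable ones, since backward induction defines behavior everywhere and the reachable set is in any case preserved.
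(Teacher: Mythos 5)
Your proposal is correct and takes essentially the same approach as the paper: choose $\epsilon$ strictly below the minimal strict utility gap over the finitely many agent--situation pairs, observe that previously strict preferences survive, and note that raising $\pi$ pushes every previously indifferent agent strictly toward its own type, which reproduces the tie-breaking convention. Your backward-induction scaffolding makes explicit the point the paper's proof leaves implicit (that the continuation payoffs $\mu_Y,\mu_N$ are well defined and unchanged when the perturbation is small), but the underlying argument is the same.
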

       The proof, in Appendix~\ref{app:misc}, simply chooses $\epsilon$ less than the smallest utility difference.

       We also prove that increasing $p$ alone can never decrease the performance of the optimal schedule. The main ingredient in the proof is a coupling argument.  This monotonicity is not observed for $\pi$ or $n$.%
\footnote{An example of non-monotonicity can be found in Figure~\ref{fig:clique-simulation-2}.}

       \begin{theorem}
              \label{prop:monotonic-p}
              For any two games, $Q = (G, p, \pi)$ and $Q' = (G, p', \pi)$, with $0\nobreak <\nobreak p\nobreak <\nobreak p'\nobreak <\nobreak .5$, the performance of any nonadaptive schedule $S$ for $Q$ is weakly worse than the performance of $S$ for $Q'$.
       \end{theorem}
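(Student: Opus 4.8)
The plan is to prove the inequality by a monotone coupling of the random type assignments across the two games, and then to show that under this coupling the strategic play in $Q'$ pointwise dominates that in $Q$. Concretely, for each node $i$ draw an independent uniform variable $U_i$ on $[0,1]$, and in $Q$ set $t_i = Y$ iff $U_i < p$ while in $Q'$ set $t_i = Y$ iff $U_i < p'$. Since $p < p'$, under this coupling every $Y$-type in $Q$ is also a $Y$-type in $Q'$; writing $N \prec Y$ and extending this order pointwise to type vectors, we have $\mathbf{t}^Q \preceq \mathbf{t}^{Q'}$ on every realization of the $U_i$. The goal is then to show, deterministically for each such realization, that the set of agents choosing $Y$ under $S$ in $Q$ is contained in the corresponding set for $Q'$; taking expectation over the shared randomness gives $\mathrm{performance}(Q) \le \mathrm{performance}(Q')$. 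Here it is crucial that $S$ is \emph{nonadaptive}: the $k$-th node to move is the same in both runs regardless of history, so the two executions stay aligned step for step.

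The engine of the argument is a monotonicity lemma for equilibrium behavior. Because the PBE is unique and computed by backward induction (Theorem~\ref{prop:unique-spe}), each node $i$ has a well-defined equilibrium choice $c_i(\sigma, t_i; p)$ depending on the situation $\sigma$ (the choices of earlier-scheduled nodes), its type, and the parameter $p$. I would establish: if $\sigma \preceq \sigma'$ (pointwise on the decided entries), $t_i \preceq t_i'$, and $p \le p'$, then $c_i(\sigma, t_i; p) \preceq c_i(\sigma', t_i'; p')$. Granting this lemma, an induction along the fixed schedule order shows that the two coupled runs satisfy $\mathbf{c}^Q \preceq \mathbf{c}^{Q'}$ at every step, which is exactly the desired pointwise containment.

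I would prove the monotonicity lemma by backward induction on the number of agents remaining. In the base case (the last mover) there is no future, so the agent applies the stated decision rule; a short case split on whether $t_i = Y$ or $t_i = N$ shows that because both the signed neighbor gap $m_Y - m_N$ and the type are weakly larger in the dominating instance, a $Y$-choice in the smaller instance forces a $Y$-choice in the larger. For the inductive step, the agent compares its expected continuation utilities under $c_i = Y$ and $c_i = N$, each continuation being a strictly smaller subgame. I would couple the future types across the two instances with the same uniforms $U_j$, so that the future type vectors again satisfy $\preceq$, and then apply the induction hypothesis subgame by subgame to conclude that the continuation play in the dominating instance pointwise dominates. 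It then remains to argue that the advantage of choosing $Y$ over $N$ (expected matching neighbors plus the $\pi$-bonus for matching one's type) is weakly larger in the dominating instance, with the consistent tie-break toward type resolving any indifference in the right direction.

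The delicate point, and what I expect to be the main obstacle, is that last inductive comparison of the $Y$-versus-$N$ advantage. A node's two possible actions spawn two \emph{different} continuation subgames, and the quantity that must be shown monotone is the difference of expected match counts across those two subgames, compared simultaneously against the analogous difference in the $(\sigma', p')$ instance. Controlling this requires applying the induction hypothesis to all four resulting subgames at once and verifying that the coupling preserves the relevant ordering of match counts. The real work is isolating the correct monotone quantity---for each later neighbor, the event that it ends up matching $i$'s action---and checking that it behaves monotonically under both the situation order $\sigma \preceq \sigma'$ and the parameter increase $p \le p'$; everything else I expect to reduce to routine case analysis.
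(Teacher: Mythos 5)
Your proposal takes essentially the same route as the paper: the paper constructs the identical monotone coupling of type vectors (by flipping $N$-types to $Y$ with probability $(p'-p)/(1-p)$ rather than via shared uniforms, but yielding the same joint law) and then invokes a choice-monotonicity lemma (Lemma~\ref{lem:choose-own-type}) asserting that an agent's equilibrium choice cannot flip from $Y$ to $N$ when its type and the decided prefix become weakly more $Y$-favorable, concluding pointwise domination of the $Y$-sets along the fixed schedule. The ``delicate point'' you flag---simultaneously comparing continuation match-counts across the two actions' subgames in both instances---is precisely what that lemma addresses, via a direct contradiction between the two utility inequalities rather than the explicit backward induction you outline.
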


\paragraph*{\textbf{Star Graph}}

       It seems that behavior on every graph we study varies unpredictably as game parameters change.
       Even a graph as simple as a clique can exhibit drastically different cascade outcomes from small changes in $p$ or $\pi$. However, this is not always the case.
Games on the \emph{star} graph---a graph with one \emph{interior} agent (of degree $n-1$) connected to $n-1$ \emph{exterior} agents (of degree 1)---have notably regular behavior.

       \begin{theorem}\label{theorem star} \label{thm:star-spe} For any parameters on any star graph, the optimal performance in the myopic setting is at least the optimal performance in the strategic setting for both adaptive and nonadaptive schedules.   \end{theorem}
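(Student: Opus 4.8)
The plan is to reduce everything to the behavior of the single interior agent, exploiting the fact that on a star the exterior agents have degree one and so behave almost trivially. First I would record the exterior facts. If $\pi \ge 1$, an exterior agent always chooses its own type: once the interior has decided, a match with the interior is worth only $1 \le \pi$, so type weakly dominates, and before the interior decides (in either mode) choosing type still weakly dominates, since it secures the $\pi$ bonus while the only thing a deviation could add is the single possible match with the interior. If instead $\pi < 1$, an exterior agent scheduled after the interior copies the interior's choice. In this second regime I would also verify that the interior's own decision rule is \emph{identical} in the myopic and strategic modes: every still-undecided exterior agent will copy whatever the interior picks, so the expected future matches are the same for $Y$ and for $N$ and cancel, leaving exactly the myopic comparison between decided neighbors.

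As a uniform lower bound on the myopic side, the schedule placing the interior first yields myopic performance exactly $np$ (the interior chooses its type, contributing $p$; for $\pi\ge 1$ each exterior then independently chooses its type, and for $\pi<1$ all exterior copy the interior). Hence optimal myopic performance is at least $np$ in every case, and it suffices to bound strategic performance.

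For $\pi\ge 1$ I would argue by a pointwise coupling across the two modes. Fix any schedule $S$ and any realization of types. Since all exterior agents choose their type in both modes, the two executions agree up to the interior's move and the interior faces the same decided configuration. The strategic interior, however, also internalizes its expected future matches, and because $p<1/2$ these favor $N$: its preference for $Y$ over $N$ is weaker than the myopic one by exactly $(1-2p)\,(\text{number of undecided exterior})\ge 0$. Consequently, whenever the strategic interior chooses $Y$ the myopic interior does too, so the myopic $Y$-count dominates the strategic one pointwise (the interior's choice never cascades here, as the exterior choose type regardless of it and regardless of any post-interior reordering). Taking expectations and then the maximum over $S$ gives optimal myopic performance at least optimal strategic performance, for both adaptive and nonadaptive schedules.

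The remaining and hardest case is $\pi<1$ with adaptive schedules, where the pointwise coupling breaks: a strategic exterior scheduled \emph{before} the interior may abandon its type to conform to the cascade it anticipates, so the same schedule can give strictly more $Y$ under strategic than under myopic play. For nonadaptive schedules the case is already covered, since Theorem~\ref{thm:non-adpative-p} bounds performance in both modes by a $Y$-fraction of $p$, which the interior-first schedule attains, so both optima equal $np$. For adaptive schedules I would instead bound the strategic optimum directly by $np$ and combine it with the myopic lower bound. The intended mechanism is that, by backward induction, the interior and all later agents ultimately copy a single value, the first exterior to move is effectively indifferent to the cascade's direction apart from its own type bonus and so chooses its type, and this fixes the cascade's sign early, capping the probability that the interior chooses $Y$ near $p$. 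Crucially, the scheduler cannot improve on this by threatening an exterior agent, because on a star the only available punishment is the interior's choice, and forcing the interior to $Y$ after an adverse history would require a $Y$-majority that the remaining type-biased strategic agents will not supply, so the threat is not credible. Making this fixation/credibility argument precise for an arbitrary adaptive interleaving of exterior moves is the main obstacle; I expect it to require an induction on the number of undecided exterior agents together with a biased-random-walk comparison showing that the strategic reachability of a forcing configuration never exceeds what the optimal myopic ``retry'' schedule already achieves.
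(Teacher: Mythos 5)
Your treatment of the $\pi \ge 1$ case (a pointwise coupling in which only the interior agent's rule differs, and its strategic preference for $Y$ is weaker than its myopic preference by $(1-2p)k \ge 0$) and of the nonadaptive $\pi < 1$ case (interior-first attains the Theorem~\ref{thm:non-adpative-p} bound of $p$ in both modes) matches the paper's argument. The gap is in the case you correctly flag as hardest, $\pi < 1$ with adaptive schedules, and it is not merely that the ``fixation/credibility'' step is left unproved: the bound you are aiming for is false. Optimal adaptive strategic performance is \emph{not} capped at an $np$ count (a $p$ fraction) for all $\pi < 1$. When $\pi + p \ge 1$, a strategic $Y$-type exterior node facing $d = -1$ with one exterior node left to precede the interior gets expected payoff $p + \pi \ge 1$ from choosing $Y$ versus $1$ from choosing $N$, so it chooses $Y$; more generally the thresholds of Theorem~\ref{prop:starthreshold} satisfy $Y^*(k) \le -1$ in this regime, so the exterior $Y$-count can recover from deficits, and the adaptive ``retry'' schedule pushes the probability of a $Y$-majority---and hence of the interior choosing $Y$ and cascading---toward $\frac{p}{1-p} > p$. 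This is exactly why the paper's own $p$ upper bound for strategic adaptive play (Theorem~\ref{prop:star-50-strategic}) is stated only for $\pi < 1 - p$, and why Figure~\ref{fig:ratio2} shows the strategic-to-myopic ratio approaching $1$ while myopic adaptive performance is $\frac{p}{1-p}$. Your intuition that ``the first exterior to move fixes the cascade's sign'' is precisely what fails: later exterior agents above threshold still choose their type, so the sign is not fixed early.

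The paper closes this case by a different route, which you would need some analogue of. It exhibits a single adaptive schedule $S_{opt}$ (schedule exterior agents until a $Y$-majority is reached, then the interior, then the rest) and proves it weakly optimal for \emph{both} agent models (Lemma~\ref{lemma:star schedule little pi}), reducing the theorem to comparing the two models under that one common schedule. The comparison is then made by showing that exterior agents play threshold strategies and that the strategic $N$-threshold satisfies $N^*(k) = 0$ (Lemma~\ref{lemma:starlittlepimain}, via the biased-random-walk coupling of Theorem~\ref{similardist}): strategic $N$-types never defect to $Y$ before the interior moves, while strategic $Y$-types sometimes defect to $N$, so the strategic exterior walk reaches a $Y$-majority no more often than the myopic one. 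Without a common-optimal-schedule reduction of this kind, comparing the two optima directly via a uniform $np$ cap on strategic performance cannot work.
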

       The proof, in Appendix~\ref{app:star}, establishes the optimality of threshold strategies for nodes and then shows that the myopic thresholds always beat the strategic thresholds.

	        	\label{sec:star-bounds}
	        	Knowing that myopic performance exceeds strategic on the star, we next explore the degree of this advantage.
	    We find that, for adaptive schedules, myopic performance can be arbitrarily close to an additive factor of $50\%$ greater than strategic performance.

	        	In the limit of large star graphs, adaptive myopic performance is $\frac{p}{1-p}$ for any $\pi < 1$. Thus, for $p$ arbitrarily close to .5, myopic performance approaches $100\%$. This result does not hold for strategic agents: a backward induction argument shows that for small enough $\pi$, strategic performance is bounded by $p$.
	        	\begin{theorem}
	        		\label{prop:star-50-strategic}
	        		For any star graph with $\pi < 1-p$ and any adaptive schedule, strategic $Y$-type nodes choose $N$ when a majority of nodes have chosen $N$, upper bounding strategic performance by $p$.
	        	\end{theorem}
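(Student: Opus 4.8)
The plan is to solve the game by backward induction after reducing the state, as seen by a deciding agent, to a single integer: the \emph{margin} $\delta = m_Y - m_N$ among already-decided exterior agents, together with a bit recording whether the interior agent has decided. Two reductions set this up. First, since $\pi < 1-p < 1$, any exterior agent scheduled \emph{after} the interior simply copies it: matching its sole neighbor yields at least $1$, while deviating to its own type yields at most $\pi < 1$. Second, because of this, when the interior is scheduled its undecided neighbors contribute identically to both of its options and cancel, so a $Y$-type interior compares $\pi + m_Y$ against $m_N$; since $\pi<1$, the interior follows the strict majority of its decided neighbors and breaks ties toward its type. In particular a $Y$-type interior facing a decided $N$-majority chooses $N$, which is the claimed behavior for the central node.

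The substance lies with an exterior agent $i$ scheduled \emph{before} the interior, where adaptivity lets the scheduler threaten to swing the interior's eventual choice. Since $i$'s only neighbor is the interior, writing $P_Y = \Pr(c_{\mathrm{int}} = Y \mid c_i = Y)$ and $P_N = \Pr(c_{\mathrm{int}} = Y \mid c_i = N)$ for the equilibrium continuation, its expected utilities are $U(Y) = \pi + P_Y$ and $U(N) = 1 - P_N$, so a $Y$-type $i$ defects to $N$ exactly when $P_Y + P_N < 1-\pi$. I would then bound the interior's conversion probability by the margin: let $h(\delta)$ be the supremum over schedules of $\Pr(c_{\mathrm{int}}=Y)$ from a state with margin $\delta$ and the interior undecided, and show $h(\delta)=0$ for $\delta \le -1$, $h(0)=p$, and $h(\delta)=1$ for $\delta \ge 1$. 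The upper region is immediate, since from a margin of $+1$ the scheduler can seat the interior at once and it chooses $Y$; the content is the lower region.

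The crux, and the step I expect to be the main obstacle, is establishing $h(\delta)=0$ for $\delta \le -1$, because this is exactly where the hypothesis $\pi < 1-p$ is tight and where the scheduler's adaptive threat must be shown to fail. I would run the induction on the number of undecided agents, proving all margin values simultaneously at each horizon. In the step, an exterior $Y$-type deciding at margin $-1$ sends the continuation to a strictly smaller horizon at margin $0$ or $-2$, where the inductive values give $P_Y \le h(0)=p$ and $P_N \le h(-2)=0$; hence $P_Y + P_N \le p < 1-\pi$ under \emph{every} continuation the scheduler might choose, so $i$ defects regardless of the threat. Thus every agent at margin $\le -1$ chooses $N$, the margin only decreases, and the interior is locked into $N$, forcing $h(\delta)=0$. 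The parallel computation at margin $0$ shows each agent chooses its type and the interior converts with probability exactly $p$, giving $h(0)=p$; the displayed values turn out to be horizon-independent, which keeps the induction clean.

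Finally, to pass from behavior to the performance bound, I would note that under any schedule attaining the supremum every exterior agent matches the interior's ultimate choice: agents decided before the interior at margin $\le 0$ all choose $N$ and lead to $c_{\mathrm{int}}=N$, while a margin of $+1$ triggers immediate seating of the interior at $Y$ with the rest following. The terminal configuration is therefore all-$Y$ or all-$N$, so the fraction choosing $Y$ equals $\Pr(c_{\mathrm{int}}=Y)=h(0)=p$; the same backward induction carried on the expected count gives supremal performance $np$. Since $h$ is a supremum over schedules, every adaptive schedule yields strategic performance at most $p$, and the tightness of the margin-$-1$ computation shows $\pi < 1-p$ cannot be weakened.
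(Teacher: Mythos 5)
Your proposal is correct and follows essentially the same route as the paper: a backward induction on the number of undecided agents showing that at margin $d=-1$ a $Y$-type exterior node compares an expected payoff of at most $p+\pi$ for choosing $Y$ against $1$ for choosing $N$, so that $\pi<1-p$ forces defection and any negative margin locks in an $N$-cascade. Your writeup is in fact more careful than the paper's terse three-line induction (you make explicit the supremum-over-schedules function $h(\delta)$, the defection criterion $P_Y+P_N<1-\pi$, and the tightness of the hypothesis), but the underlying argument is the same.
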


\begin{proof}
	We characterize behavior on the star with strategic agents when $\pi < 1-p$ by backward induction. Let $d$ be the difference between the number of exterior nodes who have chosen $Y$ and the number who have chosen $N$: $d = m_Y - m_N$\@. We show that any node chooses $N$ when $d<0$.

	Consider the behavior of the last node, $i$. Assume the best case for a $Y$ choice, that $t_i = Y$. If $d=-1$, $i$ receives $p + \pi$ utility for $c_i =Y$ and $1$ utility for $c_i = N$. By assumption, $i$ prefers $N$\@.
Any $d<-1$ gives $i$ $\pi$ utility for $c_i = Y$ and $1$ utility for $c_i = N$\@.
This completes the base case.

	Next we prove the inductive step. Assuming the theorem holds when $k-1$ agents remain, we show the theorem holds when $k$ agents remain. Denote the agent choosing with $k$ agents remaining by $j$. Assume the best case for a $Y$ choice, that $t_i=Y$. Using the inductive hypothesis, we find that utilities for $j$ are exactly as above for $i$. \qed
\end{proof}

 We proved that optimal strategic performance is never greater than optimal myopic performance for a star graph. We also present computational verification. For details of the algorithm used for computing solutions, see Section~\ref{sec:block}.

              \begin{figure}
              \centering
              \includegraphics[width=0.493\columnwidth]{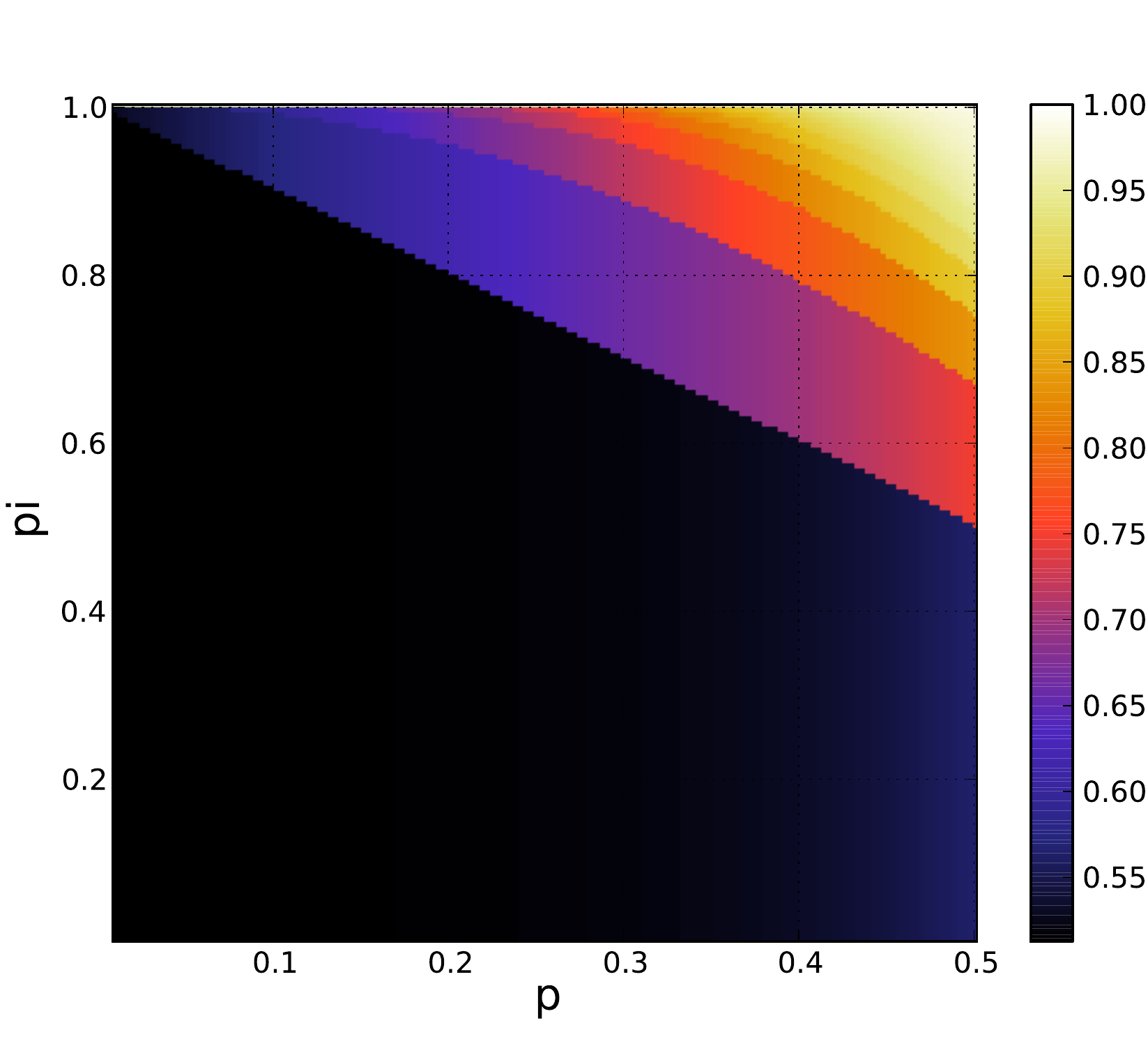}
              \caption{Strategic-to-myopic performance ratio on star, \( n=41 \).}
              \label{fig:star-simulation}
              \end{figure}

	            Figure~\ref{fig:star-simulation} displays results of a computational solution of the optimal schedule for a star of 41 agents. We calculate the strategic-to-myopic performance ratio for a variety of $p$ and $\pi$ combinations.

\paragraph*{\textbf{Nonadaptive Schedules}}

Lastly, we prove that no nonadaptive schedule for strategic or myopic agents can achieve more than a $p$ fraction performance, on any graph.
A similar bound of $p$ was proved independently by \citet[Theorem~1]{Hajiaghayi13}, using different techniques, restricted to myopic agents on the clique graph.
Their theorem generalizes to the setting where agents can have heterogeneous $\pi$ thresholds.  
Whereas we do not explicitly address heterogeneity in $\pi$ here, we note that the proof of Theorem~\ref{thm:non-adpative-p} immediately extends to this more general model.

Both results improve on the 50\% bound of \citet{Chierichetti12}, which covers myopic agents on arbitrary graphs.

Bounding nonadaptive schedule performance for strategic agents entails that the very high performance of Section~\ref{sec:cloud} is not possible when the scheduler cannot react to decisions made by nodes (Corollary~\ref{cor:adaptive-greaterthan-nonadaptive}).
Moreover, it rules out the possibility of predetermined $Y$-cascades with nonadaptive schedules.  Our proof combines a careful inductive argument with the repeated application of a result from the analysis of Boolean functions.

       \begin{theorem}
              \label{thm:non-adpative-p}
              No nonadaptive schedule can achieve more than $p$ fraction performance for any $p \leq .5, \pi > 0$, in the myopic or strategic setting.
       \end{theorem}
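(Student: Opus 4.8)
Fix a nonadaptive schedule $S$ and relabel the agents $1,\dots,n$ in the order they decide. Because $S$ is a fixed permutation and, by Theorem~\ref{prop:unique-spe}, the PBE is unique, every agent's choice is a deterministic function of the realized type vector $\mathbf{t}\in\{Y,N\}^n$ in both the myopic and the strategic setting; write $c_k(\mathbf{t})$ for agent $k$'s choice. The expected fraction of $Y$-choices is $\frac{1}{n}\mathbb{E}[\#\{k:c_k=Y\}]=\frac{1}{n}\sum_k \Pr[c_k(\mathbf{t})=Y]$, where $\mathbf{t}$ is drawn from the product measure $\mu_p$ that sets each coordinate to $Y$ independently with probability $p$. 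So by linearity it suffices to prove the per-agent bound $\Pr[c_k(\mathbf{t})=Y]\le p$ for every $k$. The plan is to attach to each agent the set $S_k=\{\mathbf{t}:c_k(\mathbf{t})=Y\}$ and reduce the theorem to the single measure-theoretic claim $\mu_p(S_k)\le p$.

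I would then establish two structural facts about $S_k$. First, \emph{monotonicity}: identifying $N<Y$, the map $c_k$ is nondecreasing in $\mathbf{t}$, so each $S_k$ is an up-set. I would prove this by a coupling and backward-induction argument, akin to the one behind Theorem~\ref{prop:monotonic-p}, exploiting the positive externalities: raising any agent's type from $N$ to $Y$ weakly shifts every earlier decision toward $Y$ and only increases $k$'s own incentive to match $Y$, and the unique PBE of this supermodular-style coordination game responds monotonically. Second, \emph{no complementary pair}: if $c_k(\mathbf{t})=Y$ then $c_k(\overline{\mathbf{t}})=N$, where $\overline{\mathbf{t}}$ flips every type. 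For myopic agents this is immediate and exact, since the myopic rule is symmetric under $Y\leftrightarrow N$ and independent of $p$, so flipping all types flips all decisions ($c_k(\overline{\mathbf{t}})=\overline{c_k(\mathbf{t})}$) and $S_k$ literally contains no complementary pair. For strategic agents I would argue the same conclusion holds in the one-sided direction because the prior $p<\tfrac12$ biases every agent further toward $N$ than the symmetric benchmark would, so a profile that already drives $k$ to the minority choice $Y$ cannot, once reflected, drive $k$ to $Y$ again.

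With these two properties in hand I would invoke the analysis-of-Boolean-functions ingredient: an up-set $S\subseteq\{0,1\}^n$ that contains no complementary pair has $\mu_p(S)\le p$ for every $p\le\tfrac12$, with dictators showing the bound is tight. Writing $\overline{S}=\{\overline{\mathbf{t}}:\mathbf{t}\in S\}$ for the (disjoint) complement image, this is the correlation inequality $(1-p)\mu_p(S)\le p\,\mu_p(\overline{S})$, which in turn forces $\mu_p(S)\le p$ since $S$ and $\overline{S}$ are disjoint. It is \emph{not} a term-by-term comparison—low-weight members of $S$ violate it locally—but holds globally because the up-set structure forces enough mass onto high-weight profiles. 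I would prove it by induction on $n$: conditioning on one coordinate splits $S$ into up-sets $S_0\subseteq S_1$, where $S_0$ again contains no complementary pair and $S_1$ is disjoint from $\overline{S_0}$, and the inductive estimate for $S_0$ balances the two resulting terms. Applying this lemma to each $S_k$ (the repeated application) and averaging gives $\frac{1}{n}\sum_k \mu_p(S_k)\le p$, which is the theorem; the argument never uses that the thresholds $\pi$ are uniform, yielding the heterogeneous-$\pi$ extension noted in the text.

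The hard part will be the strategic half of the no-complementary-pair property. A strategic agent's choice depends on $p$ and on anticipated future equilibrium play, so the clean self-duality $c_k(\overline{\mathbf{t}})=\overline{c_k(\mathbf{t})}$ genuinely fails: a clique in $\classone$ has $c_k\equiv N$, which is not self-dual. What must be shown is weaker but delicate, namely that no positive-probability history can coerce a type-$N$ agent into $Y$ under both a profile and its reflection. This is exactly where nonadaptivity is essential—Lemma~\ref{lem:cloud-onehundred} shows that an adaptive scheduler \emph{can} coerce $Y$ (there agent~$1$ chooses $Y$ regardless of type, so $S_1$ is everything and the bound fails), and the reason it cannot here is that a fixed continuation offers no way to punish an $N$-choice asymmetrically. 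Making this precise requires propagating the $N$-bias of the prior through the backward induction simultaneously with the monotonicity claim, so that both structural properties are maintained agent-by-agent along the schedule.
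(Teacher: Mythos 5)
Your reduction (linearity of expectation, a per-node bound via monotone Boolean functions, and the exact self-duality $c_k(\overline{\mathbf{t}})=\overline{c_k(\mathbf{t})}$ for myopic agents) is precisely the paper's argument for the myopic half, and your observation that nonadaptivity is what blocks the cloud-graph coercion is the right sanity check. One difference is that you replace the paper's quoted ingredient (Lemma~\ref{lemma:mossel}: monotone $f$ with $P_{1/2}(f=Y)=1/2$ implies $P_p(f=Y)\le p$) with a generalization --- up-set plus no complementary pair implies $\mu_p(S)\le p$ --- which you must then prove yourself. The statement is true (complete the up-set to one of $\mu_{1/2}$-measure exactly $1/2$ and apply Lemma~\ref{lemma:mossel}), but your proposed induction via the correlation inequality is an extra obligation the paper never incurs, because the paper arranges for $P_{1/2}=1/2$ to hold exactly for the function it analyzes.

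The genuine gap is in the strategic half. You correctly identify that the needed property is that no type profile and its reflection can both drive an agent to choose $Y$ under the $p$-biased equilibrium, and you concede this is ``the hard part,'' but you do not supply the mechanism. The paper's device is to introduce a fictitious equilibrium $c_i^{1/2}$ in which every agent reasons as if all types were $Y$ with probability $1/2$: this choice function is exactly self-dual, so Lemma~\ref{lemma:mossel} applies to it verbatim and gives $P_p(c_i^{1/2}=Y)\le p$; then Lemma~\ref{lemma:a200} shows by backward induction that $c_i^p$ is monotone in the belief parameter $p$, whence $P_p(c_i^{p}=Y)\le P_p(c_i^{1/2}=Y)\le p$. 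That same monotonicity in $p$ is also exactly what proves your no-complementary-pair claim ($c_k^p(\mathbf{t})=Y\Rightarrow c_k^{1/2}(\mathbf{t})=Y\Rightarrow c_k^{1/2}(\overline{\mathbf{t}})=N\Rightarrow c_k^p(\overline{\mathbf{t}})=N$), so the comparison to the symmetric benchmark you gesture at is not an optional gloss but the entire content of the step you left open. Without Lemma~\ref{lemma:a200} or an equivalent monotonicity-in-$p$ statement for the equilibrium choice functions, your strategic half does not close.
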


    To prove Theorem~\ref{thm:non-adpative-p} we use a lemma from \citet[Lemma~5.1]{MosselNT2012}:
    \begin{lemma}\label{lemma:mossel}  Let $f: \{Y, N\}^n \rightarrow \{Y, N\}$ be a monotone function (so that flipping input bits from $N$ to $Y$ cannot change the output from $Y$ to $N$ and vice versa)   with $P_{1/2}(f = Y) = 1/2$.  Then $P_p(f = Y) \leq p$ for all $0 \leq p < 1/2$.  \end{lemma}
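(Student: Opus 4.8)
The plan is to prove a stronger, \emph{unbalanced} statement by induction on $n$ and then specialize. Write $g_f(p) = P_p(f = Y)$ and identify $Y$ with $1$ and $N$ with $0$, so that $f^{-1}(Y)$ is an up-set (a monotone increasing family). The constraint $P_{1/2}(f=Y)=1/2$ is a single-point condition, so it cannot be fed directly into a derivative or coupling argument, and the natural first guesses all fail: once $f$ is allowed to be unbalanced, $g_f(p)/p$ need not be monotone and $g_f$ need not be convex below $1/2$ (both fail already for $f = x_1 \vee x_2$). The remedy is to carry through the induction a bound on $g_f(p)$ \emph{as a function of} $\alpha := g_f(1/2)$. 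Concretely, I will prove
\[
 g_f(p) \;\le\; \Psi_p\big(g_f(1/2)\big), \qquad \Psi_p(\alpha) := \max\{\,2p\,\alpha,\; 1 - 2(1-p)(1-\alpha)\,\},
\]
for every monotone $f$ and every $p \in [0,1/2]$. Here $\Psi_p$ is the convex, piecewise-linear envelope pinned at the points $(0,0)$, $(\tfrac12, p)$, and $(1,1)$. Since $\Psi_p(1/2) = p$, applying this to a balanced $f$ yields exactly $g_f(p) \le p$, which is the lemma.

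The induction is on the number of variables. For the base case $n=0$ a monotone function is a constant $c \in \{Y,N\}$, and $\Psi_p(0)=0$, $\Psi_p(1)=1$ give equality. For the inductive step I condition on the last coordinate: let $f_0, f_1$ be the restrictions of $f$ obtained by fixing $x_n = N$ and $x_n = Y$. Both are monotone on $n-1$ variables and $f_0 \le f_1$ pointwise (monotonicity of $f$ in $x_n$), so $\alpha_0 := g_{f_0}(1/2) \le g_{f_1}(1/2) =: \alpha_1$ and $g_f(1/2) = \tfrac12(\alpha_0 + \alpha_1)$. The law of total probability gives $g_f(p) = (1-p)\,g_{f_0}(p) + p\,g_{f_1}(p)$, and the inductive hypothesis bounds each term, so it remains only to establish the elementary two-variable inequality
\[
 (1-p)\,\Psi_p(\alpha_0) + p\,\Psi_p(\alpha_1) \;\le\; \Psi_p\!\Big(\tfrac{\alpha_0+\alpha_1}{2}\Big), \qquad 0 \le \alpha_0 \le \alpha_1 \le 1. \qquad (\star)
\]
Note the asymmetry that drives everything: the smaller argument $\alpha_0$ carries the larger weight $1-p \ge \tfrac12$, which is precisely where the hypothesis $p \le 1/2$ enters.

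The main work is verifying $(\star)$, which I will do by splitting on the position of $\alpha_0$, $\alpha_1$, and $\beta := \tfrac{\alpha_0+\alpha_1}{2}$ relative to the kink at $1/2$. In each case $\Psi_p$ is linear on the relevant arguments, so $(\star)$ collapses to a product of two signed factors: when both arguments lie below $1/2$ it reduces to $(\alpha_0-\alpha_1)(1-2p) \le 0$; when both lie above $1/2$ it reduces to $(1-2p)\big((1-\alpha_0)-(1-\alpha_1)\big) \ge 0$; and the two straddling cases ($\alpha_0 < 1/2 < \alpha_1$) reduce to $(1-p)(1-2p)(1-2\beta)$ having a definite sign, resolved by whether $\beta$ lies below or above $1/2$. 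Each factor has a fixed sign because $\alpha_0 \le \alpha_1$ and $p \le 1/2$, so every case goes through. I expect this case analysis to be the only genuinely fiddly part; conceptually the crux is just identifying the correct envelope $\Psi_p$, after which both the base case and $(\star)$ are forced by $p \le 1/2$. As a sanity check, the dictator $f = x_1$ gives equality at every step (its $g_f(p)=p$), confirming that the bound is tight and that $\Psi_p$ cannot be lowered.
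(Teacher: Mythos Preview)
The paper does not prove this lemma at all: it simply quotes it as Lemma~5.1 of \citet{MosselNT2012} and uses it as a black box. Your proposal therefore supplies strictly more than the paper does, namely a complete, self-contained, elementary proof.

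Your argument is correct. The idea of strengthening the inductive hypothesis to the two-piece envelope $\Psi_p(\alpha)=\max\{2p\alpha,\,1-2(1-p)(1-\alpha)\}$ is exactly the right move: the restrictions $f_0,f_1$ need not be balanced, so the lemma as stated is too weak to induct on, and $\Psi_p$ is the minimal convex function through $(0,0)$, $(\tfrac12,p)$, $(1,1)$ that makes the recursion close. The asymmetric averaging inequality~$(\star)$ is the heart of the matter, and your case split on the position of $\alpha_0,\alpha_1,\beta$ relative to the kink at $1/2$ is the natural way to verify it. I checked all four cases and each collapses to a product of factors with definite sign, exactly as you say.

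One small inaccuracy worth fixing when you write it up: the two straddling cases do not both carry the factor $(1-p)$. When $\alpha_0<\tfrac12\le\alpha_1$ and $\beta\le\tfrac12$, the difference LHS$-$RHS works out to $-p(1-2p)(1-2\beta)$; when $\beta\ge\tfrac12$ it is $(1-p)(1-2p)(1-2\beta)$. In both cases the sign is governed by $(1-2\beta)$, and since $p$ and $1-p$ are each nonnegative the conclusion is unaffected, but the expressions differ. Otherwise the proof is clean, and the observation that the dictator $f=x_1$ gives equality throughout is a good tightness check.
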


    \begin{proof}[of Theorem \ref{thm:non-adpative-p}]
           We first prove the theorem for the myopic setting.  We apply Lemma~\ref{lemma:mossel} separately to each node, in combination with Lemma~\ref{lemma:a100}, to show that each agent chooses $Y$ with probability at most $p$.  From linearity of expectations, we know the myopic performance is at most a $p$ fraction of the nodes.

           Fix a game and schedule, and adopt the following notation.

           Let $c_i:\{Y, N\}^i \rightarrow \{Y, N\}$ be the function which takes as input the types of the first $i$ agents and outputs the selection of agent $i$.

           Let $\mathring{c}_i:\{Y, N\}^i \rightarrow \{Y, N\}^i$  be the function which takes as input the types of the first $i$ agents and outputs the selection of the first $i$ agents.

           Let $\hat{c}_i:\{Y, N\}^{i-1} \times \{Y, N\} \rightarrow \{Y, N\}$  be the function which takes as input the selections of the first $i-1$ agents and the type of the $i$th agent and outputs the selection of the $i$th agent.

           We denote the types of the first $i$ agents as $t^{(i)} =  t_1, \ldots, t_i \in \{Y, N\}^i$ and denote by $\neg w \in \{Y, N\}^i$ the string with each coordinate the opposite as in $w \in \{Y, N\}$.

           Lemma~\ref{lemma:a100} shows that $c_i(\neg  t^{(i)}) = \neg c_i(t^{(i)})$, from which we see that $P_{1/2}(c_i = Y) = 1/2$ because for each string, exactly one of $w$ and $\neg w$ evaluates to $Y$.  Thus we can employ Lemma~\ref{lemma:mossel} to see that $P_{p}(c_i = Y) \leq p$, which proves the theorem in the myopic case.

           \begin{lemma}\label{lemma:a100}
           $c_i(\neg  t^{(i)}) = \neg c_i(t^{(i)})$. \end{lemma}
           \begin{proof}[of Lemma~\ref{lemma:a100}]
               We can do this by induction on $i$ to show that both $c_i$ and $\mathring{c}_i$ have this property.   Note that because $Y$ and $N$ are treated symmetrically in the myopic setting, we know that $\hat{c}_i(\neg w) = \neg \hat{c}_i(w)$ for all $i$.

                  The base case follows because $c_1(t_1) = \mathring{c}_1(t_1) = \hat{c}_1(t_1)$, and we know that $\hat{c}_1$ has the property.  

                  Assume that the statement is true for all $j < i$. Note that
                  \begin{align*} c_i(\neg t^{(i)}) & = \hat{c}_i(\mathring{c}_{i-1}(\neg t^{(i-1)}), \neg t_i) \\
                                                    & = \hat{c}_i(\neg \mathring{c}_{i-1}(t^{(i-1)}), \neg t_i) \\
                                                    & = \neg \hat{c}_i(\mathring{c}_{i-1}(t^{(i-1)}), t_i) = \neg c_i( t^{(i)}).  \end{align*}

                  The first line follows from the definition of $c_i$ and $\hat{c}_i$ and second line follows from induction.
                  Similarly, for $\mathring{c}_i$:
                  \begin{align*} \mathring{c}_i(\neg t^{(i)}) & = \mathring{c}_{i - 1}(\neg t^{(i-1)}) \circ \hat{c}_i(\mathring{c}_{i-1}(\neg t^{(i-1)}), \neg t_i) \\
                                                    & = \neg \mathring{c}_{i - 1}(t^{(i-1)}) \circ \hat{c}_i(\neg \mathring{c}_{i-1}( t^{(i-1)}), \neg t_i) \\
                                                    & = \neg \mathring{c}_{i - 1}(t^{(i-1)}) \circ \neg \hat{c}_i(\mathring{c}_{i-1}( t^{(i-1)}),  t_i) = \neg \mathring{c}_i( t^{(i)}).\qed \end{align*}
           \end{proof}

           We next prove the strategic case of Theorem~\ref{thm:non-adpative-p}. The intuition is straightforward.
If a node imagines that all future nodes are equally likely to prefer $Y$ and $N$, then again $Y$ and $N$ are treated symmetrically, as in the myopic setting, and Lemma~\ref{lemma:mossel} applies.
So given that this node's type and the types of agents that have already chosen are $Y$ independently with probability $p$, the probability that each node chooses $Y$ is at most $p$.
This probability only decreases when this node expects future nodes to be $Y$-type less often.

           We define $c_i^p:\{Y, N\}^i \rightarrow \{Y, N\}$, $\mathring{c}_i^p:\{Y, N\}^i \rightarrow \{Y, N\}^i$, and $\hat{c}_i^p:\{Y, N\}^{i-1} \times \{Y, N\} \rightarrow \{Y, N\}$ analogously to above, except here we assume that all agents play strategically according to the case where each node is $Y$-type with probability $p$.

           The outline of the proof of Lemma~\ref{lemma:a200}, given in full in Appendix~\ref{app:misc}, is as follows.
            We again see that $P_{1/2}(c_i^{1/2}(t^{(i)}) = Y) = 1/2$ by the same reasoning, and applying Lemma~\ref{lemma:mossel} we see that $P_{p}(c_i^{1/2}(t^{(i)}) = Y) \leq p$.  We would like to show that $P_{p}(c_i^{p}(t^{(i)}) = Y) \leq p$.  To complete the lemma it is enough to show that $c_i^p$ is monotone in $p$.  That is, increasing $p$ only makes a $Y$ outcome more likely.

           By induction we will show that $c_i^p$ is also monotone with respect to $p$.  This completes the proof of the theorem because then $P_{p}(c_i^{p}(t^{(i)}) = Y) \leq P_{p}(c_i^{1/2}(t^{(i)}) = Y) \leq   p  $.

           \begin{lemma}~\label{lemma:a200}  $c_i^p$ and  $\mathring{c}_i^p$ are also monotone in their inputs and in $p$. \end{lemma}
           This completes the proof of Theorem~\ref{thm:non-adpative-p}. \qed
       \end{proof}

Theorems~\ref{thm:non-adpative-p} and \ref{thm:cloud-100} imply that adaptive schedules can be arbitrarily more powerful than nonadaptive ones in the strategic setting.
\begin{corollary}
For any $\epsilon > 0$, there exists a game $Q$ with strategic agents for which an adaptive scheduler can achieve $100\%$ adoption and a nonadaptive scheduler achieves $\leq \epsilon \%$.
\label{cor:adaptive-greaterthan-nonadaptive}
\end{corollary}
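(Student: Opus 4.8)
The plan is to combine the two cited theorems directly, with no new machinery. Theorem~\ref{thm:cloud-100}, and more precisely the construction behind it in Lemma~\ref{lem:cloud-onehundred}, supplies a strategic game on the cloud graph for which the adaptive schedule $S_{opt}$ drives every agent to $Y$, i.e.\ attains $100\%$ adoption. Theorem~\ref{thm:non-adpative-p} caps every nonadaptive schedule at a $p$ fraction in the strategic setting. So the whole task reduces to exhibiting a single parameter choice that makes the adaptive side $100\%$ while forcing $p \leq \epsilon$.

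First I would fix the target $\epsilon > 0$; without loss of generality $\epsilon < 1/2$, so I may set $p = \epsilon$ (indeed any $p \leq \epsilon$ with $0 < p < 1/2$ works). I would also fix some $\pi$ with $0 < \pi < 2$, as the cloud construction requires. The remaining freedom is in the cloud sizes $a$ and $b$.

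Next I would invoke Lemma~\ref{lem:cloud-onehundred}, whose hypotheses $a(1-p) + \pi < bp$ and $ap^2 > \pi$ can be met for this fixed $p$ and $\pi$ by enlarging the clouds: choose $a$ large enough that $ap^2 > \pi$, then $b$ large enough that $bp > a(1-p) + \pi$. This is exactly the remark following Lemma~\ref{lem:cloud-onehundred} that $100\%$ strategic adoption is achievable for any $p$ by selecting large enough $a$ and an even larger $b$. Let $Q$ be the resulting cloud-graph game. Under $S_{opt}$, agent~1 and hence every agent chooses $Y$, so the adaptive strategic performance of $Q$ is $100\%$.

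Finally I would apply Theorem~\ref{thm:non-adpative-p} to this same game $Q$: in the strategic setting no nonadaptive schedule achieves more than a $p$ fraction, and $p \leq \epsilon$ by construction, so every nonadaptive scheduler attains at most $\epsilon\%$. Having exhibited one game witnessing both bounds simultaneously, the corollary follows. There is no substantive obstacle beyond this bookkeeping; the only point to verify is that the two theorems are jointly satisfiable by one parameter setting, and that is immediate because Lemma~\ref{lem:cloud-onehundred} imposes no lower bound on $p$—small $p$ is accommodated purely by growing the clouds, which is precisely what simultaneously tightens the nonadaptive bound to $\epsilon$.
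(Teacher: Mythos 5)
Your proposal is correct and follows essentially the same route as the paper, which states the corollary as an immediate consequence of Theorem~\ref{thm:cloud-100} (via the cloud construction of Lemma~\ref{lem:cloud-onehundred}) and the nonadaptive bound of Theorem~\ref{thm:non-adpative-p}. Your explicit verification that a single parameter setting (small $p$, $\pi<2$, then $a$ and $b$ grown to satisfy $ap^2>\pi$ and $a(1-p)+\pi<bp$) witnesses both bounds simultaneously is exactly the intended bookkeeping.
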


Similarly we see that adaptive schedules can be arbitrarily more powerful than nonadaptive ones in the myopic setting by combining Theorem~\ref{thm:non-adpative-p} and a lemma from \citet[Lemma~3.1]{Chierichetti12}, which gives a graph with adaptive performance of $(100-O(\frac{1}{pn}))\%$.

\begin{corollary}
For any $\epsilon > 0$, there exists a game $Q$ with myopic agents for which an adaptive scheduler can achieve $\geq (100-\epsilon)\%$ adoption and a nonadaptive scheduler achieves $\leq \epsilon \%$.
\label{cor:adaptive-greaterthan-nonadaptive myopic}
\end{corollary}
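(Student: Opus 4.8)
The plan is to mirror the proof of Corollary~\ref{cor:adaptive-greaterthan-nonadaptive}, replacing the strategic adaptive construction (Theorem~\ref{thm:cloud-100}) with the myopic adaptive construction supplied by \citet[Lemma~3.1]{Chierichetti12}. The two ingredients I would combine are: (i) the universal upper bound of Theorem~\ref{thm:non-adpative-p}, which says that \emph{every} nonadaptive schedule on \emph{any} graph achieves at most a $p$ fraction of $Y$-adoptions in the myopic setting; and (ii) the Chierichetti graph family, which for a fixed $p$ admits an adaptive schedule with myopic performance $(100 - O(\frac{1}{pn}))\%$.

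Fix $\epsilon > 0$; we may assume $\epsilon < 100$, since otherwise the statement is vacuous. First I would choose the type probability $p = \min\{\epsilon/100,\, 2/5\}$, so that $p < 1/2$ (as required by both theorems) and $p \leq \epsilon/100$. With this choice, Theorem~\ref{thm:non-adpative-p} immediately bounds the myopic performance of every nonadaptive schedule on every graph by $p \leq \epsilon/100$, that is, by $\epsilon\%$. Crucially, this bound is independent of the graph and of $\pi$, so it will hold for whatever graph and threshold we select next.

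With $p$ now fixed, I would invoke \citet[Lemma~3.1]{Chierichetti12} to obtain a graph $G_n$ (on $n$ nodes, with an appropriate $\pi$) whose adaptive myopic performance is $(100 - O(\frac{1}{pn}))\%$. Since $p$ is held constant, the error term $O(\frac{1}{pn})$ tends to $0$ as $n \to \infty$; hence I can pick $n$ large enough that this term falls below $\epsilon$, giving adaptive myopic performance at least $(100 - \epsilon)\%$. Taking $Q = (G_n, p, \pi)$ then simultaneously satisfies both halves of the corollary, since Theorem~\ref{thm:non-adpative-p} applies to $G_n$ as a special case.

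The one point requiring care is the order of quantifiers: the nonadaptive bound forces $p$ to be small, which inflates the constant hidden in $O(\frac{1}{pn})$, but because we fix $p$ before letting $n$ grow, no conflict arises---for any fixed small $p$ the adaptive performance still approaches $100\%$. There is no substantive obstacle beyond this coordination of parameters; the heavy lifting is already done by Theorem~\ref{thm:non-adpative-p} and the cited Chierichetti lemma.
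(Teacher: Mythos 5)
Your proposal is correct and follows exactly the route the paper takes: it combines the universal nonadaptive bound of Theorem~\ref{thm:non-adpative-p} with the adaptive myopic construction of \citet[Lemma~3.1]{Chierichetti12}, choosing $p$ small first and then $n$ large. The explicit attention to the order of quantifiers is a welcome clarification but does not constitute a different argument.
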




\section{Scheduler Commitment Power}
       \label{sec:stack}
       Our model dictates that the scheduler chooses and publishes its (possibly adaptive) schedule in advance.
       This publication is a commitment to follow the schedule even in situations where, once reached, it is suboptimal.
       We refer to a scheduler who can commit in advance as \emph{Stackelberg}, after the classic economic model of imperfect competition in which a first-moving player is notably advantaged by an ability to make \emph{non-credible} threats \cite{Stackelberg11}.
       Such ability contrasts with a scheduler who is restricted to schedules that make the performance-maximizing decision in every subgame.

Whereas the power to make non-credible threats allows players to obtain strictly greater utility in some games, there are many natural games for which this power yields no advantage.
Our question is whether in this context Stackelberg scheduling ability is strictly more powerful than the ability to only make credible threats.
A priori, it is unclear how non-credible threats could aid the scheduler.
It seems that the only way to convince a node \emph{not} to choose $N$ is to threaten to surround it with an abundance of $Y$s in the case where it does choose $N$\@.
Maximizing \( Y \)s, however, aligns with the scheduler’s goal, and can only be non-credible if somehow concentrating these $Y$s lowers overall expected performance.

       \begin{figure}
         \centering
         \includegraphics[width=0.25\columnwidth]{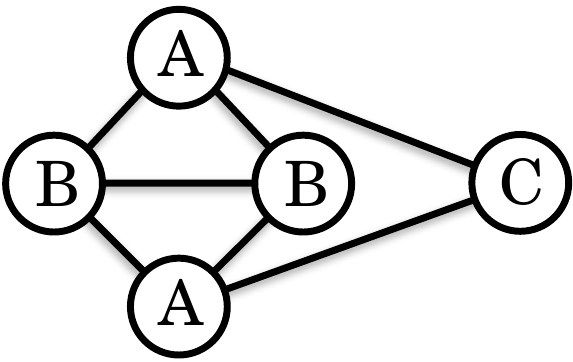}
         \caption{A graph in which a Stackelberg scheduler achieves greater performance.}
         \label{fig:3_group_graph}
       \end{figure}

       We have, however, found a game instance, illustrated in Figure~\ref{fig:3_group_graph}, for which commitment power provides an advantage.
       For this graph, with parameters $p=0.18$ and $\pi = 1.85$, a Stackelberg scheduler can achieve performance of $0.573$ whereas the best subgame-optimal schedule yields $0.371$.

       Our five node graph has three types of nodes which are in indistinguishable positions. We call the groups $A, B,$ and $C$ and don't distinguish between nodes within each group. We give the Stackelberg schedule in Figure~\ref{fig:schedule_stack} and the subgame-optimal schedule, which corresponds to a Perfect Bayesian Equilibrium (PBE), in Figure~\ref{fig:schedule_PBE}.
\begin{figure}
  \centering
  \begin{minipage}[t]{.49\textwidth}
  \subfigure[Stackelberg schedule]{
    \begin{algorithm}[H]
      \NoCaptionOfAlgo
      Schedule $A$:\\
      \uIf{$A$ chooses $Y$}
        {Schedule $A$:\\
         \uIf{$A$ chooses $Y$}
           {Schedule $B$: \\ \uIf{$B$ chooses $Y$}{Schedule $B$, then $C$}\lElse{Schedule $C$, then $B$}}
           \Else{Schedule $B$: \\ \uIf{$B$ chooses $Y$}{Schedule $C$, then $B$}\lElse{Schedule $B$, then $C$}}
        }
        \Else{Schedule $C$: \\
         \uIf{$C$ chooses $Y$}
           {Schedule $A$, then $B$, then $B$}
           \Else{Schedule $B$: \\ \uIf{$B$ chooses $Y$}{Schedule $B$, then $A$}\lElse{Schedule $A$, then $B$}}}
    \end{algorithm}
    \label{fig:schedule_stack}
  }
  \end{minipage}
  \begin{minipage}[t]{.49\textwidth}
  \subfigure[PBE schedule]{
    \begin{algorithm}[H]
    \NoCaptionOfAlgo
      Schedule $C$: \\
      \uIf{$C$ chooses $Y$}{Schedule $B$: \\
                            \uIf{$B$ chooses $Y$}
                                {Schedule $A$. All remaining nodes choose $Y$}
                            \Else{Schedule $A$, then $A$, then $B$}}
      \Else{Schedule remaining nodes in any order. All choose $N$}
    \end{algorithm}
    \label{fig:schedule_PBE}
  }
  \end{minipage}
  \caption{Schedules demonstrating the increased power of non-credible threats by the scheduler.}
  \label{fig:schedule}
\end{figure}

To see how the Stackelberg scheduler outperforms the PBE scheduler, note that the Stackelberg scheduler schedules $A$ first and it chooses its type, whereas if the PBE scheduler scheduled $A$ first it would choose $N$\@.
Both schedulers agree on what to do if $A$ chooses $N$\@.
If $A$ chooses $Y$, the Stackelberg scheduler schedules $A$ next even though scheduling $B$ next would yield higher expected performance.
The PBE scheduler must pick $B$ next in this case.
The higher performance of picking $B$ next comes at the cost of giving fewer expected $Y$-matches to $A$, and thus makes $A$, if scheduled first, less inclined to play $Y$\@.
In this instance, the result is that a first-moving $A$ would play $N$ if faced with a PBE schedule, and its type if faced with the optimal Stackelberg schedule.

Since it cannot threaten $A$, the PBE scheduler does not schedule $A$ first, and instead starts with $C$, which gives fairly similar cascade behavior but results in fewer nodes choosing $Y$, in expectation.
This completes our example of a graph with higher Stackelberg than PBE performance.

Assuming commitment power in the foregoing analysis simplifies our arguments by avoiding the necessity of verifying optimal scheduling in all subgames.
Results for myopic outperforming strategic hold \emph{a fortiori} if we relax the assumption of commitment power, as the ability to make threats is useful only for strategic agents.
Our derivation (Section~\ref{sec:cloud}) of the bound for strategic outperforming myopic exploits commitment power, however, we have verified that a more complicated demonstration can be constructed supporting the same bound under the weaker assumption of subgame-optimal PBE schedules.

\section{Computational Solutions to Strategic Cascades}
       \label{sec:block}
       It is straightforward to write a program which computes, by brute force, the optimal%
       \footnote{Our algorithm calculates the optimal schedule under the assumption that the scheduler is acting according to a PBE and \emph{cannot} make empty threats.
       The main theoretical analyses of the paper assume the scheduler \emph{can} make empty threats and is acting according to a Stackelberg equilibrium. On the star and clique these two equilibrium concepts give identical optimal schedules.
See Section~\ref{sec:stack} for in-depth discussion.}
       schedule for an arbitrary graph. Node behavior can be solved by backward induction. Logically, the exponential number of possible schedules and agent type configurations makes this approach infeasible. In this section we describe an approach to efficiently find solutions for strategic cascade problems on a subclass of highly symmetric graphs.

       Our code%
              \footnote{
              The work in this section was performed in collaboration with Erik Brinkman.
              Code can be found at \url{https://github.com/tbmbob/block-scheduling}.
              \texttt{block\_dp.py} is the file containing the solver. Code has not been prepared for public release. Please contact \url{travisbm@umich.edu} with any questions.}
       finds the optimal schedule for arbitrary \emph{blockmodel}%
              \footnote{Any graph can be expressed as a blockmodel graph with $n$ blocks. Our algorithm works efficiently only for graphs with a small number of blocks.}
       graphs with strategic or myopic agents.
       Blockmodels have been studied extensively in the past \cite{ Snijders97, Wang87} as a natural framing of networks in which nodes can be divided into classes or types with shared characteristics.
       For example, a blockmodel describing a social network at a high school could have a type for each grade. Students would be more likely to have edges to students of their same grade, and less likely to have edges to students of other grades. More abstractly, the star graph is easily described as a blockmodel in which the two classes are ``interior agent'' and ``exterior agent''.

       Our code finds optimal schedules efficiently for graphs with a small (constant) number of types. The star, clique, and cloud graphs all fit this requirement. Running time is polynomial in the number of agents and exponential in the number of blocks. The code solves for the optimal performance through a combination of dynamic programming and backward induction. It first solves all possible scenarios with one node left to choose and stores the results. Then, by using these results, the program solves optimal behavior when there are two nodes left to choose. It continues this process until it solves for the optimal behavior with all nodes left to choose. It avoids the exponential running time of a naive backward induction by treating all agents within a block the same. It is then able to consider only which \emph{block} to schedule next, not which node to schedule next. By reasoning over blocks instead of node types, the scheduler needs only to compare between $O(b)$ choices at each step, where $b$ is the number of blocks, instead of $O(n)$ choices.

       Examples of data gathered from our code can be viewed in Figures~\ref{fig:clique-simulation} and \ref{fig:star-simulation}.
       This computational method is far from a panacea. Very few real-world graphs follow strict block models, and even idealized graphs often have too many types to permit efficient simulation. However, this code has been useful in verifying results for simple star, clique, and cloud graphs and in suggesting further results. For example, simulation on the clique suggested the possibility of certain parameter spaces resulting in higher strategic performance than myopic performance.


\section{Conclusion}
       We have demonstrated that the common assumption of myopic decision making by agents participating in cascades can have significant consequences.
       For the specific model of Chierichetti et al., we find that assuming strategic instead of myopic agent decision making leads to markedly different cascade behavior.
       We show, by counterexample, that their result of linear performance for any graph does not apply when agents are strategic.
       We have identified graphs for which the performance difference between myopic and strategic agents is (asymptotically) as large as possible, in either direction.
       More broadly, we illustrate methods for reasoning about strategic cascade behavior and characterize the contrasting behavior of strategic and myopic agents in a range of qualitatively distinct settings.
       Lastly, we prove some results for strategic agents on general graphs,
and demonstrate the power of scheduler commitment.

       Modeling cascades with perfectly strategic agents is not necessarily more realistic than modeling agents with limited rationality.
       Thus, we do not argue for the strategic behavior we characterize as a definitive predictive model.
       Rather, our point is to demonstrate the potential impact of alternative assumptions about agent decision making on networks.
       It is likely that typical network decision making lies somewhere between myopic and strategic, and by characterizing the behavioral poles we hope to provide guidance for understanding the range within.
       Of course, substantial work remains to achieve a full understanding of behavior between these poles.

       We consider cascades to be representative of a broader class of scenarios involving dynamic decision on networks.
       For these too we should expect the spectrum of behaviors, myopic to strategic, to exhibit qualitative variety in generated outcomes.

\bibliographystyle{ACM-Reference-Format-Journals}
\bibliography{references}



\appendix
\setcounter{section}{0}

\section{Clique Appendix}
\label{app:clique}
\begin{proof}[of Theorem \ref{thm:clique-classes}]
              The proof of this theorem follows naturally from several lemmas which we prove in the rest of this section.  We outline the proof here.

               There are two cases of behavior to consider: $\pi < 1$ and $\pi \geq 1$.  In the first case a simple backward induction argument shows that all cliques are in $\classone$ (Lemma~\ref{lem:clique-pi-lessthan-one}).  The second case is more involved.
              We first show that if $K_n \in (\classone \cup \classtwo)$ for some $n$, $p$, and $\pi$, then all larger cliques must also be in $(\classone \cup \classtwo)$ for the same $p$ and $\pi$ (Lemmas~\ref{lem:clique-class-1} and \ref{lem:clique-class-2}).  Finally, we show that any $p$ and $\pi$ combination eventually gives a $K_n \in (\classone \cup \classtwo)$ for some large enough $n$
         (Lemma~\ref{lemma:clique-class-exists}).  The intuition for this final lemma is that, for very large cliques, the cost of ending up on the wrong side of a cascade is very large. Thus, agents always prefer to join a guaranteed cascade over choosing their type \emph{if} choosing their type has some probability of being on the wrong side of the cascade.

        Let $d = m_Y - m_N$ denote the difference between the current number of $Y$ decisions and the current number of $N$ decisions. We begin by addressing clique behavior in the first case, $\pi < 1$.
              \begin{lemma}
                  \label{lem:clique-pi-lessthan-one}
                  If $\pi < 1$ then every clique is in $\classtwo$.
              \end{lemma}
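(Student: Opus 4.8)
The plan is to prove a stronger, self-reinforcing invariant and establish it by backward induction on the number $k$ of agents still to move. Recall $d = m_Y - m_N$ records the current imbalance among decided nodes, and that on a clique every other node is a neighbor, so the matching term in an agent's utility is simply the total number of like-choosing nodes. I would show the invariant $P(k)$: in any subgame with $k$ movers remaining and current imbalance $d$, (a) if $d \geq 1$ then all $k$ remaining agents choose $Y$, (b) if $d \leq -1$ then all $k$ remaining choose $N$, and (c) if $d = 0$ then the next mover chooses its own type $t$ and every later agent matches $t$. Granting $P(n)$, the full game begins at $d = 0$ with $k = n$, so case (c) says the first agent plays $t_1$ and the rest follow---exactly a total cascade, placing the clique in $\classtwo$.

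The base case $k=1$ is transparent, since the last mover faces no future and so decides myopically: switching between $Y$ and $N$ shifts the guaranteed matching term by $d$ and the type term by at most $\pi < 1$, so whenever $|d| \geq 1$ the unit-per-neighbor reward strictly dominates the type bonus and the agent follows the strict majority regardless of type (cases (a),(b)), while at $d=0$ only the $\pi$-term survives and the agent picks its type, using $\pi > 0$ (case (c)). For the inductive step the pure-majority regimes $|d| \geq 2$ are equally clean: a deviation to the minority still leaves $|d'| \geq 1$, so by the hypothesis $P(k-1)$ the tail cascades to the majority anyway, and the mover's gain from joining is $\pm\pi + (d + k - 1) \geq -\pi + 2 > 0$. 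Case (c) at $d = 0$ is also immediate: choosing $Y$ triggers a full $Y$-cascade and choosing $N$ a full $N$-cascade, and since $m_Y = m_N$ the two deliver identical matching totals, so the mover strictly prefers the type-bonus option, i.e.\ its own type.

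The delicate point---and the step I expect to be the main obstacle---is the boundary deviation in case (a) (symmetrically (b)) at $d = 1$ exactly. A mover defecting to $N$ here does not continue a cascade but resets the imbalance to $0$, whereupon hypothesis $P(k-1)$(c) makes the continuation \emph{random}: with probability $p$ the next mover is $Y$-type and the tail becomes a $Y$-cascade, and with probability $1-p$ an $N$-cascade. Thus, unlike every other case, I must compare an expected utility against a deterministic one. Writing $u_Y$ for joining the majority (a guaranteed $Y$-cascade) and $u_N$ for defecting, and using $m_Y = m_N + 1$, the computation yields
\[ u_Y - u_N = \pm\pi + 1 + (k-1)p, \]
which is strictly positive since $\pi < 1$ and $(k-1)p \geq 0$; hence even an $N$-type mover prefers $Y$ and the cascade of case (a) is preserved. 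Once this boundary comparison is in hand, all three parts of $P(k)$ follow from $P(k-1)$, the induction closes, and applying $P(n)$ at the initial situation $d=0$ shows every clique with $\pi < 1$ is a total cascade, hence in $\classtwo$.
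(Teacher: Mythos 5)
Your proposal is correct and follows essentially the same route as the paper's proof: backward induction on the number of remaining movers, with the identical case split on $d$, and the same decisive computation at the boundary $d=1$, where joining the majority beats defecting by $1 + p(k-1)$ (the paper's $1 + p(\tau-1)$) against a type bonus of at most $\pi < 1$. No gaps.
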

              \begin{proof}
                     Consider the behavior of the last scheduled node. It chooses its type only if $d=0$.
                     If $d \neq 0$, then it receives at least 1 more utility for choosing the majority, but only $\pi < 1$ more utility for choosing its own type.
                     Now by induction, assume that agents after time $\tau > 1$ choose their type if $d = 0$ and otherwise choose the current majority.  We must show that the node $\alpha$ at time $\tau$ will do the same.

                     If $d \geq 2$ then no matter what $\alpha$ chooses, by the inductive hypothesis, the rest of the nodes will be in a $Y$-cascade.  Thus $\alpha$ will receive at least 2  more utility for choosing the majority ($Y$), but only $\pi$ more utility for choosing its type, so it will always choose $Y$.

                     If $d = 1$, then if $\alpha$ chooses $Y$, it will cause a $Y$-cascade and receive $\tau-1$ utility for agreement with currently undecided nodes and receive $1$ more utility for its agreement with currently decided nodes.  If $\alpha$ chooses $N$, then with probability $1-p$ the next node will cause an $N$-cascade, but with probability $p$ the next node will cause a $Y$-cascade.  In the former case $\alpha$ receives $\tau -1$ utility for its agreement with currently undecided nodes.  In the latter case, $\alpha$ receives 0 utility for its agreement with currently undecided nodes.  $\alpha$'s expected payoff from agreement with currently undecided nodes for choosing $N$ is $(1 -p) (\tau-1)$.  Without considering payoff from choosing its type, $c_\alpha = Y$ yields $1 + p(\tau-1)$ more utility than $c_\alpha = N$. So, no matter the value of $t_\alpha$, $c_\alpha = Y$.

                     If $d = 0$, the inductive hypothesis gives that $c_\alpha$ starts a cascade of $\alpha$'s choice.  Thus by playing $c_\alpha = t_\alpha$, it gets $\pi$ additional utility.  The analysis for $d = -1$ and $d \leq -2$ are analogous to the cases already covered. \qed
              \end{proof}

              For the remainder of the section we assume $\pi \geq 1$. Additionally, we refer to agents on the clique according to when they are scheduled. On $K_{n}$, we call the first scheduled node $n$ and the last scheduled node 1. We begin by showing that, as $n$ increases, cliques that enter into $(\classone \cup \classtwo)$ stay that way.

              \begin{lemma}
              \label{lem:clique-class-1}
              For large enough cliques such that $(n+1)(1-p)^2 > \pi$, if $K_{n-1} \in (\classone \cup \classtwo)$ and $K_n \in \classone$ (a predetermined $N$-cascade), then $K_{n+1} \in (\classone \cup \classtwo)$.
              \end{lemma}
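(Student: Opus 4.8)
The plan is to exploit a structural feature of cliques: because every node neighbors every other, the equilibrium action of a node depends only on the pair $(d,k)$, where $d=m_Y-m_N$ is the current gap and $k$ is the number of nodes (including the mover) still to decide, and is independent of the total clique size. First I would verify this by backward induction: the difference in expected utility between choosing $Y$ and $N$ decomposes into an immediate term $d\pm\pi$ plus the expected future $Y$- and $N$-matches, and the latter are themselves governed only by the continuation states $(d+1,k-1)$ and $(d-1,k-1)$. This universal decision rule lets me treat $K_{n-1},K_n,K_{n+1}$ as different starting points $(0,n-1),(0,n),(0,n+1)$ of one and the same process. Crucially, $K_{n+1}$ and $K_{n-1}$ explore states of the same parity of $d+k$, so the hypothesis on $K_{n-1}$ speaks directly to the continuations that arise inside $K_{n+1}$.

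Next I would prove two monotonicity facts by induction on $k$: the propensity to choose $N$ is nonincreasing in $d$, so that being an $N$-cascade is downward closed in $d$ and being a $Y$-cascade is upward closed. With these in hand, I analyze $K_{n+1}$ from its start $(0,n+1)$: the first mover sends the state to $(1,n)$ or $(-1,n)$, so everything reduces to classifying these two continuations. The central claim is that $(-1,n)$ is an $N$-cascade. To see it, note that the only tempting deviation for a $Y$-type at $(-1,n)$ is to play $Y$, which lands in $(0,n-1)=K_{n-1}$; the hypothesis $K_{n-1}\in\classone\cup\classtwo$ caps the expected $Y$-yield of that deviation (zero if $K_{n-1}\in\classone$, about $p(n-1)$ if $K_{n-1}\in\classtwo$), while committing to $N$ secures close to $n$ guaranteed matches. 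The large-clique condition $(n+1)(1-p)^2>\pi$ is what makes the guaranteed matches dominate the discounted gamble, forcing the $N$-choice; monotonicity in $d$ then propagates the $N$-cascade down the chain $(-2,n-1),(-3,n-2),\dots$.

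It remains to classify $(1,n)$ and read off the first move. Here the hypothesis $K_n\in\classone$ enters: together with the monotonicity lemmas it certifies that the intermediate-gap continuations with $n$ deciders behave as pure cascades rather than settling into a mixed ``partial cascade,'' which would escape both classes. If $K_{n-1}\in\classtwo$, an $N$-type at $(1,n)$ prefers $Y$ (deviating to $N$ now yields only the cascade value $\approx(1-p)(n-1)$, beaten by joining the $Y$-side), so $(1,n)$ is a $Y$-cascade; combined with $(-1,n)$ being an $N$-cascade, the computation at $(0,n+1)$ shows the first node takes its type and $K_{n+1}\in\classtwo$. If $K_{n-1}\in\classone$, then even if $(1,n)$ is mixed, a first mover at $(0,n+1)$ compares the guaranteed $N$-cascade reachable through $(-1,n)$ against only a probability-$p$ chance of igniting a $Y$-cascade through $(1,n)$, and for large $n$ prefers $N$ regardless of type; hence $K_{n+1}\in\classone$. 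Either way $K_{n+1}\in\classone\cup\classtwo$.

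I expect the main obstacle to be ruling out partial cascades, i.e.\ showing that no continuation state settles into a configuration where some but not all remaining nodes cascade. This is precisely what forces me to use both hypotheses (on $K_{n-1}$ and on $K_n$, covering both parities) and to push the monotonicity arguments carefully through the off-chain deviation targets; it is also where the large-clique budget $(n+1)(1-p)^2>\pi$ is spent, since the threshold must simultaneously dominate the worst-case discounted yield of every such deviation.
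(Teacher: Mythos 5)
Your overall strategy --- reduce everything to the state variables $(d,k)$, use monotonicity in $d$ to make the $N$-branch $(-1,n)$ a guaranteed $N$-cascade, and then spend the budget $(n+1)(1-p)^2>\pi$ to show the first mover prefers that guarantee whenever the $Y$-branch $(1,n)$ is not itself a certain cascade --- is essentially the paper's. The gap is in how you classify $(1,n)$. You case-split on the class of $K_{n-1}$ and assert that $K_{n-1}\in\classtwo$ forces the $N$-type at $(1,n)$ to prefer $Y$, because ``joining the $Y$-side'' beats the cascade value $\approx(1-p)(n-1)$ of deviating to $N$. But joining the $Y$-side is only worth $\approx n$ if the onward state $(2,n-1)$ is a certain $Y$-cascade, and your monotonicity lemma does not deliver that: $K_{n-1}\in\classtwo$ tells you the $Y$-type chooses $Y$ at $(0,n-1)$ and hence at $(2,n-1)$, but it says nothing about the $N$-type at $(2,n-1)$, since that type chooses $N$ at $(0,n-1)$. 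So $K_{n-1}\in\classtwo$ is compatible with the $N$-type at $(1,n)$ choosing $N$ (the companion Lemma~\ref{lem:clique-class-2} explicitly contains such cases), and in that event your first sub-case establishes nothing about $(1,n)$ or about the first mover of $K_{n+1}$.

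The paper avoids this by conditioning directly on the observed behavior of the $N$-type at $(1,n)$ rather than on the class of $K_{n-1}$: if it plays $Y$ there, both branches from $(0,n+1)$ are certain cascades and the first mover takes its type, giving $\classtwo$; if it plays $N$ there, then with probability at least $(1-p)^2$ two consecutive $N$-types (the agent at $(1,n)$, and then the first mover of $K_{n-1}$, who chooses $N$ in either class) collapse the $Y$-branch into an $N$-cascade, bounding the $Y$-payoff by $(1-(1-p)^2)(n+1+\pi)+(1-p)^2\pi$ and forcing $\classone$. Your fallback estimate, ``only a probability-$p$ chance of igniting a $Y$-cascade,'' is both deployed in the wrong sub-case and imprecise: when $(1,n)$ is mixed, the eventual probability of a $Y$-cascade need not be at most $p$; the robust quantity is the $(1-p)^2$ lower bound on collapse, which is exactly what the hypothesis $(n+1)(1-p)^2>\pi$ is calibrated against. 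The repair is mechanical --- re-case on the behavior at $(1,n)$ and reuse your own $(-1,n)$ argument --- but as written the $K_{n-1}\in\classtwo$ sub-case does not close.
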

              \begin{proof}
              We prove this by cases, depending on how $n$ behaves if $d=1$. A diagram of behavior we know by assumption or can readily infer is shown in Figure~\ref{fig:clique-class1}.

                  \begin{figure}
                         \centering
                         \subfigure[$N$-type decision]{
                         \begin{tabular}{|c|c c c|} \hline
                           $d$: & -1 & 0 & 1 \\ \hline
                           $n$  & $N$  & $N$ & $?$ \\
                           $n+1$& $N$  & $N$ & $N$ \\ \hline
                         \end{tabular}
                         }
                         \subfigure[$Y$-type decision]{
                         \begin{tabular}{|c|c c c|} \hline
                           $d$: & -1 & 0 & 1 \\ \hline
                           $n$  & $N$  & $N$ & ? \\
                           $n+1$& $N$  & $?$ & ? \\ \hline
                         \end{tabular}
                         }
                         \caption{$\classone$ behavior}
                         \label{fig:clique-class1}
                  \end{figure}
                  We characterize $c_{n+1}$ if it is $t_{n+1}$, conditional on $c_n$ when $d=1$.

                  \emph{Case 1}: $c_n = Y$ if $d=1$ and $t_n = N$. Then it must be the case that $K_{n-1} \in \classtwo$, so this results in a $Y$-cascade. Thus $c_{n+1} = t_{n+1}$ and $K_{n+1} \in \classtwo$.

                     \emph{Case 2}:  $c_n = N$ if $d=1$ and $t_n = N$. This corresponds to $\classone$. Consider the possible payoffs. $c_{n+1} = Y$ gives a $(1-p)^2$ probability of $c_{n-1} = c_{n} = N$. This causes a $N$-cascade, by the assumption that $K_{n-1} \in (\classone \cup \classtwo)$, and results in only $\pi$ payoff:
                         \[\text{Payoff for } N = n+1\]
                         \[\text{Payoff for } Y \leq (1 - (1-p)^2) (n+1+\pi) + (1-p)^2 \pi = n + 1 + \pi - (1-p)^2 (n+1)\]
                         So in case 2, $c_{n+1} = N$, regardless of type, if $(n+1)(1-p)^2 > \pi$. Thus $K_{n+1} \in \classone$.
                  \qed
              \end{proof}

              \begin{lemma}
              \label{lem:clique-class-2}
              For large enough cliques such that $(n+1)(1-p)^2 > \pi$, if $K_{n-1} \in (\classone \cup \classtwo)$ and $K_n \in \classtwo$ (the first agent, $n$, chooses $t_n$ and starts a $t_n$-cascade), then $K_{n+1} \in (\classone \cup \classtwo)$.
              \end{lemma}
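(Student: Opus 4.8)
The plan is to mirror the proof of the companion Lemma~\ref{lem:clique-class-1}, analyzing $K_{n+1}$ through the decision of its first scheduled node and the continuation that decision induces. Recall that on a clique all schedules coincide and a node's behavior depends only on its type, on $d$, and on the number of undecided nodes remaining, and that every node of $K_{n+1}$ has $n$ neighbors. First I would observe that an $N$-type first node always chooses $N$ (it is both its type and, since $p<.5$, the more robust side), so the only open question is what a $Y$-type first node does: choosing $Y$ hands the remaining $n$ nodes the situation $d=1$, while choosing $N$ hands them $d=-1$. Because $K_n\in\classtwo$ fixes the behavior of the $n$-node subclique started from $d=0$, the whole argument reduces to extending that knowledge to the two situations $d=\pm1$ and then comparing the first node's payoffs.

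Next I would determine the continuation from the situation with $n$ undecided nodes and $d=1$. The first point at which a nascent $Y$-cascade could fail is the second node: if it is $N$-type and defects to $N$, the remaining $n-1$ nodes return to $d=0$, which is exactly the starting situation of $K_{n-1}$. This produces the same Case~1/Case~2 dichotomy as in Lemma~\ref{lem:clique-class-1}: either the $N$-type second node joins the lean (pushing $d$ to $2$ and committing a $Y$-cascade), or it defects, in which case the outcome is governed by the assumption $K_{n-1}\in(\classone\cup\classtwo)$. Symmetrically I would check the continuation from $d=-1$; there the lean is toward $N$, which is favored since $p<.5$, so I expect a guaranteed $N$-cascade whenever the $d=1$ side is in any danger of failing. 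The payoff template and behavior tables of Figure~\ref{fig:clique-class1} carry over, with the roles of the classes adjusted for $K_n\in\classtwo$.

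With the continuations in hand I would evaluate the $Y$-type first node. If the $Y$-cascade from $d=1$ is guaranteed, choosing $Y$ yields $n+\pi$ against $n$ for choosing $N$, so it chooses its type and $K_{n+1}\in\classtwo$. If instead a defection is possible, choosing $Y$ risks leaving the first node stranded on the wrong side of an $N$-cascade (earning only $\pi$), and the threshold hypothesis $(n+1)(1-p)^2>\pi$ is exactly what makes the guaranteed payoff of $N$ dominate in this regime; the $Y$-type first node then also chooses $N$, the situation passes to $d=-1$ as a guaranteed $N$-cascade, and $K_{n+1}\in\classone$. Either way $K_{n+1}\in(\classone\cup\classtwo)$.

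The step I expect to be the main obstacle is the interface between the hypothesis and what the argument needs: $K_n\in\classtwo$ only pins down tie-breaking at $d=0$, whereas the first node's decision turns on the continuations at $d=\pm1$. Bridging this gap is precisely the purpose of the case split on the second node together with the assumption $K_{n-1}\in(\classone\cup\classtwo)$, and of the threshold $(n+1)(1-p)^2>\pi$, which quantifies the intuition that in a large clique a node prefers a guaranteed cascade to any chance of being excluded from one. Some care is also needed to confirm that the $d=-1$ continuation is at least as robust as the $d=1$ continuation, which follows from $p<.5$ and lets both the $\classtwo$ and the $\classone$ conclusions go through consistently.
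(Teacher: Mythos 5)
Your overall strategy matches the paper's: reduce everything to the behavior of the node that decides with $n$ nodes remaining at $d=\pm1$, use the hypotheses on $K_{n-1}$ and $K_n$ to pin down those continuations, and settle the first node's choice by a payoff comparison driven by $(n+1)(1-p)^2>\pi$. The gap is in your middle step. You assume a clean dichotomy: either the $Y$-lean at $d=1$ is a guaranteed cascade, or (if an $N$-type second node can defect at $d=1$) the $d=-1$ continuation is a guaranteed $N$-cascade, so that the first node's payoff for choosing $N$ is a safe $n$ matches and the threshold inequality forces $K_{n+1}\in\classone$. But $p<.5$ only gives the implication in one direction: if an $N$-type conforms to the $Y$-lean at $d=1$, then a $Y$-type conforms to the $N$-lean at $d=-1$. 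The contrapositive does not rule out the intermediate regime in which the second node chooses its own type at \emph{both} $d=1$ and $d=-1$ --- the paper's Case~2 of Lemma~\ref{lem:clique-class-2}, where $c_n=N$ for an $N$-type at $d=1$ and $c_n=Y$ for a $Y$-type at $d=-1$. In that regime neither choice of the first node guarantees a cascade, your ``guaranteed payoff of $N$'' is not $n$, and the comparison collapses.

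The paper resolves this third case by a different argument: since a $Y$-type already prefers $Y$ in the strictly less favorable situation $d=-1$, a $Y$-type first node at $d=0$ also chooses $Y$, so $K_{n+1}\in\classtwo$ there --- the opposite of the $\classone$ conclusion your case~(b) would produce. To repair the proof you need to split your case~(b) in two: apply the $(n+1)(1-p)^2>\pi$ computation only when the $Y$-type second node also chooses $N$ at $d=-1$ (so that choosing $N$ really does lock in an $N$-cascade, giving $\classone$), and supply a separate monotonicity argument for the remaining subcase, concluding $\classtwo$.
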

              \begin{proof}
                  We prove this by cases, conditional on $c_n$ and $d$. A diagram of behavior we know by assumption or can readily infer is shown in Figure~\ref{fig:clique-class2}.

                  \begin{figure}
                         \centering
                         \subfigure[$N$-type decision]{
                         \begin{tabular}{|c|c c c|} \hline
                           $d$: & -1 & 0 & 1 \\ \hline
                           $n-1$& $N$  & $N$ & $Y$ \\
                           $n$  & $N$  & $N$ & ? \\
                           $n+1$& $N$  & $N$ & ? \\ \hline
                         \end{tabular}
                         }
                         \subfigure[$Y$-type decision]{
                         \begin{tabular}{|c|c c c|} \hline
                           $d$: & -1 & 0 & 1 \\ \hline
                           $n-1$& $N$  & ? & $Y$ \\
                           $n$  & ?  & $Y$ & $Y$ \\
                           $n+1$& ?  & ? & ? \\ \hline
                         \end{tabular}
                         }
                         \caption{$\classtwo$ behavior}
                         \label{fig:clique-class2}
                  \end{figure}
                  We characterize $c_{n+1}$ if $t_{n+1}=Y$, conditional on $c_n$ and $d$.

                  \emph{Case 1}: $c_n = Y$ if $d=1$ and $t_n = N$. Then it must be the case that $K_{n-1} \in \classtwo$, so this results in a $Y$-cascade. Thus $c_{n+1} = t_{n+1}$ and $K_{n+1} \in \classtwo$.

                     \emph{Case 2}: $c_n = N$ if $d=1$ and $t_n = N$, and $c_n = Y$ if $d=-1$ and $t_n = Y$. Then it must be the case that $K_{n-1} \in \classtwo$. The known behavior for this case is shown in Figure~\ref{fig:clique-class2-case2}. Since $c_n = Y$ if $d=-1$ and $t_n = Y$, $c_{n+1} = Y$ when $d=0$ and $t_{n+1}=Y$. Thus $K_{n+1}\in \classtwo$

                         \begin{figure}
                                \centering
                                \subfigure[$N$-type decision]{
                                \begin{tabular}{|c|c c c|} \hline
                                  $d$: & -1 & 0 & 1 \\ \hline
                                  $n-1$& $N$  & $N$ & $Y$ \\
                                  $n$  & $N$  & $N$ & $N$ \\
                                  $n+1$& $N$  & $N$ & ? \\ \hline
                                \end{tabular}
                                }
                                \subfigure[$Y$-type decision]{
                                \begin{tabular}{|c|c c c|} \hline
                                  $d$: & -1 & 0 & 1 \\ \hline
                                  $n-1$& $N$  & $Y$ & $Y$ \\
                                  $n$  & $Y$  & $Y$ & $Y$ \\
                                  $n+1$& ?  & ? & ? \\ \hline
                                \end{tabular}
                                }
                                \caption{$\classtwo$, case 2 behavior}
                                \label{fig:clique-class2-case2}
                         \end{figure}

                     \emph{Case 3}: $c_n = N$ if $d=1$ and $t_n = N$, and $c_n = N$ if $d=-1$ and $t_n = Y$. Then it must be the case that $K_{n-1} \in \classtwo$. The known behavior for this case can be seen in Figure~\ref{fig:clique-class2-case3}. By the same math as in case 2 of the proof of Lemma~\ref{lem:clique-class-1}, we get $c_{n+1} = N$, regardless of type, if $(n+1)(1-p)^2 > \pi$. Thus $K_{n+1} \in \classone$. \qed

                  \begin{figure}
                         \centering
                         \subfigure[$N$-type decision]{
                         \begin{tabular}{|c|c c c|} \hline
                           $d$: & -1 & 0 & 1 \\ \hline
                           $n-1$& $N$  & $N$ & $Y$ \\
                           $n$  & $N$  & $N$ & $N$ \\
                           $n+1$& $N$  & $N$ & ? \\ \hline
                         \end{tabular}
                         }
                         \subfigure[$Y$-type decision]{
                         \begin{tabular}{|c|c c c|} \hline
                           $d$: & -1 & 0 & 1 \\ \hline
                           $n-1$& $N$  & $Y$ & $Y$ \\
                           $n$  & $N$  & $Y$ & $Y$ \\
                           $n+1$& ?  & ? & ? \\ \hline
                         \end{tabular}
                         }
                         \caption{$\classtwo$, case 3 behavior}
                  \label{fig:clique-class2-case3}
                  \end{figure}
              \end{proof}

              Next, we show that a clique eventually falls into $\classone$ or $\classtwo$.
              \begin{lemma}
                  \label{lemma:clique-class-exists}
                  For any $p$ and $\pi$, there exists some large enough $n$ such that $K_n, K_{n+1} \in (\classone \cup \classtwo)$.
              \end{lemma}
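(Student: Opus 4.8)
The plan is to treat only the case $\pi \geq 1$, since Lemma~\ref{lem:clique-pi-lessthan-one} already places every clique in $\classtwo$ when $\pi < 1$. Fix $p$ and $\pi$, and, as throughout, index the agents on $K_n$ by their scheduling time while tracking the running imbalance $d = m_Y - m_N$ among decided nodes at each agent's turn. My first step is to isolate a \emph{locked-cascade} band: I will show there is a constant $c_0$, depending only on $\pi$ and not on $n$, such that if $d \geq c_0$ when some agent moves then every remaining agent chooses $Y$ (and symmetrically an $N$-cascade results if $d \leq -c_0$). This follows by backward induction on the number of remaining agents: once $|d| \geq c_0 > \pi$, the guaranteed matches an agent obtains from the existing majority already dominate the $\pi$ bonus for choosing its type, and since matching the majority only increases $|d|$ for the next agent, the cascade is self-reinforcing. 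It is essential that $c_0$ is independent of $n$.

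With the cascade region pinned down, the class of a clique is determined entirely by behavior inside the bounded middle band $-c_0 < d < c_0$, which contains $O(1)$ states regardless of $n$. For each band state $d$ and each choice of the current agent I would track two continuation quantities as functions of the number $k$ of remaining agents: the probability that the eventual outcome is a $Y$-cascade versus an $N$-cascade, and the expected number of future matches. Because the resulting process is an absorbing walk on the finite band (absorbed at $\pm c_0$), I expect these quantities to converge as $k \to \infty$ to limits that are $n$-independent; the band is a fixed finite-state system, and increasing $n$ merely lengthens the effective horizon. The payoff comparison each band agent faces is then dominated by a $\Theta(k)$ term---the matches gained by joining, or lost by being stranded opposite, the eventual cascade---so for $k$ large the sign of the comparison, and hence the agent's decision, equals its limiting value.

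Finally I would read off the class from this limiting behavior at the root $d = 0$. If in the limit a single early $Y$ self-reinforces into a $Y$-cascade (and a single $N$ into an $N$-cascade), then the first agent plays its type and the clique lies in $\classtwo$; if instead the negative drift ($p < 1/2$) causes a lone $Y$ to collapse with probability bounded away from $0$, then a $Y$-type first mover, facing a $\Theta(n)$ penalty for being stranded opposite the likely $N$-cascade, prefers the guaranteed $N$-cascade and the clique lies in $\classone$. Because the limiting description is a single fixed object, the same class is obtained for all sufficiently large $n$; in particular some $n$ has both $K_n$ and $K_{n+1}$ in $\classone \cup \classtwo$, which is exactly the assertion of the lemma, and which then seeds the induction of Lemmas~\ref{lem:clique-class-1} and~\ref{lem:clique-class-2}.

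The hard part will be the convergence step together with ruling out \emph{persistent mixed behavior}: making rigorous that the dominant $\Theta(k)$ term in each band comparison has a definite sign for all large $k$, despite each agent's decision depending on the still-unresolved decisions of the agents after it under negative drift. I expect this to require a monotonicity or coupling argument---showing that a larger imbalance $d$ can only make a $Y$-cascade weakly more likely---used in tandem with the locked-cascade bound to confine the analysis to the finite band, so that the limiting quantities are genuinely limits over a finite state space.
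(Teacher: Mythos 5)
Your opening move---a locked-cascade band of width $O(\pi)$ that confines the analysis to a finite middle band---matches the paper's anchor (the paper starts from $d=\lfloor-\pi\rfloor$, where an $N$-cascade is certain). From there the arguments diverge, and your route has two genuine gaps. First, the convergence step is not merely unproven but, as you have framed it, false: the transition rules inside the band are the agents' equilibrium decisions, which depend on the number $k$ of remaining agents through backward induction, so the band is not a fixed finite-state chain with a single limit. Indeed the paper observes that at the boundary between $\classone$ and $\classtwo$ a clique can alternate between the two classes with the parity of $n$, so there is no single ``limiting description'' from which the class of every large $K_n$ can be read off, and your concluding claim that ``the same class is obtained for all sufficiently large $n$'' is stronger than what is true. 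The lemma needs only membership in the union $\classone\cup\classtwo$, and the paper obtains exactly that with no convergence claim: using the fact that an $N$-type node chooses $N$ whenever $d\le 0$ (via Theorem~\ref{thm:non-adpative-p} and Lemma~\ref{prop:clique-monotonic-d}) together with Lemma~\ref{prop:clique-certain-cascade} (on a large enough clique a node always prefers a certain cascade to an uncertain one), it marches the certain-$N$-cascade region up one level at a time from $d=\lfloor-\pi\rfloor$ to $d=-1$. The node at $d=0$ then always has a certain $N$-cascade available and takes $Y$ only if $Y$ also yields a certain $Y$-cascade, which is precisely the $\classone$/$\classtwo$ dichotomy, for $K_n$ and $K_{n+1}$ alike.

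Second, your final case split is incomplete even granting convergence. You allow only ``a lone $Y$ locks a cascade'' versus ``a lone $Y$ collapses with probability bounded away from $0$.'' A third possibility remains: the first agent's expected loss from an \emph{uncertain} $Y$-continuation could be $O(1)$ (a single deviation followed by recovery costs one match), in which case a $Y$-type first mover with $\pi>1$ still plays $Y$ while the remaining agents do not deterministically match---so $K_n$ lies in neither class, since $\classtwo$ requires that the first agent \emph{necessarily} initiate a cascade of its type. Closing this hole is exactly the role of Lemma~\ref{prop:clique-certain-cascade} in the paper's proof; your argument needs an analogous statement that any non-certain continuation carries an expected penalty of order $n$ relative to the guaranteed cascade, which your $\Theta(k)$ heuristic asserts but does not establish.
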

        \begin{proof}
                     Fix $p$ and $\pi$. By backward induction one can see that there is always an $N$-cascade when $d=\lfloor -\pi \rfloor$. Call this threshold $\underline d$. Additionally, note that an $N$-type node always chooses $N$ when $d\leq 0$. This can be seen by application of Theorem~\ref{thm:non-adpative-p} and Lemma~\ref{prop:clique-monotonic-d}.

                     So, for large enough $n$, there must be an $N$-cascade when $d = \underline d +1$. This can be seen by application of Lemma~\ref{prop:clique-certain-cascade} and the observations that $d=\underline d$ gives a certain $N$-cascade but $d = \underline d + 2$ does not give a certain $Y$-cascade.

                     This argument can be repeated to show that there must be an $N$-cascade when $d = \underline d + 2$. This reasoning can be iterated until it breaks down at $d=0$. But the node choosing at $d=0$ always has the option of a certain $N$-cascade (for large enough $n$).
                     So it will choose $Y$ only if it also is faced with the option of a certain $Y$-cascade.
                     In this case, a node chooses its type $t$ and starts a cascade of that type. By the same reasoning the next node $n+1$ also either chooses $N$ or its type and starts a cascade. \qed
              \end{proof}

        \begin{lemma}
              \label{prop:clique-monotonic-d}
              Decisions are monotonic in $d$. If $c_n = Y$ for some difference $d$ and type $t$, then $c_n = Y$ for difference $d+1$ and the same type. Similarly, if $c_n = N$ for some difference $d$ and type $t$, then $c_n = N$ for difference $d-1$ and the same type.
              \end{lemma}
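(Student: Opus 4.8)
The plan is to prove the lemma by backward induction on the number of still-undecided nodes, reducing each node's choice to a threshold on a single ``$Y$-advantage'' quantity and then controlling how that advantage moves with $d$ through a coupling over the continuation play. The first ingredient is the clique's key simplification: because all nodes share the same neighborhood, the entire relevant state faced by a deciding node is the pair $(d,k)$ together with its own type, where $k$ is the number of undecided nodes (itself included). Writing $U_Y(d,k)$ and $U_N(d,k)$ for the deciding node's expected number of matches (the $\pi$-term excluded) from choosing $Y$ versus $N$, the indifference convention makes the decision a pure threshold: a type-$Y$ node chooses $Y$ iff $\Delta(d,k) := U_Y(d,k) - U_N(d,k) \ge -\pi$, and a type-$N$ node chooses $Y$ iff $\Delta(d,k) > \pi$. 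Both rules are monotone in $\Delta$, so it suffices to prove that $\Delta(d,k)$ is nondecreasing in $d$ for every $k$; the two directions in the lemma statement then follow by applying the threshold to $d{+}1$ and to $d{-}1$.

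Next I would expand the increment $\Delta(d+1,k)-\Delta(d,k)$. Choosing $Y$ at difference $d+1$ rather than $d$ earns the deciding node one extra immediate $Y$-match and launches the continuation from difference $d+2$ instead of $d+1$; choosing $N$ at $d+1$ rather than $d$ forfeits one immediate $N$-match and launches the continuation from $d$ instead of $d-1$. Letting $G_Y(d',k')$ and $G_N(d',k')$ denote the expected numbers of future $Y$- and $N$-decisions among $k'$ remaining nodes started at difference $d'$, this bookkeeping gives
\[ \Delta(d+1,k)-\Delta(d,k) = 2 + \bigl[G_Y(d{+}2,k{-}1)-G_Y(d{+}1,k{-}1)\bigr] - \bigl[G_N(d,k{-}1)-G_N(d{-}1,k{-}1)\bigr]. \]
So the whole lemma reduces to showing that $G_Y$ is nondecreasing and $G_N$ nonincreasing in the difference argument, which forces the right-hand side to be at least $2$.

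Establishing those two monotonicities is the step I expect to be the main obstacle, and it is where the induction and the coupling enter. Fixing a common type profile for the $k-1$ future nodes, I would run the dynamics in parallel from initial differences $d'$ and $d'+1$ and maintain the invariant that the higher run's current difference stays $\ge$ the lower run's at every subsequent decision. This invariant is preserved because each future node has strictly fewer than $k$ nodes remaining, so the inductive hypothesis makes its choice monotone in the difference it faces: if the two runs agree, the gap is unchanged, and the only admissible discrepancy---$Y$ in the higher run and $N$ in the lower---widens the gap by $2$, while the reverse discrepancy is ruled out by monotonicity. Hence, per type profile, the higher run makes pathwise at least as many $Y$-choices and at most as many $N$-choices; averaging over the i.i.d.\ types yields $G_Y(\cdot,k{-}1)$ nondecreasing and $G_N(\cdot,k{-}1)$ nonincreasing in $d'$. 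The base case $k=1$ is immediate, since with no future nodes $G_Y=G_N=0$ and $\Delta(d,1)=m_Y-m_N=d$ is trivially increasing; this closes the backward induction and proves $\Delta(d,k)$ nondecreasing in $d$, and with it the lemma.
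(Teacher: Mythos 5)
Your proof is correct, and it takes a genuinely different route from the paper's. The paper disposes of this lemma with a short payoff comparison: it assumes the continuation collapses into either a $Y$-cascade or an $N$-cascade (with $q_Y=1$ when the node chooses $Y$ and some residual payoff $D$ otherwise) and concludes that ``for large enough $n$'' the $Y$-payoff dominates. That argument is really an asymptotic statement tailored to the way the lemma is consumed inside Lemma~\ref{lemma:clique-class-exists}, and it silently presupposes structure (the cascade dichotomy) that monotonicity itself is partly meant to justify. You instead prove the literal statement for every clique size: you reduce each decision to a threshold on the single quantity $\Delta(d,k)=U_Y(d,k)-U_N(d,k)$, observe that both the type-$Y$ rule ($\Delta\ge-\pi$) and the type-$N$ rule ($\Delta>\pi$) are monotone in $\Delta$, and then show $\Delta$ is nondecreasing in $d$ by a backward induction whose inductive step is a pathwise coupling of the continuation play (same type profile, higher run's difference never falls below the lower run's, the only admissible discrepancy being $Y$-above/$N$-below). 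This is essentially the same coupling device the paper uses elsewhere (Lemma~\ref{lem:choose-own-type} and Theorem~\ref{prop:monotonic-p}), redeployed in $d$ rather than in $p$, and it buys a cleaner, non-asymptotic proof. Two small bookkeeping remarks: at fixed $k$ on a fixed clique the reachable differences all share the parity of $n-k$, so the honest comparison is $d$ versus $d+2$ (flip one decided neighbor from $N$ to $Y$), which is also where your immediate-term constant $2$ comes from; and when the coupled gap exceeds one step you need to iterate the one-step inductive hypothesis to conclude that the higher run chooses $Y$ whenever the lower does. Neither affects correctness, since all you need is that the increment of $\Delta$ is nonnegative.
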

              \begin{proof}
                     Assume, without loss of generality, that agent $n$ is scheduled to make a decision and $d > 0$, $t_n = N$\@. Let $q_{Y}=1$ be the probability of a $Y$-cascade if $c_n = Y$ and $q_N<1$ be the probability of an $N$-cascade if $c_n = N$\@. Also assume that, if an $N$-cascade does not occur, then a $Y$-cascade occurs and $n$ gets only some constant payoff $D$\@.
                     Then consider $n$'s payoffs:
                     \[\text{Payoff for } N = q_N n + (1-q_N)D + \pi\]
                     \[\text{Payoff for } Y = n\]
                     For large enough $n$, the payoff for $Y$ will always surpass the payoff for $N$\@. \qed
              \end{proof}

              \begin{lemma}
              \label{prop:clique-certain-cascade}
              On a large enough clique $K_n$, a node always prefers a certain (probability~$1$) cascade over an uncertain (probability less than~1) cascade, no matter $d$, $p$, or $\pi$.
              \end{lemma}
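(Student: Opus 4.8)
The plan is to compare agent $\alpha$'s expected utility under its two available actions directly and show that the action yielding the certain cascade wins once $n$ is large. Write $k$ for the number of agents scheduled after $\alpha$; on the clique these are exactly $\alpha$'s undecided neighbors, and in every situation to which the lemma is applied $k$ is large. Let $d = m_Y - m_N$ record the difference among already-decided nodes. I would decompose $\alpha$'s expected utility under any action into three pieces: the type bonus $\pi\,\mathbbm{1}(c_\alpha = t_\alpha) \le \pi$; the agreement with already-decided neighbors, which equals $m_Y$ if $\alpha$ plays $Y$ and $m_N$ if it plays $N$, and so differs between the two actions by exactly $\abs{d}$; and the expected agreement with the $k$ undecided neighbors. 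The first two pieces contribute only an $O(1)$ difference between the actions, since $\pi$ is fixed and $\abs{d}=O(\pi)$ throughout the bounded-width band of differences on which neither action forces an immediate cascade. Everything therefore hinges on the third piece.

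For the certain-cascade action every one of the $k$ undecided neighbors eventually matches $\alpha$, so this piece attains its maximum value $k$. The heart of the proof is to show the uncertain-cascade action yields expected agreement only $k - \Omega(k)$. Here I would track $d$ as the undecided nodes decide in sequence: while $d$ stays inside the band each deciding node simply plays its type, so $d$ performs a random walk that steps $+1$ with probability $p$ and $-1$ with probability $1-p$, absorbed the instant it reaches either cascade threshold (after which all remaining undecided nodes match that threshold's type). Since the band has width $O(\pi)$ and the walk has constant downward drift ($p < 1/2$), gambler's-ruin estimates give two facts that are uniform in $k$ and depend only on $(p,\pi)$: the walk exits the band in expected $O(1)$ steps, and it exits through the threshold adverse to $\alpha$'s chosen action with probability bounded below by a positive constant $\rho = \rho(p,\pi)$. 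This last point is exactly where \emph{uncertain} is used: $\alpha$'s action leaves $d$ strictly interior, so both exits carry positive probability. On the adverse event all but $O(1)$ of the $k$ undecided nodes cascade against $\alpha$, so the expected number disagreeing with $\alpha$ is at least $\rho\,k - O(1) = \Omega(k)$.

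Combining the pieces, the gap is $U_{\text{certain}} - U_{\text{uncertain}} \ge \Omega(k) - O(1)$, where the $O(1)$ absorbs both the type-bonus difference ($\le \pi$) and the decided-neighbor difference ($\abs{d}=O(\pi)$). For $n$, and hence the relevant $k$, large enough this is strictly positive, so $\alpha$ strictly prefers the certain cascade. The same computation covers the case in which the certain cascade is a $Y$-cascade running \emph{against} the majority bias rather than an $N$-cascade running with it: in either direction it is the adverse exit probability that must be bounded below, and gambler's ruin on a bounded interval supplies this for any fixed $(p,\pi)$, so the conclusion is uniform over $d$, $p$, and $\pi$.

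The step I expect to be the main obstacle is this $\Omega(k)$ lower bound on expected disagreement for the uncertain action. Two things must be pinned down carefully and uniformly in $k$: first, that the band of non-forcing differences really has width $O(\pi)$, so the walk cannot wander indefinitely and the decided-neighbor term stays $O(1)$; and second, that the adverse threshold is reached both with constant probability \emph{and} early enough—within $o(k)$ steps with high probability—that a constant fraction of the $k$ undecided nodes still remain to cascade the wrong way. The exponential tail of the exit time of a biased walk from a bounded interval is what makes this ``early enough'' requirement go through.
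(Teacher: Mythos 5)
You should know at the outset that the paper never actually writes down a proof of Lemma~\ref{prop:clique-certain-cascade}: it is stated bare, followed only by a remark on how it is used in Lemma~\ref{lemma:clique-class-exists}. The nearest thing to the paper's argument is the payoff comparison inside the proof of the adjacent Lemma~\ref{prop:clique-monotonic-d}, which sets the certain-cascade payoff to $n$ and the uncertain one to $q_N n + (1-q_N)D + \pi$ and concludes ``for large enough $n$'' the former wins. Your proposal is exactly that comparison, but done honestly: you correctly identify that the conclusion fails unless the adverse probability $1-q$ is bounded below \emph{uniformly in $n$} (otherwise $(1-q)\,k$ need not dominate the $O(1)$ terms from $\pi$ and $\lvert d\rvert$), and you supply that bound via gambler's ruin on the $O(\pi)$-width band of non-forcing differences. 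That is the genuinely missing ingredient in the paper's sketch, and your version is the stronger one.

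One step deserves more care than you give it. You assert that while $d$ is inside the band ``each deciding node simply plays its type,'' so the walk is i.i.d.\ $\pm 1$ with probabilities $p$ and $1-p$. That is not the equilibrium behavior in general: the paper itself notes that an $N$-type node always chooses $N$ when $d \le 0$, and in-band behavior depends on both $d$ and the number of remaining nodes, so the walk is time-inhomogeneous with some deterministic steps. The bound survives, but the argument should be: any deterministic step toward the adverse threshold only helps; a deterministic step \emph{away} from it at some intermediate value would (on a one-dimensional $\pm 1$ walk) make the adverse threshold unreachable and contradict the hypothesis that the cascade is uncertain; and every remaining probabilistic step goes the adverse way with probability at least $\min(p,1-p)$. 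Since the band has width $O(\pi)$ independent of $n$, a direct path of bounded length yields $\rho \ge \min(p,1-p)^{O(\pi)} > 0$, which is all you need. With that patch, and the (reasonable, and explicitly stated) restriction to situations where the number $k$ of undecided neighbors is large, your proof is complete and is more rigorous than anything the paper offers for this lemma.
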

        These two lemmas, combined with Theorem~\ref{thm:non-adpative-p}, are used to show that the range of differences for which nodes consider choosing their type shrinks as $n$ grows larger. Eventually the range shrinks enough that nodes have the option of a guaranteed cascade, completing the proof of Theorem~\ref{thm:clique-classes}. \qed
       \end{proof}

       \begin{proof}[of Theorem \ref{thm:clique-strategic-cascade}]
              We consider strategic users on a clique of size $> 1$ with $1 \leq \pi < 1+p$. We show, using backward induction, that the first agent to choose always selects its type and that all following agents select the same choice. A demonstration of the backward induction can be found in Figure~\ref{fig:clique-choice-table}.

              \begin{figure}
                     \centering
                     \begin{tabular}{|c||r|c|c|c|c|c|}
                     \hline
                     $k$ & $d$: & $-2$ & $-1$ & $0$ & 1 & 2 \\ \hline \hline
                     0 & Choice: & $N$ & $S$ & $S$ & $S$ & $Y$ \\ \hline
                      & Choice: & $N$ & $N$ & $S$ & $Y$ & $Y$ \\
                     1 & E[$N$ matches]: & 3 & 2 & $1-p$ & $1-p$ & $1-p$ \\
                      & E[$Y$ matches]: & $p$ & $p$ & $p$ & $2$ & $3$ \\  \hline
                      & Choice: & $N$ & $N$ & $S$ & $Y$ & $Y$ \\
                     2 & E[$N$ matches]: & 4 & 3 & $2$ & $1-p$ & $0$ \\
                      & E[$Y$ matches]: & $0$ & $p$ & 2 & $3$ & $4$ \\
                     \hline
                     \end{tabular}
                     \caption{The choice of a node on the clique for varying $d$, $k$, and $t$}
                     \label{fig:clique-choice-table}
              \end{figure}

              Figure~\ref{fig:clique-choice-table} displays the choice a node would make if there were $k$ undecided agents and the difference between choices already made, $d = m_Y - m_N$, is given in the top row. $N$ (resp. $Y$) means that both agents choose $N$ (resp. $Y$) no matter their type. $S$ means that the choices are Split: agents choose their type. The rows below ``Choice'' display the expected number of matches from other agents if an agent chooses $N$ or $Y$\@.

              To obtain the behavior observed for $k=1$, we need $(1-p) + \pi < 2$ and $p + \pi < 2$, or, rearranging: $\pi < 1 + p$ and $\pi < 1 + (1-p)$. Since $p < (1-p)$, this is satisfied with any $\pi < 1 + p$.

              The behavior for $k=2$ follows directly from the behavior observed for $k=1$, as all inequalities are only made looser. The behavior for $k>2$ follows inductively from the behavior for $k\leq 2$. Assuming an agent with $k-1$ remaining undecided agents chooses its type only when $d=0$, an agent with $k$ remaining undecided agents will behave similarly. The inequalities for this decision are identical to the inequalities for $k=2$ with a multiplier of $k$ on each side. Our table shows that, once the balance of choices shifts in one direction, it is in a node's best interest to choose the same way. Thus the first node to choose starts a cascade of the type that it selects.  
              \qed
       \end{proof}

       Theorem~\ref{thm:clique-strategic-cascade} has the following immediate corollary:
       \begin{corollary}
              \label{cor:clique-bound-lessthan-two}
              For any clique graph with any $1 \leq \pi < 1 + p$, in the limit of $n$, the optimal strategic performance is $1 + \frac{1}{p}(1 - p)(1 - 2p)$ times the optimal myopic performance.
       \end{corollary}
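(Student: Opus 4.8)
The plan is to convert the two $Y$-cascade probabilities---one proved, one essentially already derived---into performances in the large-$n$ limit and then take their ratio; the corollary is ``immediate'' precisely because both cascade probabilities are in hand. First I would record that on a clique all nodes occupy symmetric positions, so every schedule is identical and the ``optimal performance'' in each setting is just the single well-defined performance. This removes all scheduling/optimization from the argument.

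For the strategic side I would invoke Theorem~\ref{thm:clique-strategic-cascade}: for $1 \le \pi < 1+p$ the clique is a total cascade, the first agent choosing its type and every later agent matching. Hence a $Y$-cascade occurs exactly when $t_1 = Y$, with probability $p$, and in that event all $n$ nodes choose $Y$. The strategic performance is therefore exactly $pn$ for every $n$ (fraction $p$), so no limiting argument is needed here. For the myopic side I would first characterize behavior: since $p<\tfrac12$ forces $\pi < 1+p < 2$, a myopic node chooses its type whenever $|d|\le 1$ and follows the majority once $|d|\ge 2$, where $d = m_Y-m_N$ over the already-decided nodes. Thus $d$ executes a biased random walk starting at $0$, stepping $+1$ with probability $p$ and $-1$ with probability $1-p$, absorbed at $d=+2$ (a $Y$-cascade) and $d=-2$ (an $N$-cascade). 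A standard gambler's-ruin computation gives absorption at $+2$ with probability $\tfrac{p^2}{p^2+(1-p)^2}$, matching the figure quoted in Section~\ref{sec:clique-strategic-better}. Because the walk is absorbed after an expected $O(1)$ number of steps, only $O(1)$ nodes decide before the cascade begins, a vanishing fraction of $n$; hence the expected fraction of $Y$-choices converges to $\tfrac{p^2}{p^2+(1-p)^2}$ as $n\to\infty$.

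Finally I would take the ratio of the two limiting performances and simplify:
\[
\frac{p}{\,p^2/\bigl(p^2+(1-p)^2\bigr)\,}
= \frac{p^2+(1-p)^2}{p}
= \frac{2p^2 - 2p + 1}{p}
= 1 + \frac{1}{p}(1-p)(1-2p),
\]
which is exactly the claimed factor (and is invariant to whether performance is measured as an expected count or an expected fraction, since both are divided by $n$). The only nonroutine step is the myopic analysis: identifying the correct absorbing random walk for $d$ and confirming that the $O(1)$ pre-cascade nodes contribute a vanishing fraction, so that the limiting myopic performance is governed purely by the $Y$-cascade probability. Everything else is the citation of Theorem~\ref{thm:clique-strategic-cascade} and the closing algebraic identity.
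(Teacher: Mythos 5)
Your proposal is correct and follows essentially the same route as the paper's proof: strategic performance $p$ from Theorem~\ref{thm:clique-strategic-cascade}, myopic performance $\tfrac{p^2}{p^2+(1-p)^2}$ from the gambler's-ruin analysis of the $\pm 2$-absorbed walk on $d$, and the closing algebraic identity. The paper states this more tersely, while you usefully make explicit why the $O(1)$ pre-cascade nodes are negligible in the limit.
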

       \begin{proof}
              This follows from the fact that two $Y$ decisions are required to start a myopic $Y$-cascade and only one is required to start a strategic $Y$-cascade. Thus in the limit of $n$, the probability of a $Y$-cascade is $\frac{p^2}{p^2 + (1-p)^2} \approx p^2$ for myopic agents, a well-known bound from the ``gamblers ruin'' problem, but is $p$ for strategic agents.  The result immediately follows. \qed
       \end{proof}

              \begin{table*}              
              \tbl{Utilities for node decisions on a clique with one external neighbor choosing $Y$.}{%
              \centering
              \begin{tabular}{|c||r|c|c|c|c|c|c|c|}
              \hline
                     k & $d$: & $-3$ & $-2$ & $-1$ & $0$ & 1 & 2 & 3 \\ \hline \hline
                     0 & Choice & $S$ & $S$ & $S$ & $S$ & $S$ & $Y$ & $Y$ \\ \hline

                       & Choice         & $N$ & $S$ & $S$ & $S$ & $S$ & $Y$ & $Y$ \\
                     1 & E[$N$ matches] & 4 & $3-p$ & $2-p$ & $1-p$ & $1-p$ & $1-p$ & $\dots$ \\
                       & E[$Y$ matches] & $1+p$ & $1+p$ & $1+p$ & $1+p$ & 3 & 4 & 5 \\  \hline

                       & Choice     & $\dots$ & $N$ & $\dots$ & $S$ & $\dots$ & $Y$ & $\dots$ \\
                     2 & E[$N$ matches] & $\dots$ & 4 & $\dots$ & $2-2p$ & $\dots$ & $(1-p)(2-p)$ & $\dots$ \\
                       & E[$Y$ matches] & $\dots$ & $1+2p$ & $\dots$ & $\dots$ & $\dots$ & 5 & $\dots$ \\  \hline

                       & Choice         & $\dots$ & $\dots$ & $N$ & $\dots$ & $S$ & $\dots$ & $\dots$ \\
                     3 & E[$N$ matches] & $\dots$ & $\dots$ & $4$ & $\dots$ & $3-O(p)$ & $\dots$ & $\dots$ \\
                       & E[$Y$ matches] & $\dots$ & $\dots$ & $1+O(p)$ & $\dots$ & 5 & $\dots$ & $\dots$\\  \hline

                       & Choice         & $\dots$ & $\dots$ & $\dots$ & $N$ & $\dots$ & $\dots$ & $\dots$ \\
                     4 & E[$N$ matches] & $\dots$ & $\dots$ & $\dots$ & $4$ & $\dots$ & $\dots$ & $\dots$ \\
                       & E[$Y$ matches] & $\dots$ & $\dots$ & $\dots$ & $1+O(p)$ & $\dots$ & $\dots$ & $\dots$ \\  \hline
              \end{tabular}}
              \label{tab:clique}
              \end{table*}

\section{Council and Cloud Appendix}
\label{app:council}
\label{app:cloud}
       \begin{proof}[of Lemma~\ref{lem:council-behavior}]
              We prove this lemma by directly comparing expected utilities from Table~\ref{tab:clique}. We explicitly list utilities for direct comparison, when relevant and non-obvious. All utilities assume there is an external neighbor choosing $Y$ with probability 1. Here $d = m_Y - m_N$ and $k$ is the number of currently undecided agents in the clique.

              We can see from Table~\ref{tab:clique} that, for small enough $p$, and the appropriate $2 < \pi < 3$, the first scheduled node in the clique will choose $N$.
              We can also see that the table continues for cliques larger than size 6, as payoffs are shifting in favor of $N$ as the clique grows. \qed
       \end{proof}

 \begin{proof}[of Lemma~\ref{lem:3-chooses-type}]
              If $a (1-p) + \pi < b p$, then agent~2 will always match the choice of 3. The case in which 2's payoff is greatest for not matching 3 (and thus the lemma is hardest to satisfy) is $t_2 = N$, $c_1 = N$, and $t_3 = Y$\@. In this case 2 will obtain payoff $a + \pi + b (1-p)$ for choosing $N$, or payoff $a p + b$ for choosing $Y$\@. A comparison of these payoffs shows that 2 will choose $Y$ when $a (1-p) + \pi < b p$, which is true by assumption. So 2 will choose $Y$, matching with 3. All other combinations of types result only in a higher payoff to 2 for matching with 3. \qed
 \end{proof}
  
       \begin{proof}[of Lemma~\ref{lem:2-chooses-1}]
              Note that 3 will always match 2 because $b > b(1-p) + \pi > bp + \pi$, by assumption (these inequalities simplify to $bp > \pi$). In the worst case, 1 has chosen $Y$ and $t_2 = N$\@. In this case, 2 will obtain utility of $a + b$ for choosing $Y$ and utility of $a (1-p) + \pi + b$ for choosing $N$\@. Thus 2 will choose $Y$ if $ap > \pi$, which we assume to be true. \qed
       \end{proof}

\section{Omitted Miscellaneous Results}
  \label{app:misc}
       \begin{proof}[of Theorem~\ref{prop:unique-spe}]
              This theorem follows from the fact that the cascade scheduling problem can be expressed as a finite extensive form game with perfect information.
              When nodes are never indifferent between choices, the unique PBE can be constructed using backward induction, following \citet[Prop.~9.B.2]{Mas95}.
              When nodes resolve their indifference in a consistent way, such as choosing their type, the same backward induction still selects a unique PBE\@.
              \qed
       \end{proof}

       \begin{proof}[of Theorem~\ref{prop:remove-indifference}]
              Let $I$ be the set containing all agent-situation pairs $(i, R)$ where an agent is indifferent between its two choices. $(i, R)$ means that the situation is $R$ and the next agent to choose is $i$.
              Let $\overline I$ be the set of agent-situation pairs where agents are not indifferent.
              $\overline I$ is finite, so there must be some pair $(i^*, R^*) \in \overline I$ where $i^*$ has minimal difference between utility for $Y$ and $N$\@.
              Denote this difference $\delta$, and let $\epsilon = \abs{\frac{\delta}{2}}$.
              Then no other agent-situation pair in $\overline I$ results in a different decision in $Q'$, assuming behavior of $i$ for all $(i, R) \in I$ remains the same.
              Under $Q$, all agents in all situations in $I$ have equal utility for both choices but choose their type, by assumption.
              Under $Q'$, all agents in all situations in $I$ have $\epsilon$ greater utility for their type, thus make the same decision as in $Q$.
              \qed
       \end{proof}

       The following Lemma shows that a $Y$-type node will always choose $Y$ if an $N$-type node would have chosen $Y$ in a similar situation and will be used in the proof of Theorem~\ref{prop:monotonic-p}
       \begin{lemma}
              \label{lem:choose-own-type}
              Let $i_Y$ be an agent in situation $R_Y$ and $i_N$ be an agent in situation $R_N$, with $t_{i_Y} = Y$ and $t_{i_N} = N$. If $R_Y$ and $R_N$ are identical except for some $N$ decisions in $R_N$ may be $Y$ decisions in $R_Y$, and the nonadaptive schedule $S$ is the same for both nodes, then it is never the case that $c_{i_N} = Y$ but $c_{i_Y} = N$.
       \end{lemma}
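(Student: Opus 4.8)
The plan is to reduce the statement to two separate monotonicity properties of the strategic best response and then compose them. Write $c(h,t)$ for the PBE choice of the agent currently scheduled to move, as a function of the observed history $h$ of already-made decisions and the mover's own type $t$; since $S$ is nonadaptive, the set of still-to-move agents and their order are identical in $R_Y$ and $R_N$, so $c$ is well defined and the same function in both scenarios. Order decisions and types by $Y \geq N$ and extend this to histories componentwise. The two properties I want are: (i) \emph{monotonicity in the history}, $h' \geq h \Rightarrow [\,c(h,t)=Y \Rightarrow c(h',t)=Y\,]$; and (ii) \emph{monotonicity in own type}, $c(h,N)=Y \Rightarrow c(h,Y)=Y$. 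Granting these, since $R_Y$ arises from $R_N$ by flipping some decisions from $N$ to $Y$ (so $R_Y \geq R_N$) while the mover's type goes from $N$ to $Y$, the claim follows by composing: $c(R_N,N)=Y \Rightarrow c(R_N,Y)=Y \Rightarrow c(R_Y,Y)=Y$, which is exactly ``$c_{i_N}=Y$ implies $c_{i_Y}=Y$.''

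I would prove both properties by backward induction on the number of agents remaining to move, working with the utility advantage $\Delta(h,t) := EU(Y\mid h,t) - EU(N\mid h,t)$ of choosing $Y$. Under the tie-break-to-type rule, $c(h,Y)=Y \Leftrightarrow \Delta(h,Y)\geq 0$ and $c(h,N)=Y \Leftrightarrow \Delta(h,N)>0$, so it suffices to show that $\Delta$ is weakly increasing in $h$ and that $\Delta(h,Y)=\Delta(h,N)+2\pi$. Decompose $\Delta = \Delta_{\mathrm{type}} + \Delta_{\mathrm{dec}} + \Delta_{\mathrm{fut}}$, where $\Delta_{\mathrm{type}} = \pi(\mathbbm{1}(t=Y)-\mathbbm{1}(t=N))$ is the preference bonus, $\Delta_{\mathrm{dec}} = m_Y - m_N$ counts matches from already-decided neighbors, and $\Delta_{\mathrm{fut}}$ is the difference in expected matches from not-yet-decided neighbors between choosing $Y$ and choosing $N$. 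Property (ii) is immediate, as only $\Delta_{\mathrm{type}}$ depends on $t$ and it contributes exactly $+2\pi>0$. For (i), $\Delta_{\mathrm{dec}}$ is visibly weakly increasing in $h$ (flipping a neighbor from $N$ to $Y$ raises $m_Y-m_N$ by two), and $\Delta_{\mathrm{fut}}$ does not depend on $t$ (future movers observe decisions, not $i$'s type), so the whole argument turns on showing $\Delta_{\mathrm{fut}}$ is weakly increasing in $h$.

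For $\Delta_{\mathrm{fut}}$ I would run a coupling argument. Fix $h' \geq h$ and a common choice $c\in\{Y,N\}$ for the mover, and couple the realized types of all future agents to be identical in the two resulting continuations. Processing future movers in schedule order, the first one sees histories $(h,c)$ and $(h',c)$ with $(h',c)\geq(h,c)$ and the same coupled type, so by property~(i) for that later agent---available as the induction hypothesis---it chooses $Y$ in the $h'$-continuation whenever it does in the $h$-continuation; feeding this forward, an easy induction along the trajectory shows every future agent's choice is weakly larger in the $h'$-continuation. Hence the number of future neighbors choosing $Y$ is weakly larger (so $E[\text{future }Y\text{-matches}\mid c_i=Y]$ increases with $h$) and the number choosing $N$ is weakly smaller (so $E[\text{future }N\text{-matches}\mid c_i=N]$ decreases with $h$). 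Since $\Delta_{\mathrm{fut}}$ adds the former and subtracts the latter, it is weakly increasing in $h$, completing (i). The base case (last mover) has $\Delta_{\mathrm{fut}}=0$ and is trivial; the myopic setting is the degenerate case $\Delta_{\mathrm{fut}}\equiv 0$ throughout.

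I expect the main obstacle to be making this coupling argument airtight---in particular, confirming that the nested induction (backward over the current mover's position, forward along the coupled trajectory) is not circular. The forward step only ever invokes property~(i) for agents strictly later in the schedule than the current mover, which is precisely what the outer backward induction has already established. A secondary point to verify is that the tie-breaking rule preserves monotonicity at the boundary $\Delta=0$, which it does because a more-$Y$ history or a $Y$ type can only relax, never tighten, the threshold for choosing $Y$.
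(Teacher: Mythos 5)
Your proposal is correct, and it rests on the same underlying idea as the paper's proof while supplying a step the paper only asserts. The paper argues by contradiction: it writes the utility inequalities for $i_Y$ preferring $N$ and $i_N$ preferring $Y$, notes that ``only $\pi$ has moved'' between them, and then claims in one sentence that ``the conditions of $R_Y$ favor the left side of the first inequality''---that is, that $E(\#Y(\nb(i)) \mid c_i = Y)$ is weakly larger and $E(\#N(\nb(i)) \mid c_i = N)$ weakly smaller in the more-$Y$ situation---from which the contradiction $2\pi < 0$ follows. That asserted monotonicity of expected future matches in the observed history is exactly your property~(i) applied to $\Delta_{\mathrm{fut}}$, and the paper never establishes it; your nested argument (backward induction over the current mover's position, forward coupling of identical future type realizations along the trajectory) is precisely the missing justification, and your remark that the forward step only invokes the inductive hypothesis for agents strictly later in the schedule correctly dispels the apparent circularity. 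Your decomposition $\Delta = \Delta_{\mathrm{type}} + \Delta_{\mathrm{dec}} + \Delta_{\mathrm{fut}}$ with $\Delta(h,Y) = \Delta(h,N) + 2\pi$ also makes explicit why the tie-breaking rule (threshold $\geq 0$ for one's own type, $> 0$ against it) preserves monotonicity at the boundary, which the paper handles only implicitly through its placement of strict versus weak inequalities. The one point worth stating explicitly in a final write-up is that the coupling is legitimate because future types are i.i.d.\ and independent of the history, so both continuations can be evaluated on the same type realization; with that noted, your argument is complete and, if anything, more rigorous than the published one.
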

       \begin{proof}[of Lemma \ref{lem:choose-own-type}]
              Let us compare the utilities from choosing $Y$ for $i_Y$ and $i_N$, and assume for the sake of contradiction that $i_N$ prefers $Y$ but $i_Y$ does not.
              Then $i_Y$'s utility must be greater for choosing $N$, and $i_N$'s utility must be greater for choosing $Y$\@.
              Below we let $\#Y(U)$, $\#N(U)$ be the expected number of $Y$, $N$ decisions, respectively, in the set of agents $U \subseteq V$ at the end of the game. The expected value $E$ is over the randomness of agent types.

              Utility comparison for $i_Y$:
              \[\pi + E(\#Y(\nb(i_Y)) \mid c_{i_Y} = Y) <  E(\#N(\nb(i_Y)) \mid c_{i_Y} = N).\]

              Utility comparison for $i_N$:
              \[E(\#Y(\nb(i_N)) \mid c_{i_N} = Y) > \pi + E(\#N(\nb(i_N)) \mid c_{i_N} = N).\]

              Between these two inequalities only $\pi$ has moved, and the conditions of $R_Y$ favor the left side of the first inequality.
              Facing the same schedule, both inequalities cannot be true, thus we have reached a contradiction.
              \qed
       \end{proof}

       \begin{proof}[of Theorem \ref{prop:monotonic-p}]
              Generate the type distribution for $Q'$ in the following way.
              Independently draw $n = \abs{V}$ types from $\{Y, N\}$, selecting $Y$ with probability $p$ and $N$ with probability $1-p$.
              This gives us a vector of $Q'$s \emph{base types}, $\mathbf{t} = (t_1,\dotsc, t_n)$.
              Next generate a vector of $Q'$s \emph{true types}, $\mathbf{t'} = (t_1',\dotsc, t_n')$, by switching $N$-types to $Y$-types with probability $(p'-p)/(1-p)$.
              \begin{displaymath}
                     t_i' = \left\{
                            \begin{array}{ll}
                                   Y & : t_{i} = Y \\
                                   Y \text{, with probability }\frac{p'-p}{1-p}& : t_i = N \\
                                   N & : \text{otherwise}
                            \end{array}
                            \right.
              \end{displaymath}
              This results in each agent in $Q'$ being $Y$-type with independent probability $p'$, as desired.
              However, now each random draw of base types for $Q'$, $\mathbf{t}$, can be coupled with a corresponding draw of types for $Q$, $\mathbf{s}$. The true types of $Q'$, $\mathbf{t'}$, have $Y$s in the same places as $\mathbf{s}$ but with some additional $N$s turned to $Y$s. By Lemma~\ref{lem:choose-own-type} one can see that $\mathbf{t'}$ results in at least as many $Y$ choices. \qed
       \end{proof}

           \begin{proof}[of Lemma~\ref{lemma:a200}]
                  First, we note that $\hat{c}_i^p$ is monotone in its inputs and in $p$ by Lemma~\ref{prop:monotonic-p-situation}.

                  In the base case, we have that $c_1^p(t_1) =\mathring{c}_1^p(t_1) = \hat{c}_1^p(t_1)$ which is monotone in the inputs and in $p$ because $\hat{c}_1^p$ is.

                  Assume that the statement is true for all $j < i$.
                  Note that $c_i^p(t^{(i)}) = \hat{c}_i^p(\mathring{c}_{i-1}^p(t^{(i-1)}), t_i)$.  Thus $\hat{c}_i^p$ is monotone in inputs and $p$  because we know $\hat{c}_i^p$ is and $\mathring{c}_{i-1}^p$ is by induction.  Also $\mathring{c}_i^p(t^{(i)}) = \mathring{c}_{i-1}^p(t^{(i-1)})\circ \hat{c}_i^p(\mathring{c}_{i-1}^p(t^{(i-1)}), t_i)$.  So $\mathring{c}_i^p$ is monotone in inputs and $p$ because we know $\hat{c}_i^p$ is and $\mathring{c}_{i-1}^p$ is by induction.
           \end{proof}

       \begin{lemma}
       \label{prop:monotonic-p-situation}
              Let $R_Q$ be a situation in game $Q = (G, p, \pi)$ and $R_{Q'}$ be a situation in game $Q' = (G, p', \pi)$ with $0\nobreak <\nobreak p\allowbreak<\nobreak p'\nobreak <\nobreak .5$. Let $R_Q$ and $R_{Q'}$ have the same nonadaptive schedule and let $R_Q$ and $R_{Q'}$ be identical except for some $N$ decisions in $R_Q$ may be $Y$ decisions in $R_{Q'}$. If the next scheduled agents in both games are the same type, and if the agent chooses $Y$ in game $Q$, then the agent in game $Q'$ also chooses $Y$.
       \end{lemma}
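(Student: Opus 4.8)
The plan is to prove the statement by backward induction on the scheduling position of the next agent to move, showing that the \emph{gap} between that agent's utility for $Y$ and its utility for $N$ is weakly larger in $Q'$ than in $Q$. Write $\Delta$ (resp.\ $\Delta'$) for $U_Y - U_N$ in $Q$ (resp.\ $Q'$). Because the agent has the same type $t$ in both games and breaks ties toward its type, it suffices to establish $\Delta' \geq \Delta$: if the agent chooses $Y$ in $Q$, then either $\Delta > 0$, whence $\Delta' \geq \Delta > 0$ and $Y$ is strictly preferred in $Q'$, or $\Delta = 0$ with $t = Y$, whence $\Delta' \geq 0$ and the tie-break again selects $Y$.

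First I would decompose each utility into three pieces: the contribution of already-decided neighbors, the own-type term $\pi \mathbbm{1}(c = t)$, and the expected contribution of as-yet-undecided neighbors under PBE play. The own-type term is identical in the two games since $t$ is fixed, so it cancels from the gap. The decided-neighbor contribution is monotone by inspection: passing from $R_Q$ to $R_{Q'}$ only turns some $N$ decisions into $Y$ decisions, which weakly raises the number of $Y$-neighbors (helping $U_Y$) and weakly lowers the number of $N$-neighbors (hurting $U_N$), so this part of the gap weakly increases. The base case (the last agent, with no future term) is then immediate, since only these two handled pieces remain and $p$ plays no role.

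The substance is the future contribution. Here I would couple the undecided agents' types across the two games exactly as in the proof of Theorem~\ref{prop:monotonic-p}: draw base types with $Y$-probability $p$ and, for $Q'$, independently flip each $N$ to $Y$ with probability $(p'-p)/(1-p)$, so that in every coupled realization the $Q'$-types dominate the $Q$-types. Fixing the current agent's choice $c \in \{Y, N\}$ (the same value in both branches), the post-decision situation in $Q'$ dominates that in $Q$, since it carries the extra decided $Y$s of $R_{Q'}$ together with the identical current choice. I would then argue that in each coupled realization the entire remaining trajectory in $Q'$ dominates that in $Q$: for a future agent whose coupled type is unchanged, the inductive hypothesis (this lemma at a strictly later position) applies directly; for a future agent whose type was flipped from $N$ to $Y$, I would chain Lemma~\ref{lem:choose-own-type} (to pass from an $N$-type to a $Y$-type within a single game under a dominating situation) with the inductive hypothesis (to pass from $p$ to $p'$). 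Lifting single-agent monotonicity to trajectory monotonicity proceeds by composition exactly as in Lemma~\ref{lemma:a200}. Taking expectations over the coupling, the expected number of future $Y$-neighbors conditioned on $c = Y$ is weakly larger in $Q'$ and the expected number of future $N$-neighbors conditioned on $c = N$ is weakly smaller, so the future part of the gap again weakly increases; combining the three pieces yields $\Delta' \geq \Delta$.

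The main obstacle is this future step, and in particular keeping the two comparisons properly aligned: the $Y$-utility must be compared in the branch where the current agent plays $Y$ in \emph{both} games, and the $N$-utility in the branch where it plays $N$ in both, so that situation-domination survives the conditioning before the trajectory-monotonicity argument is invoked. The secondary delicacy is handling the flipped types introduced by the coupling, which is precisely why Lemma~\ref{lem:choose-own-type} must be combined with the inductive hypothesis rather than using the same-type statement alone.
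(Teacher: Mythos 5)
Your proposal is correct and follows essentially the same route as the paper's proof: couple the unscheduled agents' types so that the $Q'$-types dominate the $Q$-types, then use Lemma~\ref{lem:choose-own-type} together with situation-domination to push the comparison through the remaining trajectory. The paper's version is far terser---it does not spell out the utility-gap decomposition, the branch alignment, or the backward induction on scheduling position---so your write-up supplies details (in particular the explicit induction needed to compare the \emph{same} future agent across the two values of $p$) that the paper leaves implicit.
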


       \begin{proof}[of Lemma~\ref{prop:monotonic-p-situation}]
    The proof of this lemma proceeds by a coupling of the unscheduled agents of $Q$ and $Q'$. First, each individual agent in $Q'$ is at least as likely to be $Y$-type as a corresponding agent in $Q$.  By Lemma~\ref{lem:choose-own-type}, each individual agent is more likely to choose $Y$ in $Q'$ as in $Q$, and thus the current agent will only choose $Y$ in $Q'$ if it would in $Q$.
    \end{proof}

\section{Omitted star results}
  \label{app:star}
  \label{sec:star-opt-analysis}
  Section~\ref{sec:star-opt-analysis} is devoted to proving Theorem~\ref{thm:star-spe}, which states that performance is always greater with myopic agents than strategic agents on the star, no matter the situation. A sketch is as follows.

  We break the proof into two cases depending on the value of $\pi$. We first handle the case where $\pi \geq 1$ in Theorem~\ref{theorem star pi big}. We next consider the case where $\pi < 1$ and the scheduler is limited to nonadaptive schedules in Theorem~\ref{prop:star-nonadaptive-little-pi}.
  The final case, where $\pi <1$ with an adaptive schedule, is more involved.
  We first give a schedule, $S_{opt}$, and show that it is weakly optimal in both the strategic and myopic settings in Lemma~\ref{lemma:star schedule little pi}.
  We prove optimality by showing that scheduling the interior node before a $Y$ majority has been reached is never a better option.
  Lastly, we show higher myopic performance under this optimal adaptive schedule by detailing the behavior of agents under this schedule.

  We begin by analyzing the case $\pi \geq 1$.
  \begin{theorem} \label{theorem star pi big}
  For any star graph with $\pi \geq 1$, for \emph{any} schedule, the myopic performance is greater than the strategic performance.
  \end{theorem}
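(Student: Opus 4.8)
The plan is to reduce the entire comparison to the single decision of the interior node, by first showing that every exterior node behaves identically in the two settings. I would begin by characterizing exterior behavior. Since an exterior node's only neighbor is the interior node, its utility for choosing its own type $t_e$ is $\pi + \mathbbm{1}(t_e = c_I)$ and for choosing the opposite is $\mathbbm{1}(t_e \neq c_I)$, where $c_I$ is the interior's eventual choice. Because $\pi \geq 1$, choosing $t_e$ is weakly better regardless of whether the interior has already decided, and the indifference-breaking convention then selects $t_e$. This holds in the myopic setting (an exterior node with the interior undecided sees $m_Y = m_N = 0$, so $\abs{m_Y - m_N} \le \pi$) and in the strategic setting (averaging over the interior's unknown future choice only adds a nonnegative term to the $t_e$ side). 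Thus every exterior node chooses its type, in both settings, independent of the schedule and of $c_I$.

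The key structural observation is then that the interior faces an identical situation in both settings. For any fixed (possibly adaptive) schedule $S$ and any fixed type realization, every choice made before the interior is scheduled is an exterior choice, hence equals that agent's type and agrees across the two runs; since $S$ is a function of the observed choices, it schedules the interior at the same moment, facing the same count of already-decided $Y$-neighbors $m_Y$, $N$-neighbors $m_N$, and undecided neighbors $k$. I would then compare the interior's utility gap $\Delta = u_i(Y) - u_i(N)$. For the myopic interior this is $\Delta_{\mathrm{myop}} = \pi s + (m_Y - m_N)$, where $s = +1$ if $t_i = Y$ and $s = -1$ otherwise; for the strategic interior the $k$ undecided exteriors each contribute an expected match of $p$ to the $Y$ side and $1-p$ to the $N$ side, giving $\Delta_{\mathrm{strat}} = \Delta_{\mathrm{myop}} + k(2p - 1)$. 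Since $p < 1/2$ we get $\Delta_{\mathrm{strat}} \le \Delta_{\mathrm{myop}}$, and a short check of the tie-breaking rule shows that whenever the strategic interior chooses $Y$ (that is, $\Delta_{\mathrm{strat}} > 0$, or $\Delta_{\mathrm{strat}} = 0$ with $t_i = Y$), the myopic interior chooses $Y$ as well.

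Finally I would assemble the performance comparison. After the interior decides, all remaining nodes are exterior and still choose their types regardless of $c_I$, so for every type realization the total number of $Y$-choices equals the number of $Y$-type exterior nodes plus $\mathbbm{1}(c_I = Y)$, and the exterior term is identical across settings. Hence the pointwise difference in the $Y$-count is exactly $\mathbbm{1}(c_i^{\mathrm{myop}} = Y) - \mathbbm{1}(c_i^{\mathrm{strat}} = Y) \ge 0$, and taking expectations over types gives myopic performance at least strategic performance. The main obstacle is the adaptive case: one must rule out that the interior's differing choice reroutes the schedule in a way that alters the final count. This is exactly what the two observations above settle---exterior choices never depend on $c_I$ or on schedule order, so the interior's expected matches are schedule-order-independent, and the pre-interior histories (hence the interior's situation) coincide across the two settings up to the interior's move.
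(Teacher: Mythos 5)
Your proposal is correct and follows essentially the same route as the paper's proof: show that for $\pi \geq 1$ every exterior node chooses its type in both settings, conclude that the interior node faces the identical situation under any schedule for each fixed type draw, and then argue the strategic interior is weakly less likely to choose $Y$ because the undecided neighbors shift its expected payoff toward $N$. Your explicit computation $\Delta_{\mathrm{strat}} = \Delta_{\mathrm{myop}} + k(2p-1)$ and the pointwise coupling of $Y$-counts merely make precise what the paper states informally.
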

  \begin{proof}
    Game behavior is simple when $\pi \geq 1$: every exterior agent, strategic or myopic, chooses its type.
    It is only the interior agent's behavior that \emph{might} differ.

    For any fixed draw of agent types $\mathbf{t} = (t_1,\dotsc, t_n)$ with schedule $S$, myopic and strategic agents behave identically except for the interior agent~$i$.
    Thus, for any fixed $\mathbf{t}$, the situations in which $S$ selects $i$ to decide next for myopic agents are the same as for strategic agents.
    A strategic $i$ reasons that each of its undecided exterior neighbors will choose $Y$ with probability $p < .5$, and therefore expects more of its undecided neighbors to choose $N$ than $Y$\@.
    A myopic interior agent ignores this fact and chooses as if equal numbers of undecided neighbors will choose $N$ and $Y$\@.
    Thus a strategic interior agent is less likely to choose $Y$ because it shifts its expected payoff in favor of $N$\@. \qed
  \end{proof}

  Next, we consider the case $\pi < 1$.
  We first give a cursory examination of agent behavior.
  A myopic exterior agent chooses its type if the interior agent has not yet chosen.
  If the interior agent has chosen $c$ and an exterior agent (strategic or myopic) is scheduled to decide, it will also choose $c$ because $\pi < 1$ guarantees that an agent prefers a matching choice over choosing its type.
  Knowing this, a strategic interior agent always chooses the majority choice of the already decided agents and breaks ties with its type.
  A myopic interior agent behaves the same way.

  We summarize the behavior of myopic exterior agents in a formal theorem, for reference below.
  \begin{theorem}\label{theorem:myopic little pi strategies}              For any star graph with $\pi < 1$, myopic exterior agents choose their type if scheduled before the interior node and match the choice of the interior node if scheduled after it.  \end{theorem}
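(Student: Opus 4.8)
The plan is to apply the myopic decision rule from Section~\ref{sec:model} directly, exploiting the fact that on a star every exterior agent has exactly one neighbor. Recall that rule: a myopic agent $i$ chooses $t_i$ whenever $\abs{m_Y(i) - m_N(i)} \leq \pi$, and otherwise chooses the majority type among its decided neighbors. Since an exterior agent $i$ satisfies $\nb(i) = \{\text{interior node}\}$, the pair $(m_Y(i), m_N(i))$ can only equal $(0,0)$, $(1,0)$, or $(0,1)$, so the entire analysis reduces to checking these three cases against the threshold $\pi$.

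First I would treat the case where $i$ is scheduled before the interior node. Then $i$'s sole neighbor is still undecided, so $m_Y(i) = m_N(i) = 0$ and $\abs{m_Y(i) - m_N(i)} = 0 \leq \pi$ (using $\pi > 0$ from the game parameters). The myopic rule therefore prescribes $c_i = t_i$, giving the first half of the claim. Next I would treat the case where $i$ is scheduled after the interior node has already chosen some value $c$. Then exactly one of $m_Y(i), m_N(i)$ equals $1$ (the one matching $c$) and the other equals $0$, so $\abs{m_Y(i) - m_N(i)} = 1$. Because $\pi < 1$ by hypothesis, we have $1 > \pi$ strictly, so the rule prescribes the majority choice among decided neighbors, which is precisely $c$. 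This yields the second half of the claim.

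The main obstacle here is essentially nonexistent: the statement is an immediate consequence of the definition of myopic behavior, and the only point requiring any care is the boundary. I would note explicitly that the strict inequality $\pi < 1$ guarantees $\abs{m_Y(i) - m_N(i)} = 1 > \pi$, so the matching case is never a tie and the indifference tie-breaking convention (choosing one's type) is never invoked; likewise the undecided case gives the value $0$, comfortably below $\pi$. Thus no edge case at the threshold can disrupt the two clean outcomes, and the theorem follows by exhausting the three possible values of $(m_Y(i), m_N(i))$.
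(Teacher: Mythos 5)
Your proof is correct and takes essentially the same route as the paper, which justifies this theorem only informally in the paragraph preceding it: an exterior node has the interior node as its sole neighbor, so before the interior node decides the myopic rule gives $\abs{m_Y - m_N} = 0 \leq \pi$ (choose type), and afterward it gives $\abs{m_Y - m_N} = 1 > \pi$ (match the interior node's choice). Your explicit case check of the three possible values of $(m_Y, m_N)$ and the observation that the tie-breaking convention is never invoked is a slightly more careful writeup of the same argument.
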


  Strategic and myopic agents differ only in the behavior of exterior agents scheduled before the interior agent has decided.
  Myopic agents choose their type, as noted above, but strategic agents choose based on what they expect the interior agent to choose.
  A strategic $Y$-type exterior agent might choose $N$ if it sees that many exterior agents have already chosen $N$, and thus that the interior agent is likely to choose $N$\@.
  This assessment, however, depends on the schedule.

  We first address the case where the scheduler is limited to nonadaptive schedules.
  \begin{theorem}\label{prop:star-nonadaptive-little-pi}
  For any star graph with $\pi < 1$, an optimal nonadaptive schedule selects the interior agent first for both strategic and myopic agents.
  \end{theorem}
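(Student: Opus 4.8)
The plan is to show that the interior-first schedule attains the universal upper bound on nonadaptive performance, and is therefore optimal for both agent types at once. Concretely, I would establish two facts: (i) scheduling the interior agent first produces, with probability $p$, a full cascade of all $n$ nodes to $Y$ and otherwise a full cascade to $N$, so its performance is exactly a $p$ fraction of the nodes; and (ii) by Theorem~\ref{thm:non-adpative-p}, no nonadaptive schedule---myopic or strategic---can exceed a $p$ fraction. Together these force the interior-first schedule to be optimal.

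To prove (i), first observe that when the interior agent is scheduled first it faces no decided neighbors, so $d = m_Y - m_N = 0$. A myopic interior then chooses its type, since $\abs{m_Y - m_N} = 0 \le \pi$. A strategic interior reaches the same conclusion: because $\pi < 1$, every exterior scheduled afterward strictly prefers a single matching neighbor to the $\pi$ bonus for its own type, so all $n-1$ exteriors will match whatever the interior selects, independent of their own types. Hence the interior's choice deterministically locks in a full cascade of that choice, and the interior maximizes its utility by picking its own type to additionally collect the $\pi$ bonus. In both settings the interior therefore chooses $t_i$, and every exterior matches it. With probability $p$ the interior is $Y$ and all $n$ nodes choose $Y$; with probability $1-p$ all choose $N$. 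The expected number of $Y$-choices is thus $pn$, a $p$ fraction.

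Step (ii) is immediate from Theorem~\ref{thm:non-adpative-p}, which bounds nonadaptive performance by a $p$ fraction on any graph in both the myopic and strategic settings; the star is a special case. Since the interior-first schedule achieves this bound, it is optimal for both agent types, which is exactly the claim.

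The only delicate point is the strategic interior's reasoning in step (i): one must confirm that $\pi < 1$ renders the exterior agents' subsequent responses independent of their own types, so that the interior's first move triggers a guaranteed cascade of its choice and the $\pi$ bonus tips the interior toward its own type. The behavior of exteriors scheduled after the interior is precisely the matching behavior already recorded in Theorem~\ref{theorem:myopic little pi strategies} and the surrounding discussion, so no new case analysis is required; everything else follows by combining that matching behavior with the $p$-bound of Theorem~\ref{thm:non-adpative-p}.
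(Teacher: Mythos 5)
Your proposal is correct and follows essentially the same route as the paper's proof: show that scheduling the interior agent first yields a guaranteed cascade of the interior's type (hence performance $pn$), then invoke Theorem~\ref{thm:non-adpative-p} to conclude that no nonadaptive schedule can do better. The extra detail you supply on why the strategic interior also chooses its type (the $\pi$ bonus on top of a guaranteed cascade either way) is the same reasoning the paper leaves implicit.
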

  \begin{proof}
  If the interior agent is scheduled first, it will choose its type and the exterior agents will follow.  This happens for both myopic and strategic agents, and guarantees a performance of $p$.  By Theorem~\ref{thm:non-adpative-p} this is the best possible performance for a nonadaptive schedule.  Thus choosing the interior agent first is an optimal nonadaptive schedule and, in this case, the strategic and myopic performance is identical. \qed
  \end{proof}

  We next define the adaptive schedule $S_{opt}$ and prove its optimality for both strategic and myopic agents.
  \begin{definition}
  Schedule $S_{opt}$ is the following:
  \begin{enumerate}
  \item Schedule exterior agents until a majority decide $Y$ or all have decided, whichever comes first.
  \item Schedule the interior agent.
  \item Schedule all remaining exterior agents.
  \end{enumerate}
  \end{definition}

  The optimality of $S_{opt}$ is summarized in the lemma below.
  \begin{lemma} \label{lemma:star schedule little pi} For any star graph with $\pi < 1$,  $S_{opt}$ is a weakly optimal adaptive schedule for both strategic and myopic agents.\end{lemma}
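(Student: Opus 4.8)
The plan is to reduce both settings to a one-dimensional stopping problem and then verify optimality of $S_{opt}$ by backward induction. The starting observations are structural. Because every exterior agent has degree one---its only neighbor is the interior agent---and because $\pi<1$, each exterior agent values a match with the interior (worth $1$) strictly above agreement with its own type (worth $\pi$). Consequently, once the interior agent is scheduled and chooses $c$, every later exterior agent cascades to $c$, and an exterior agent scheduled \emph{before} the interior cares only about predicting the interior's eventual choice. The interior agent itself chooses the majority among already-decided nodes, breaking ties toward its type. Writing a situation as $(m_Y,m_N)$ with $r=(n-1)-m_Y-m_N$ exterior agents still unscheduled, these facts reduce any schedule to a rule deciding, at each $(m_Y,m_N)$, whether to schedule the interior now or one more exterior agent first. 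Scheduling the interior yields a determined outcome: value $n-m_N$ if $m_Y>m_N$ (a $Y$-cascade), value $m_Y$ if $m_Y<m_N$ (an $N$-cascade), and value $m_Y+p(1+r)$ if $m_Y=m_N$. Note $n-m_N$ is the largest $Y$-count attainable once $m_N$ agents have committed to $N$, so it is a hard cap on performance from such a situation.

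I would prove optimality by backward induction on $r$, showing that $S_{opt}$'s action is weakly best at every situation (and invoking the one-shot-deviation principle for the finite perfect-information game). There are three cases. If $m_Y>m_N$, scheduling the interior (which $S_{opt}$ does) attains the cap $n-m_N$ and is therefore optimal. If $m_Y<m_N$, $S_{opt}$ continues; this is weakly best because the $m_Y$ agents already committed to $Y$ persist, so every continuation yields a final $Y$-count of at least $m_Y$, matching the value $m_Y$ of scheduling the interior. The decisive case is $m_Y=m_N$, where scheduling the interior gives $m_Y+p(1+r)$ and continuing gives $C = P(\text{next chooses }Y)\,(m_Y+1+r)+P(\text{next chooses }N)\,V(m_Y,m_N+1)$, using that a $Y$-choice pushes the state to $m_Y+1>m_N$ (value $n-m_N=m_Y+1+r$ by the previous case) and an $N$-choice to a situation of value at least $m_Y$. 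This gives $C\ge m_Y+P(\text{next chooses }Y)\,(1+r)$, so it suffices that the scheduled agent chooses $Y$ with probability at least $p$.

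For myopic agents this last bound is immediate and tight: an exterior agent scheduled before the interior simply chooses its type (Theorem~\ref{theorem:myopic little pi strategies}), hence chooses $Y$ with probability exactly $p$. For strategic agents the same bound holds because a $Y$-\emph{type} exterior agent, scheduled at a tie $m_Y=m_N$, chooses $Y$ with certainty: choosing $Y$ moves the state to a $Y$-majority, after which $S_{opt}$ schedules the interior, which then matches it, yielding utility $\pi+1$, whereas any $N$-choice yields at most $1$. Hence the probability of a $Y$-choice is at least $p$ (the probability of being $Y$-type), and $C\ge m_Y+p(1+r)$ in the strategic setting as well. This closes all three cases and establishes weak optimality of $S_{opt}$ in both settings.

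The main obstacle is the strategic case, where an exterior agent's choice is not exogenous but depends on the entire announced schedule, so one cannot directly reuse the myopic random-walk analysis nor naively run an exchange argument that alters the schedule at a single node (such an alteration changes incentives globally). The resolution is to carry out the scheduler's backward induction so that, at each node, the agent being scheduled anticipates the $S_{opt}$ continuation---optimal by the inductive hypothesis---after which the only behavioral input needed is the clean qualitative fact that at a tie a $Y$-type surely plays $Y$, itself a consequence of degree-one exterior agents together with $\pi<1$. To finish rigorously I would confirm two minor points: that the one-shot-deviation principle applies cleanly with the scheduler optimizing against best-responding agents, and that the tie value $m_Y+p(1+r)$ of scheduling the interior is correct under the assumed rule of breaking indifference toward type.
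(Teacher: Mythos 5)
Your proof is correct and follows essentially the same route as the paper: the paper's Lemma~\ref{lemma:star schedule little pi} is proved by combining Theorems~\ref{prop:choose-ext-first} and \ref{prop:star-always-schedule-exterior}, which are exactly your $m_Y=m_N$ and $m_Y<m_N$ cases (resting on the same key facts that scheduling the interior at a deficit yields an $N$-cascade and that a $Y$-type exterior at a tie surely plays $Y$ under the $S_{opt}$ continuation since it then earns the maximal $\pi+1$). Your version merely makes explicit the value-function backward induction and the trivial $m_Y>m_N$ case that the paper leaves implicit.
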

  \begin{proof}
    The combination of Theorems~\ref{prop:choose-ext-first} and \ref{prop:star-always-schedule-exterior} shows that $S_{opt}$ is weakly optimal for myopic and strategic agents (establishing Lemma~\ref{lemma:star schedule little pi}).
    It is only weakly optimal: for some parameter settings other schedules give equal performance.
    For example, when $\pi$ is sufficiently small, a strategic exterior $Y$-type node will choose $N$ when $d=-1$, and thus scheduling the interior node first yields equal performance.
  \end{proof}

  \begin{theorem}
  \label{prop:choose-ext-first}
  For both strategic and myopic agents on the star graph, $S_{opt}$ gives no worse performance than scheduling the interior agent first.
  \end{theorem}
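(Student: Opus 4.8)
The plan is to compare the two schedules directly through their expected number of $Y$-adoptions. Scheduling the interior agent first is easy to evaluate: with $\pi < 1$ the interior chooses its type $t_i$, after which every exterior agent (myopic or strategic) matches the interior, as already noted in the $\pi<1$ behavior analysis preceding Theorem~\ref{theorem:myopic little pi strategies}. Hence with probability $p$ the interior is $Y$-type and all $n$ nodes adopt, and with probability $1-p$ no node adopts, so the interior-first performance is exactly $pn$. It therefore suffices to show that $S_{opt}$ achieves expected performance at least $pn$.

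First I would isolate a single favorable event under $S_{opt}$, namely the event $B$ that the \emph{first} scheduled (exterior) agent chooses $Y$. On $B$ the whole star cascades to $Y$: if the first exterior chooses $Y$ then $d = m_Y - m_N = 1$, so Step~1 of $S_{opt}$ halts immediately with a $Y$-majority among the interior's decided neighbors; since $\pi<1$, both a myopic and a strategic interior then choose the majority, $Y$; and in Step~3 every remaining exterior again matches the interior's $Y$ because $\pi<1$. Thus $B$ yields all $n$ nodes adopting.

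The key step is to lower-bound $P(B)$ by $p$, which I would do by showing that a $Y$-type first exterior always chooses $Y$. For myopic agents this is immediate from Theorem~\ref{theorem:myopic little pi strategies}, since the interior is still undecided. For strategic agents it follows from a dominance comparison: the first exterior has the interior as its only neighbor, so choosing $Y$ triggers the cascade just described and yields utility $\pi + 1$ (its type bonus plus a guaranteed match with the interior), whereas choosing $N$ yields at most $1$ (a possible match with the interior, but no type bonus, as the agent is $Y$-type). Since $\pi + 1 > 1$, a strategic $Y$-type first exterior strictly prefers $Y$. Hence $P(B) \ge P(\textrm{first exterior is }Y\textrm{-type}) = p$, and because every outcome outside $B$ contributes a nonnegative number of adoptions, $E[S_{opt}] \ge n\,P(B) \ge pn$, matching the interior-first performance.

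The main obstacle is the strategic sub-case of the claim $P(B)\ge p$: one must verify both that the candidate deviation (choosing $Y$) genuinely forces the interior, and thus the entire star, to cascade to $Y$ under $S_{opt}$, and that no strategic $Y$-type first exterior can do strictly better by choosing $N$ and letting the schedule continue. Everything else—the interior-first computation and the matching behavior in Steps~2 and~3—is routine given the $\pi<1$ behavior already established.
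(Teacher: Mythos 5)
Your proposal is correct and follows essentially the same route as the paper: both compute the interior-first performance as $pn$, and both show that under $S_{opt}$ a $Y$-type first exterior agent (myopic or strategic) chooses $Y$ because doing so triggers a full $Y$-cascade and yields its maximum possible utility $\pi+1$, which already accounts for $pn$ expected adoptions. The only difference is presentational—you package this as an explicit event $B$ with a lower bound on $P(B)$—but the substance of the dominance argument is identical.
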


  \begin{proof}
    For both strategic and myopic agents, scheduling the interior agent first guarantees all exterior agents will match its decision.
    So all agents choose $Y$ with probability $p$ and $N$ with probability $1-p$.
    This gives $pn$ performance.

    $S_{opt}$ calls for scheduling an exterior agent, $e$, first.
    With probability $p$, $t_e = Y$.
    If $t_e = Y$ and $c_e = Y$, then $S_{opt}$ schedules the interior agent next and all agents choose $Y$\@.
    This gives $e$ utility $\pi + 1$, its maximum possible utility.
    So strategic $e$ always chooses $Y$ if $t_e = Y$.
    And myopic $e$ will choose its type, so will also choose $Y$ if $t_e = Y$. This alone gives $pn$ performance, without considering outcomes for $t_e = N$. \qed
  \end{proof}

  \begin{theorem}
  \label{prop:star-always-schedule-exterior}
  For any star graph, while exterior $Y$-decisions are \emph{not in the majority}, scheduling an exterior agent results in performance at least as high as scheduling the interior agent.
  \end{theorem}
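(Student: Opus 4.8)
The plan is to fix a situation in which the interior agent is undecided, $m_Y$ exterior agents have chosen $Y$ and $m_N \ge m_Y$ have chosen $N$ (so $Y$ is not in the majority), and $r$ exterior agents remain undecided, and to compare the optimal performance obtainable when an exterior agent is scheduled next against the value obtained by scheduling the interior agent next (throughout we are in the regime $\pi<1$ of Lemma~\ref{lemma:star schedule little pi}). The first step is a reduction. Since every exterior agent has the interior agent as its only neighbor, the interior agent collects a match from each of the $r$ undecided exterior agents no matter which action it picks---all of them copy it, as $\pi<1$---so its decision rule is exactly ``follow the strict majority of the decided agents, breaking ties toward its type.'' This makes the interior-now value identical in the myopic and strategic settings: it equals $m_Y$ when $m_Y<m_N$ (the interior chooses $N$ and a full $N$-cascade follows) and $m_Y+p(1+r)$ when $m_Y=m_N$ (the interior chooses its type, starting a cascade of that type). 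Call this value $v_{\mathrm{int}}$. Because the optimal continuation after scheduling an exterior agent can only do better than any particular one, it suffices to exhibit a single continuation beginning with an exterior agent whose performance is at least $v_{\mathrm{int}}$.

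The candidate continuation I would use in both settings is ``schedule one exterior agent $e$, then the interior agent, then the rest.'' Its value is a finite, direct computation, because once $e$ has moved the interior moves immediately and every later exterior agent simply copies the interior. In the myopic case $e$ chooses its type (Theorem~\ref{theorem:myopic little pi strategies}); conditioning on $t_e$ and comparing the updated counts against the interior's threshold gives performance equal to $v_{\mathrm{int}}$ when $m_Y=m_N$ and strictly above $v_{\mathrm{int}}$ when $m_Y<m_N$ (the extra draw supplies a positive probability of order $p$ of steering the eventual cascade toward $Y$).

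In the strategic case the same continuation still yields a finite computation, which is the key simplification: since the interior moves immediately after $e$, agent $e$ need only anticipate the interior's majority rule, and its utility is $\pi\,\mathbbm{1}(c_e=t_e)+\Pr[c_{\mathrm{int}}=c_e]$ because $e$'s sole neighbor is the interior. I would solve $e$'s one-shot best response in the three regimes $m_Y=m_N$, $m_Y=m_N-1$, and $m_Y\le m_N-2$. One finds that an $N$-type $e$ always chooses $N$; that a $Y$-type $e$ chooses $Y$ exactly when the gap is small enough that choosing $Y$ gives a real chance of inducing a matching $Y$ from the interior (this is where the threshold comparison $\pi \ge 1-p$ from Theorem~\ref{prop:star-50-strategic} enters the $m_Y=m_N-1$ case); and that in every regime the resulting expected number of $Y$-choices is at least $v_{\mathrm{int}}$, with equality in the ties case and in the cases where the $Y$-type defects to $N$. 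Chaining the inequalities, optimal-after-exterior $\ge$ value of this continuation $\ge v_{\mathrm{int}} = $ interior-now, which is the claim.

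The main obstacle, and the reason the proof is not entirely mechanical, is the strategic exterior agent's forward reasoning: in general a committed schedule couples the strategic choices of all exterior agents through their shared dependence on the single, not-yet-made interior decision, so one cannot read off an exterior agent's action without solving the whole PBE. The device that avoids this is the deliberate choice of the two-step continuation that places the interior immediately after the scheduled exterior agent, collapsing the coupling to a single one-shot best response. The remaining care is in the boundary regime $m_Y=m_N-1$, where $e$'s decision flips with the sign of $\pi-(1-p)$; there I would check both subcases separately and confirm that the performance comparison against $v_{\mathrm{int}}$ holds either way (strictly when the $Y$-type plays $Y$, with equality when it defects to $N$).
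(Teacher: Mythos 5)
Your proposal is correct and takes essentially the same approach as the paper: for the tie case $m_Y = m_N$ it exhibits the same ``one exterior agent, then the interior'' continuation and relies on the observation that a $Y$-type exterior agent attains its maximum possible utility $\pi+1$ by choosing $Y$ there. The only real difference is that for $m_Y < m_N$ the paper skips your case analysis entirely by noting that scheduling the interior immediately yields \emph{zero} additional $Y$-adoptions, so any continuation beginning with an exterior agent is trivially at least as good.
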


  \begin{proof}
    Let $d = m_Y - m_N$ denote the difference between the current number of $Y$ decisions and the current number of $N$ decisions.
    First consider the case $d < 0$.
    If scheduled, the interior node will choose $N$ and all remaining exterior nodes will choose $N$\@.
    This results in zero additional $Y$-adoptions, so scheduling an exterior node instead, as $S_{opt}$ does, must be at least as good.

    The other possibility is $d=0$, which reduces to the situation covered in the proof of Theorem~\ref{prop:choose-ext-first}. \qed
  \end{proof}

  Now that we've shown $S_{opt}$ is an optimal adaptive schedule for strategic agents, we give a detailed characterization of agent behavior under $S_{opt}$ and show that this behavior cannot lead to higher strategic performance than myopic performance.

  The probability of ever getting a majority of exterior $Y$-adoptions, in the limit of large $n$, is $\frac{p}{1-p}$, a well known result in the mathematics of biased random walks.

  We seek to show that strategic agents choose $N$ in any situation where myopic agents would choose $N$\@, and thus that strategic performance is lower than myopic performance.
  By Theorem~\ref{theorem:myopic little pi strategies} and Lemma~\ref{lemma:star schedule little pi}, it is sufficient to show that, under $S_{opt}$, $N$-type exterior nodes that are scheduled before the interior node always choose $N$\@.
  To prove this we start by characterizing these agents' strategies.

  For the remainder of the section, we omit the word ``exterior'' when it is clear from context and refer to the interior node as~$i$.
  For a given situation, let $d$ be the difference between the number of exterior nodes who have chosen $Y$ and the number who have chosen $N$: $d = m_Y - m_N$\@.
  A low $d$ value indicates that more nodes have chosen $N$ and indicates a higher likelihood of $c_i = N$\@.
  Thus nodes are more inclined to choose $N$ for low values of $d$ and more inclined to choose $Y$ for high values of $d$.

  Let $i(d, k)$ denote the probability that node~\( i \) will choose $Y$, as assessed from the perspective of an exterior node in a situation with difference $d$ and $k$ unscheduled exterior nodes.
          Let $u(t, c, d, k)$ denote the expected utility of a type $t$ node, making choice $c$, with a difference of $d$ when there are $k$ unscheduled exterior nodes.
  We begin counting at $1$, so $k = 1$ refers to the choice of the final unscheduled exterior node.

  We say that exterior nodes execute a \emph{threshold strategy} if there exists  thresholds $Y^*(k)$ and $N^*(k)$ such that:
  \begin{itemize}
  \item $d < Y^*(k)$: all agents choose $N$ when $k$ nodes remain to choose.
  \item $Y^*(k) \leq  d \leq N^*(k)$: agents choose their type  when $k$ nodes remain to choose.
  \item $ N^*(k) < d$: all agents choose $Y$ when $k$ nodes remain to choose.
  \end{itemize}

  \begin{figure}
  \centering
  \begin{tabular}{|c|c c c c c c c|} \hline
  $d$: & $-\infty$ & $\dotsc$ & $Y^*(k)$ & $\dotsc$ & $N^*(k)$ & $\dotsc$ & $1$\\
          \hline \hline
          $t_e = N$ & $N$ & $N$ & $N$ & $N$ & $N$ & $Y$ & $Y$\\ \hline
          $t_e = Y$ & $N$ & $N$ & $Y$ & $Y$ & $Y$ & $Y$ & $Y$\\ \hline
          \end{tabular}
          \caption{Threshold behavior for exterior nodes, by type.}
          \label{fig:star-thresholds}
          \end{figure}

          An illustration of these thresholds can be seen in Figure~\ref{fig:star-thresholds}.

  We start out with the following theorem, which shows that the interior agent is always more likely to choose $Y$ if scheduled immediately instead of after one additional exterior node. This theorem serves as the base case for inductive arguments in Theorem~\ref{prop:starthreshold} and Lemma~\ref{lemma:starlittlepimain}, defined below.
  \begin{theorem}  \label{prop:basecasestar} 
    $i(d, 1) \geq i(d, 0)$ for all $d$, and both are monotone increasing in $d$.  Moreover, the final exterior agent and the interior agent (if scheduled last) play threshold strategies where $N^*(1) = N^*(0) = 1$.
  \end{theorem}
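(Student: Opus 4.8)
The plan is to handle the two quantities separately, each by a single-move utility comparison, and then to read off both the monotonicity and the inequality from the resulting closed forms.

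First I would settle $i(d,0)$, the behavior of the interior node when it is scheduled last. Since $\pi < 1$, a direct comparison of $u(t,Y,d,0)$ and $u(t,N,d,0)$ shows that matching the current exterior majority dominates expressing one's type: writing $d = m_Y - m_N$, the utility difference for choosing $Y$ over $N$ is $+\pi + d$ for a $Y$-type and $-\pi + d$ for an $N$-type. Because $0 < \pi < 1$, no integer $d$ makes either agent indifferent, so tie-breaking never intervenes, and we get that a $Y$-type prefers $Y$ exactly when $d \ge 0$ while an $N$-type prefers $Y$ exactly when $d \ge 1$. Hence $i(d,0) = 0$ for $d \le -1$, $i(0,0) = p$, and $i(d,0) = 1$ for $d \ge 1$. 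This is manifestly monotone increasing in $d$ and is a threshold strategy whose $Y$-region begins precisely at $d = 1$, giving $N^*(0) = 1$.

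Next I would compute $i(d,1)$, the case of the final exterior node $e$, whose only neighbor is the interior node. Here $e$'s own choice shifts the difference to $d+1$ or $d-1$ before the interior moves, so its two expected utilities couple to the interior's reaction: $u(t,Y,d,1) = \pi\,\mathbbm{1}(t=Y) + i(d+1,0)$ and $u(t,N,d,1) = \pi\,\mathbbm{1}(t=N) + \bigl(1 - i(d-1,0)\bigr)$. Substituting the step values of $i(\cdot,0)$ reduces this to a finite case analysis in $d$. The structural point I must use is that under $S_{opt}$ the interior is scheduled the moment a $Y$-majority appears, so the final exterior node is actually reached only while $d \le 0$; for these $d$ the comparison yields $N$ for sufficiently negative $d$, the node's type at the boundary, and a $Y$-region that again starts at $d=1$, so $N^*(1) = 1$ as well. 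Carrying out the substitution gives $i(d,1) = 1$ for $d \ge 1$, $i(0,1) = p$, $i(-1,1) \in \{0,p\}$ (depending on whether $\pi > 1-p$), and $i(d,1) = 0$ for $d \le -2$, which is monotone increasing in $d$.

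With both closed forms in hand, the inequality $i(d,1) \ge i(d,0)$ follows by inspection: the two agree at $1$ for $d \ge 1$, agree at $p$ for $d = 0$, satisfy $i(-1,1) \ge 0 = i(-1,0)$, and agree at $0$ for $d \le -2$. I expect the genuine work to be concentrated in the second paragraph, since the exterior node's utility is entangled with the interior's best response and one must run a small backward induction while tracking the tie-breaking convention at the boundary values $d \in \{-1,0\}$ across the sub-regimes $\pi < p$, $p \le \pi \le 1-p$, and $\pi > 1-p$. The crucial observation that keeps the inequality pointing the right way is that $S_{opt}$ never schedules an additional exterior node once $d \ge 1$; it is exactly this feature that prevents the extra exterior move from ever pushing the interior's probability of choosing $Y$ below $i(d,0)$.
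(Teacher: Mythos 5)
Your proposal follows essentially the same route as the paper: compute $i(d,0)$ by a direct utility comparison for the interior node, then obtain $i(d,1)$ from the final exterior node's best response via $u(t,Y,d,1)=\pi\,\mathbbm{1}(t=Y)+i(d+1,0)$ and $u(t,N,d,1)=\pi\,\mathbbm{1}(t=N)+(1-i(d-1,0))$, with the case split at $\pi+p\gtrless 1$; you also correctly identify that $S_{opt}$ schedules the interior immediately once $d\geq 1$, which is what makes the claim for $d\geq 1$ hold. One numerical slip: in the regime $\pi+p\geq 1$ you report $i(-1,1)=p$, but it should be $p^2$ --- with probability $p$ the final exterior node is $Y$-type and chooses $Y$, moving the difference to $0$, after which the interior still chooses $Y$ only with probability $p$ (its own type). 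Since $0\leq p^2\leq p\leq 1$, this does not affect monotonicity, the inequality $i(d,1)\geq i(d,0)$, or the threshold structure, so the argument stands once that value is corrected.
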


  \begin{proof}
      Behavior of exterior agents, and thus the value of $i(d, k)$, falls into two classes: $\pi + p < 1$ and $\pi + p \geq 1$. Our proof proceeds by finding $i(d, 0)$ (the probability that the interior node chooses $Y$ when scheduled immediately) in both cases, and then examining $i(d, 1)$ (the probability that the interior node chooses $Y$ when scheduled after one additional exterior node) for each case individually.

      In either case, the interior node chooses its type when it is scheduled only if $d = 0$.
      It will choose $Y$ if $d > 0$ and $N$ if $d < 0$.  Thus: $$i(d, 0) = \left\{\begin{array}{cc} 0 & d \leq -1 \\ p & d = 0 \\ 1 & d \geq 1  \end{array} \right.$$

      In the case that $\pi + p < 1$, $i(d, 1) = i(d,0)$.  We analyze $i(d,1)$ by considering the behavior of the exterior node scheduled immediately before the interior node. We call this final exterior node $e$. If $d = 0$ and $k=1$, the interior node will match $c_e$.
    Knowing this, $e$ should choose its type if $d=0$ and $k=1$.

    If $d = -1$ then $e$ should always choose $N$\@.
    If $t_e = N$ this will guarantee a payoff of $1 + \pi$, the maximum possible.  If $t_e = Y$, then selecting $N$ yields payoff 1.  Selecting $Y$ yields $1 + \pi$ if the interior node is $Y$-type, but only $\pi$ if the interior node is $N$-type.  This gives expected payoff of $p + \pi < 1$.

      In the case that $\pi + p \geq 1$, similar analysis shows that:  $$i(d, 1) = \left\{\begin{array}{cc} 0 & d \leq -2 \\ p^2 & d = -1 \\ p & d = 0 \\ 1 & d \geq 1 \end{array} \right.$$  The theorem follows.
  \end{proof}

  We use the below theorem to show that nodes always behave according to a threshold strategy.
  \begin{theorem} \label{prop:starthreshold} For any star graph with $\pi < 1$, under schedule $S_{opt}$, exterior nodes scheduled before the interior agent execute a threshold strategy.  \end{theorem}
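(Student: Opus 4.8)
The plan is to prove Theorem~\ref{prop:starthreshold} by induction on $k$, the number of unscheduled exterior nodes facing the currently deciding node, carrying along a companion monotonicity claim about the interior node's decision. Theorem~\ref{prop:basecasestar} supplies the base case: it establishes that the final exterior node ($k=1$) plays a threshold strategy with $N^*(1)=1$, and that $i(d,1)$ and $i(d,0)$ are both monotone increasing in $d$. The inductive hypothesis is that every exterior node facing fewer than $k$ remaining exterior nodes plays a threshold strategy and that $i(\cdot,k-1)$ is monotone increasing in its first argument. I will then prove simultaneously that the node at level $k$ plays a threshold strategy and that $i(\cdot,k)$ is monotone increasing.

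The engine of the argument is the utility comparison for the currently deciding exterior node. Since an exterior node's only neighbor is the interior node, a type-$t$ node's expected utilities for its two choices are $\pi\,\mathbbm{1}(t=Y) + i(d+1,k-1)$ for $Y$ and $\pi\,\mathbbm{1}(t=N) + \bigl(1 - i(d-1,k-1)\bigr)$ for $N$, since choosing $Y$ (resp.\ $N$) advances the difference to $d+1$ (resp.\ $d-1$) and the probability the interior node eventually matches is read off from $i(\cdot,k-1)$. Thus the node strictly prefers $Y$ exactly when
\[
\pi\,\mathbbm{1}(t=Y) + i(d+1,k-1) \;>\; \pi\,\mathbbm{1}(t=N) + \bigl(1 - i(d-1,k-1)\bigr),
\]
with the indifference case resolved toward the node's type. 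Here the horizon $k-1$ must be understood as collapsing to $0$ once the difference reaches a $Y$-majority, reflecting that $S_{opt}$ halts Step~1 at that point; in that regime $i=1$, the maximum.

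Given monotonicity of $i(\cdot,k-1)$, the threshold structure at level $k$ follows immediately. As $d$ increases, $i(d+1,k-1)$ increases and $1-i(d-1,k-1)$ decreases, so once the inequality above holds it continues to hold; hence for each fixed type the node switches from $N$ to $Y$ at a single threshold, and its decision is monotone in $d$. Comparing the two types in the same situation, the $Y$-type's net preference for $Y$ exceeds the $N$-type's by exactly $2\pi$ (the $\pi$ bonus moves from the $N$ side to the $Y$ side), so whenever an $N$-type chooses $Y$ a $Y$-type does too; letting $Y^*(k)$ and $N^*(k)$ be the switching thresholds of the $Y$- and $N$-types gives $Y^*(k)\le N^*(k)$ and precisely the three-region structure of Figure~\ref{fig:star-thresholds}. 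Finally, monotonicity of $i(\cdot,k)$ closes the induction: writing $i(d,k)$ as the type-averaged mixture of $i(d+1,k-1)$ and $i(d-1,k-1)$ with weights determined by the node's (now monotone) decision, increasing $d$ both raises each mixture value and shifts weight from the smaller value toward the larger, so the mixture is weakly larger. This is the same coupling/stochastic-domination flavor used in Lemma~\ref{prop:monotonic-p-situation}.

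The main obstacle is the monotonicity of $i(d,k)$ in $d$, and specifically the bookkeeping forced by the adaptive halting rule of $S_{opt}$: the recursion must correctly substitute horizon $0$ (and value $1$) whenever the running difference attains a $Y$-majority, so that the higher of two coupled trajectories, which can only reach the stopping condition sooner, is never penalized for doing so. Verifying that this early termination is consistent with—indeed reinforces—monotonicity is the delicate step; the utility algebra and the $2\pi$ type symmetry that yield the threshold ordering are then routine.
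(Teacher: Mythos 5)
Your proof is correct and follows essentially the same route as the paper's: an induction in which the threshold structure at level $k$ is read off from the monotonicity in $d$ of the interior node's probability of choosing $Y$ at horizon $k-1$. The paper phrases the inductive step as a one-line contradiction and leaves that monotonicity implicit (``an additional node choosing $Y$ only increases the chance that the interior node chooses $Y$''), whereas you carry it as an explicit companion invariant and verify it via the mixture recursion, the $2\pi$ type-gap, and the stopping rule of $S_{opt}$ --- a more complete rendering of the same argument.
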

  \begin{proof}
      The proof proceeds by induction.  At $k= 0$, by Theorem~\ref{prop:basecasestar}, the interior agent plays a threshold strategy.  Assuming that after some node $e$ all agents play a threshold strategy, we show that $e$ does also.

      Assume without loss of generality that $t_e = Y$. Assume, for the sake of contradiction, that $e$ plays a non-threshold strategy. This necessitates $c_e = Y$ for some $d$ but $c_e = N$ for some $d+1$.
      In this case, $e$ changes its choice to $N$ if an additional node has chosen $Y$\@.
      However, because all later scheduled nodes play according to threshold strategies, an additional node choosing $Y$ only increases the chance that the interior node chooses $Y$, so $e$'s strategy is not rational. This gives a contradiction, as desired. \qed
  \end{proof}

  Knowing that nodes behave according to a threshold strategy, we can prove the following lemma, which is enough to complete the proof of Theorem~\ref{theorem star}, as discussed above.
          \begin{lemma}  \label{lemma:starlittlepimain} For any star graph with $\pi < 1$, under schedule $S_{opt}$, exterior nodes scheduled before the interior always choose $N$ if they are $N$-type. \end{lemma}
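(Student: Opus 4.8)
The plan is to reduce the lemma to a single scalar inequality and then establish that inequality by induction on the number of unscheduled exterior nodes. First I would combine the threshold characterization of Theorem~\ref{prop:starthreshold} with the structure of $S_{opt}$: since $S_{opt}$ schedules exterior agents only while no $Y$-majority has formed, every exterior node preceding the interior decides in a situation with $d \le 0$. By Theorem~\ref{prop:starthreshold} each such node plays a threshold strategy, so an $N$-type chooses $N$ for all $d \le N^*(k)$. Hence it suffices to show $N^*(k) \ge 0$, i.e.\ that an $N$-type chooses $N$ at the single value $d = 0$; every case $d < 0$ then follows for free from the threshold structure.

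Next I would convert the $d = 0$ case into a condition on $i(\cdot,\cdot)$. Because an exterior node's only neighbor is the interior, its payoff is $\pi\,\mathbbm{1}(c_e = t_e) + \mathbbm{1}(c_i = c_e)$. For an $N$-type at $d = 0$ with $k$ unscheduled nodes, choosing $Y$ drives the difference to $1$, causing $S_{opt}$ to schedule the interior, which then chooses $Y$ with certainty, for utility $1$; choosing $N$ yields $\pi + (1 - i(-1,k-1))$. Thus the node chooses $N$ exactly when $i(-1,k-1) \le \pi$. The whole lemma therefore reduces to the claim that $i(-1,m) \le \pi$ for every $m \ge 0$, and, via the monotonicity of $i$ in $d$ guaranteed by the threshold strategies of Theorem~\ref{prop:starthreshold}, this simultaneously controls every $d \le -1$.

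I would prove $i(-1,m) \le \pi$ by induction on $m$, with base cases $i(-1,0) = 0$ and $i(-1,1) \in \{0, p^2\}$ supplied by Theorem~\ref{prop:basecasestar} (noting $p^2 \le 1-p \le \pi$ in the regime $\pi + p \ge 1$, and $i(-1,1)=0$ otherwise). For the step, condition on the threshold behavior of the next exterior node at $d = -1$. If its $Y$-type also chooses $N$, then both types move the difference to $-2$ and $i(-1,m) = i(-2,m-1) \le i(-1,m-1) \le \pi$ by monotonicity and the inductive hypothesis. If its $Y$-type chooses $Y$, then $i(-1,m) = p\,i(0,m-1) + (1-p)\,i(-2,m-1)$, and I would expand $i(0,m-1) = p + (1-p)\,i(-1,m-2)$, which is valid because at $d = 0$ the $Y$-type always attains its maximal payoff by choosing $Y$ while the $N$-type chooses $N$ by the inductive hypothesis.

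The main obstacle is precisely this second case. The crude estimates $i(0,\cdot) \le 1$ and $i(-2,\cdot) \le \pi$ yield only $i(-1,m) \le p^2 + \pi(1-p^2)$, which exceeds $\pi$ whenever $\pi < 1$, so the argument cannot close without a sharper bound on $i$ at deeper negative differences. I would obtain this from the biased-random-walk ``ruin'' estimate: the only ways the interior chooses $Y$ are for the exterior decisions to push the difference up to $+1$ or to terminate exactly at $0$ with a $Y$-type interior, and since the threshold strategies only \emph{suppress} upward steps relative to a free walk with up-probability $p < 1/2$, the probability of climbing from $d$ to $d+1$ is at most $\rho := p/(1-p) < 1$. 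Carrying the auxiliary invariant $i(-j-1,m) \le \rho\,i(-j,m)$ alongside the main induction makes $i(-2,m-1)$ geometrically smaller than $i(-1,m-1)$, and combining this with the self-consistency of the threshold --- the $Y$-type flips to $Y$ at $d=-1$ only once $\pi$ is large enough for the refined bound to hold with room to spare --- drives $i(-1,m)$ back below $\pi$ and closes the induction. I expect the delicate part to be verifying that these two ingredients dovetail across the full range of $p$ and $\pi$, rather than any single inequality.
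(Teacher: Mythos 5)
Your reduction is sound and coincides with the paper's: by the threshold structure of Theorem~\ref{prop:starthreshold} it suffices to show $N^*(k)\ge 0$, and the $N$-type's payoff comparison at $d=0$ turns this into the single inequality $i(-1,k-1)\le\pi$. The genuine gap is in the induction you propose to establish that inequality, and you have flagged it yourself. The crude recursion $i(-1,m)=p\,i(0,m-1)+(1-p)\,i(-2,m-1)$ gives only $p^2+(1-p^2)\pi>\pi$, as you note; but your two repairs do not close it either. Feeding the geometric invariant $i(-2,m-1)\le\rho\,i(-1,m-1)\le\rho\pi$ (with $\rho=p/(1-p)$) into the recursion yields $i(-1,m)\le p^2+p(2-p)\pi$, and demanding that this be at most $\pi$ is equivalent to $\pi\ge\rho^2$, which fails for, e.g., $p=0.4$, $\pi=0.1$. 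The self-consistency condition for the $Y$-type to flip at $d=-1$ only supplies the lower bound $\pi\ge 1-i(0,m-1)-i(-2,m-1)$, which does not dominate the expression for $i(-1,m)$ in all regimes, so the two ingredients do not dovetail as hoped. Moreover the invariant itself is asserted rather than proved: $i(d,k)$ is not the hitting probability of a free biased walk, since the interior can also choose $Y$ when the exterior sequence terminates at $d=0$ with a $Y$-type interior, and the thresholds $Y^*(k)$ move with $k$; so ``suppressing upward steps'' does not immediately give $i(-j-1,m)\le\rho\,i(-j,m)$.

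The missing idea is the one the paper uses: bound $i(-1,\ell-1)$ by tying it to the \emph{defining inequality of the $Y$-threshold} rather than to a recursion in $d$. Carrying the auxiliary claim that $Y^*(k)$ is monotone decreasing in $k$ makes the $N$-payoff at $d=Y^*(\ell)$ exactly $1$ (choosing $N$ there lands in an $N$-cascade), so the fact that a $Y$-type at $d=Y^*(\ell)$ prefers $Y$ yields $\pi+i(Y^*(\ell)+1,\ell-1)\ge 1$. A reflection coupling (Theorem~\ref{similardist}, pairing a $Y$-type step of the walk started at $Y^*(\ell)+1$ with an $N$-type step of the walk started at $-1$) gives $i(Y^*(\ell)+1,\ell-1)\le 1-i(-1,\ell-1)$, and chaining the two inequalities delivers $i(-1,\ell-1)\le\pi$ uniformly in $p$ and $\pi$, with no case analysis. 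Without this reflection step, or some substitute for it, your induction cannot be completed as written.
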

  \begin{proof}{Lemma \ref{lemma:starlittlepimain}}
      The proof of this lemma proceeds by induction on a slightly stronger statement.  By induction on $k$ we show that $N^*(k) = 0$ (which implies the lemma), and that $Y^*(k)$ is monotone decreasing in $k$.  The base case is covered by Theorem~\ref{prop:basecasestar}. 

      We now prove the inductive step.  Let $k$ denote the the number of unscheduled exterior nodes, and assume that for all $k < \ell$, $N^*(k) = 0$ and $Y^*(k)$ is monotone decreasing in $k$.  We now prove the statement for $k = \ell$.

      Note that $u(Y,N,Y^*(\ell-1),\ell) = 1 = u(Y,N,Y^*(\ell-1),\ell - 1)$.  By induction, we know that $Y^*(k)$ is monotone decreasing for $k < \ell$, so for any $d < Y^*(\ell - 1)$ we are in an $N$-cascade situation.
    In this case agent~$\ell$ receives a payoff of 1 for choosing $N$ because it will guarantee a match with the interior agent, but does not pick its own type.

      Also note that $u(Y,Y,Y^*(\ell-1),\ell) \geq  u(Y,Y,Y^*(\ell-1),\ell - 1)$.  This follows from a coupling argument between two situations which we define. In situation~1,  $d = Y^*(\ell-1) + 1$ and there are $\ell-1$ unscheduled exterior nodes.  In situation~2,  we have that $d = Y^*(\ell-1) + 1$  and there are $\ell-2$ unscheduled exterior nodes.

      Couple the randomness so that the type of the agent scheduled with $k$ unscheduled exterior agents in situation~1 is the same as the type of the agent scheduled with $k - 1$  unscheduled exterior agents in situation~2.  Note that the randomness of the final unscheduled exterior agent has not yet been fixed in schedule 1.  Then by monotonicity of $Y^*(k)$ established by induction,
      whenever $d$ decreases in situation~1, it also decreases in situation~2 (though the converse is not necessarily true).   The $Y$ surplus $d$ of situation~1 with 1 unscheduled exterior node (\( k = 1 \)), is at least the $Y$ surplus $d$ of situation~1 with 0 unscheduled exterior nodes (\( k = 0 \)).
      By Theorem~\ref{prop:basecasestar} we have that the probability that the interior agent chooses $Y$ in situation~1 is at least the probability that the interior agent chooses $Y$ in situation~$2$.

      Finally, note that  $u(Y,Y,Y^*(\ell-1),\ell-1) \geq u(Y,N,Y^*(\ell-1),\ell-1)$ because, by the definition of $Y^*(\ell - 1)$  in this situation, $Y$-type agents choose $Y$.

      Putting this together we get that $u(Y,Y,Y^*(\ell-1),\ell) \geq u(Y,Y,Y^*(\ell-1),\ell-1) \geq u(Y,N,Y^*(\ell-1),\ell-1) = 1 = u(Y,N,Y^*(\ell-1),\ell - 1)$.  This shows that at $Y^*(\ell-1)$ with $k$ nodes left unscheduled, a $Y$-type node chooses $Y$\@.
    Thus we have that $Y^*(k)$ is monotonically decreasing.

      Lastly, we show that $N^*(\ell) = 0$.  This is equivalent to proving that a rational $N$-type agent would choose $N$ over $Y$, or that $u(N, N, 0, \ell) \geq u(N, Y, 0, \ell)$, which can also be written $(1 - i(-1, \ell - 1)) + \pi \geq 1$.
    Note that by the definition of $Y^*(\ell)$ we have that
      $i(Y^*(\ell)+1, \ell - 1) + \pi \geq 1$.  But by the Lemma~\ref{similardist} we see that $i(Y^*(\ell)+1, \ell - 1) \leq (1 - i(-1, \ell - 1))$, and so $(1 - i(-1, \ell - 1)) + \pi \geq 1$, as desired. \qed
  \end{proof}

  This lemma relies on the following theorem, which allows us to compare the probabilities of two random walks trying to reach opposite endpoints, or thresholds, by traveling similar distances. Thus, whatever the thresholds $Y^*(k)$ and $N^*(k)$ end up being, the $N$-type node at $N^*(k)$ has a higher chance of ending up in an $N$-cascade than the $Y$-type node at $Y^*(k)$. By applying the following theorem, Lemma~\ref{lemma:starlittlepimain} shows that $N^*(k) = 0$ is always a valid threshold.

          \begin{theorem}
          \label{similardist}
          $i(Y^*(\ell) + 1, \ell -1) \leq 1 - i(-1, \ell - 1).$\end{theorem}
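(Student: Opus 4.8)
The plan is to read both sides as absorption probabilities of a single biased random walk, and then extract the inequality from the downward bias $p<1/2$ via a reflection together with a monotone coupling. Under $S_{opt}$, while exterior nodes are being scheduled the difference $d$ performs a walk that moves up by $1$ with probability $p$ (a $Y$-type node choosing $Y$) and down by $1$ with probability $1-p$ (an $N$-type node choosing $N$), so long as $d$ lies in the ``choose your type'' band $Y^*(k)\le d\le 0$. The walk is absorbed at the top the instant $d$ reaches $+1$, since the interior node is then scheduled and chooses $Y$; and it is absorbed at the bottom the instant $d$ drops below the current threshold $Y^*(k)$, since from there every remaining exterior node chooses $N$ and forces the interior node to choose $N$. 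Thus $i(Y^*(\ell)+1,\ell-1)$ is the probability that this walk, started at $d=Y^*(\ell)+1$ with $\ell-1$ steps available, is absorbed at the top, and $1-i(-1,\ell-1)$ is the probability that the same walk, started at $d=-1$, is absorbed at the bottom. Here I use $N^*(k)=0$ and the monotonicity of $Y^*(k)$ for all $k\le\ell-1$, which are exactly the inductive hypotheses available when this theorem is invoked inside the proof of Lemma~\ref{lemma:starlittlepimain}; in particular $Y^*(\ell)\le Y^*(\ell-1)\le N^*(\ell-1)=0$, so $Y^*(\ell)\le 0$.

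Write $q=1-p$, and let $\mathcal{W}$ denote the true walk (up-probability $p$) and $\mathcal{W}^{R}$ the bias-flipped walk (up-probability $q$). I would apply the reflection $d\mapsto Y^*(\ell)-d$, which sends the start $Y^*(\ell)+1$ to $-1$, sends the fixed top barrier $d=+1$ to the fixed level $Y^*(\ell)-1$, and sends the moving bottom barrier $Y^*(k)$ to the moving level $Y^*(\ell)-Y^*(k)$. Because reflection swaps up-steps and down-steps, and $\mathcal{W}^{R}$ swaps the two step probabilities, this reflection is a \emph{probability-preserving} bijection from trajectories of $\mathcal{W}$ that start at $Y^*(\ell)+1$ and are absorbed at the top onto trajectories of $\mathcal{W}^{R}$ that start at $-1$ and are absorbed at the bottom $Y^*(\ell)-1$. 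Hence $i(Y^*(\ell)+1,\ell-1)=P_{\mathcal{W}^{R}}(\text{bottom}\mid\text{start }-1)$ exactly, with no leftover factor.

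It then remains to compare $\mathcal{W}^{R}$ with the true walk $\mathcal{W}$, both started at $-1$, for the event of bottom-absorption. I would couple them with a common sequence of uniforms $u_t$, letting each walk step up when $u_t$ is below its up-probability; since $p<q$, every up-step of $\mathcal{W}$ is matched by an up-step of $\mathcal{W}^{R}$, so the positions stay ordered, $\mathcal{W}_t\le\mathcal{W}^{R}_t$. The barriers of $\mathcal{W}$ are placed so that bottom-absorption is only easier: its bottom barrier satisfies $Y^*(k)\ge Y^*(\ell)>Y^*(\ell)-1$, and its top barrier $+1\ge 0\ge Y^*(\ell)-Y^*(k)$, both by monotonicity of $Y^*$. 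Consequently, whenever $\mathcal{W}^{R}$ is absorbed at its bottom we have $\mathcal{W}_t\le\mathcal{W}^{R}_t<Y^*(\ell)-1\le Y^*(k)$, so $\mathcal{W}$ has been absorbed at its own bottom no later; and whenever $\mathcal{W}$ escapes through its top the dominating $\mathcal{W}^{R}$ escaped through its top earlier. This yields $P_{\mathcal{W}^{R}}(\text{bottom})\le P_{\mathcal{W}}(\text{bottom})=1-i(-1,\ell-1)$, and combining with the previous paragraph gives the desired $i(Y^*(\ell)+1,\ell-1)\le 1-i(-1,\ell-1)$.

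The main obstacle is the moving lower boundary $Y^*(k)$ together with the finite horizon $\ell-1$: because the threshold shifts as nodes are scheduled, a bare reflection does not carry absorbed trajectories of one \emph{fixed}-boundary walk onto those of another, and a naive path bijection between $\mathcal{W}$ started at $Y^*(\ell)+1$ and $\mathcal{W}$ started at $-1$ fails. The device that rescues the argument is to reflect into the bias-flipped walk $\mathcal{W}^{R}$ (making the reflection measure-preserving and exact) and to relegate all of the inequality to the coupling step, where the three monotone effects---the bias, the bottom-barrier ordering, and the top-barrier ordering---line up in the same direction. Each of these orderings is a direct consequence of $Y^*(k)$ being monotone decreasing in $k$, which is precisely the inductive fact supplied by Lemma~\ref{lemma:starlittlepimain}; verifying that the coupling respects absorption on both barriers simultaneously is the one place demanding care.
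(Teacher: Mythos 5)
Your reflection-plus-bias-flip-plus-monotone-coupling is, in substance, a repackaging of the paper's own argument (the paper directly anti-couples the exterior types of the two situations, which is exactly your reflection composed with your uniform coupling), and the barrier-crossing comparison you carry out is sound. The genuine gap is in the very first step, where you identify $i(Y^*(\ell)+1,\ell-1)$ with the probability that the exterior walk is absorbed at the top barrier $d=+1$. The quantity $i(d,k)$ is the probability that the \emph{interior} node chooses $Y$, and by Theorem~\ref{prop:basecasestar} the interior node also chooses $Y$ along a second route: the walk is never absorbed at either barrier, all $\ell-1$ exterior nodes decide, the final difference is $d=0$, and the interior node's own type is $Y$ (probability $p$, since $i(0,0)=p$). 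This event has positive probability for appropriate parameters (e.g.\ $Y^*(\ell)=-1$ and a path that dips to $-1$ and returns to $0$), so your claimed identity $i(Y^*(\ell)+1,\ell-1)=P_{\mathcal{W}^{R}}(\mathrm{bottom}\mid\mathrm{start}\ {-1})$ ``with no leftover factor'' is false in general: the left-hand side of the theorem can strictly exceed the top-absorption probability, and bounding only the latter does not bound $i(Y^*(\ell)+1,\ell-1)$. (The analogous mis-identification on the other side, writing $P_{\mathcal{W}}(\mathrm{bottom})=1-i(-1,\ell-1)$ rather than $\le$, is harmless because it only understates the right-hand side.)

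This terminal case is precisely where the paper's proof does extra work that your proposal omits: it couples the interior node's type across the two situations as well, so that $t_i=Y$ in situation~1 forces $t_i=N$ in situation~2, and checks that when situation~1 ends unabsorbed at $d_1=0$, situation~2 has $d_2\le 0$ and an $N$-type interior node, which therefore chooses $N$. Your argument is repairable along the same lines: on the event that $\mathcal{W}^{R}$ ends unabsorbed at $Y^*(\ell)$, your coupling already shows $\mathcal{W}$ (from $-1$) never reaches $+1$ and ends at or below $0$, so the interior node in the second situation chooses $N$ except possibly when that walk ends at exactly $0$ with a $Y$-type interior; comparing the interior-type weights $p\le 1-p$ (and noting this event is disjoint from the bottom-absorption event already used) closes the gap. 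As written, however, the proof is incomplete.
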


\begin{proof}
    The proof follows from a coupling argument and some case analysis.  Consider two situations on the star with $\ell - 1$ undecided nodes: situation~1 with $d_1 = Y^*(\ell) + 1$ and, situation~2 with $d_2 = -1$.  Couple the randomness so that whenever a node scheduled in situation~1 is $Y$-type the corresponding node in situation~2 is $N$-type. 
    It follows that at any later step either a) situation~1 has reached a $Y$-cascade, or b)  $d_2 \leq Y^*(\ell) - d_1$.
    We show if a) is not satisfied then b) is. When a) is not satisfied, situation~1 is not in a $Y$-cascade and agents choose their type.  In this case, $d_1$ can increase only if the currently choosing agent in situation~1 is $Y$-type.  However, whenever this happens the agent in situation~2 is $N$-type and so $d_2$ decreases. This preserves the truth of b).

    The interior node in situation~1 chooses $Y$ only if 1) a $Y$-cascade is reached, or $d_1 = 0$ when there are no remaining exterior nodes ($k = 0$) and $t_i = Y$.
    If situation~1 enters a $Y$-cascade, then situation~2 enters an $N$-cascade because the former happens only if $d_1$ ever reaches 1, but then $d_2 \leq Y^*(\ell) + 1$ so that situation~2 is an $N$-cascade.

    If $d_1 = 0$ when there are no unscheduled exterior nodes remaining and $t_i = Y$ in situation~1, then in situation~2, we have that $d_2 \leq 0$ and, by coupling, that $t_i = N$, so that the interior node  always chooses $N$.

    We have shown than whenever the interior node chooses $Y$ in situation~1, the interior node chooses $N$ in situation~2.
\end{proof}

\paragraph*{\textbf{Computational star performance}}
              \begin{figure}
              \centering
              \subfigure[Varying \(p\), $\pi = 0.9$]{\includegraphics[width=0.493\columnwidth]{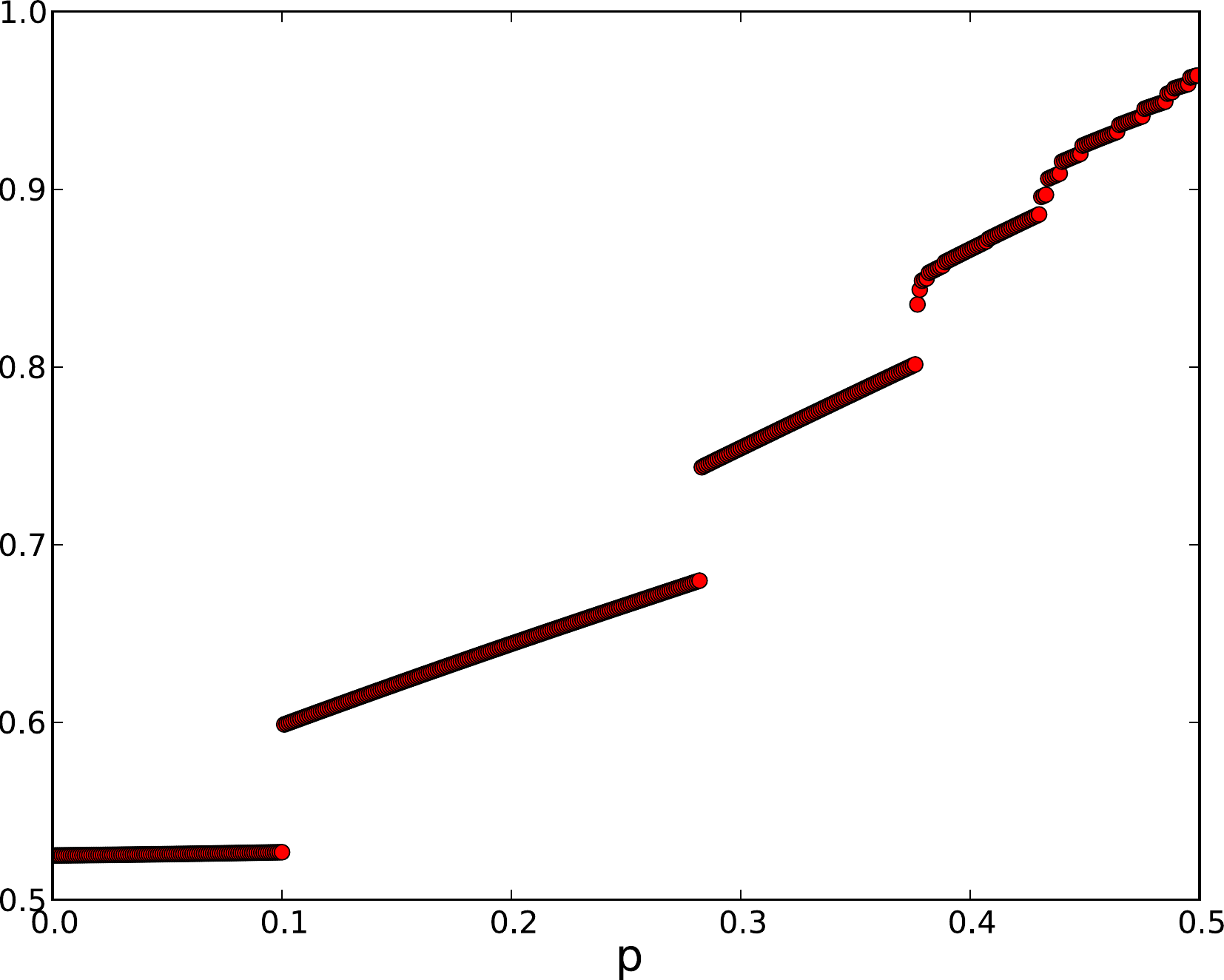}}
              \subfigure[Varying \(\pi\), $p = 0.45$]{\includegraphics[width=0.493\columnwidth]{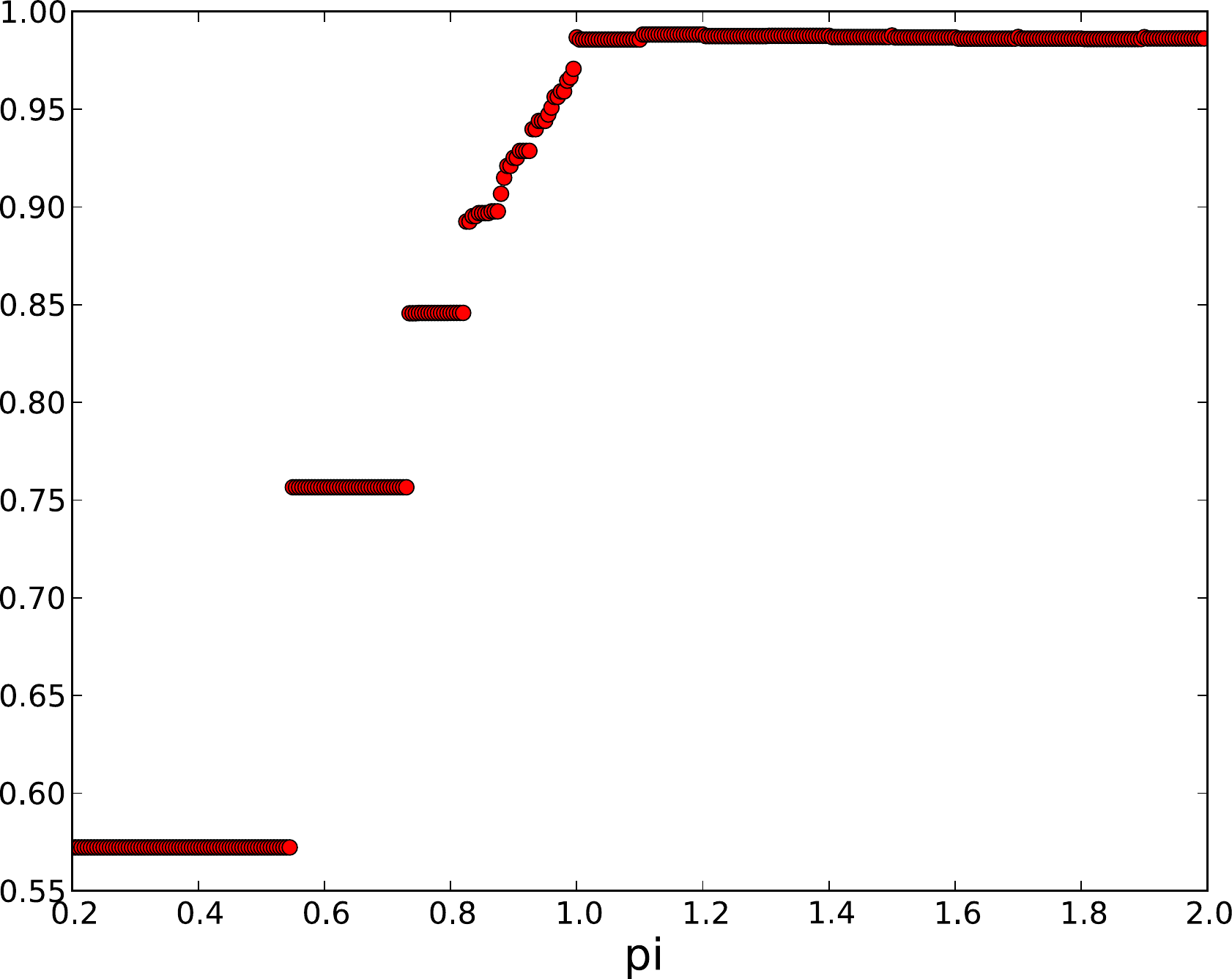}}
              \caption{Strategic-to-myopic performance ratio on star, \( n=21 \).}
              \label{fig:ratio2}
              \end{figure}
               Figure~\ref{fig:ratio2} graphs values of the strategic-to-myopic performance ratio for a star of 21 agents to show asymptotic behavior, and shows that the ratio approaches 1 but never exceeds it, as per Theorem~\ref{thm:star-spe}.

\end{document}